\newcommand{\eval}[2][\right]{\relax
  \ifx#1\right\relax \left.\fi#2#1\rvert}
\newcommand{\pd}{{\partial}}
\newcommand{\al}{{\alpha}}
\newcommand{\la}{{\lambda}}
\newcommand{\ost}{\mathbb{U}}
\newcommand{\er}{\eqref}
\newcommand{\cl}{\colon}
\newcommand{\beq}{\begin{equation}}
\newcommand{\ee}{\end{equation}}
\newcommand{\bmu}{\begin{multline}}
\newcommand{\emul}{\end{multline}}
\newcommand{\be}{{\beta}}
\newcommand{\anA}{\mathsf{A}}
\newcommand{\anB}{\mathsf{B}}
\newcommand{\anG}{\mathsf{G}}
\newcommand{\CE}{\mathcal{E}}
\newcommand{\ce}{\mathcal{E}}
\newcommand{\zp}{\mathbb{Z}_{\ge 0}}
\newcommand{\zsp}{\mathbb{Z}_{>0}}
\newcommand{\msl}{\mathfrak{sl}}
\newcommand{\gl}{\mathfrak{gl}}
\newcommand{\mg}{\mathfrak{g}}
\newcommand{\bl}{\mathfrak{L}}
\newcommand{\ga}{\mathbb{A}}
\newcommand{\gb}{\mathbb{B}}
\newcommand{\mR}{\mathfrak{R}}
\newcommand{\lb}{\label}
\newcommand{\vf}{\varphi}
\newcommand{\Com}{\mathbb{C}}
\newcommand{\fik}{\mathbb{K}}
\newcommand{\id}{\mathrm{Id}}
\newcommand{\un}{\mathrm{U}}
\DeclareMathOperator{\fd}{\mathbb{F}}
\DeclareMathOperator{\fds}{\mathbb{F}}
\newcommand{\zcs}{\mathcal{V}}
\newcommand{\kne}{\mathrm{KN}}
\newcommand{\cur}{\mathrm{C}}
\newcommand{\oc}{p}
\newcommand{\ocs}{p}
\newcommand{\sm}{N}
\newcommand{\hrf}{\mu}
\newcommand{\nv}{m}
\newcommand{\eo}{d}
\newcommand{\kd}{q}
\newtheorem{theorem}{Theorem}
\newtheorem{proposition}{Proposition}
\newtheorem{lemma}{Lemma}
\newtheorem{conjecture}{Conjecture}
\theoremstyle{definition}
\newtheorem{definition}{Definition}
\newtheorem{example}{Example}
\newtheorem{remark}{Remark}
\begin{document}

\title[On Lie algebras responsible for integrability]
{On Lie algebras responsible for integrability\\ 
of (1+1)-dimensional scalar evolution PDEs}
\date{}

\author{Sergei Igonin}
\address{Center of Integrable Systems, 
P.G. Demidov Yaroslavl State University, Yaroslavl, Russia, \\
\textup{E-mail address: s-igonin@yandex.ru}}

\author{Gianni Manno}
\address{Dipartimento di Scienze Matematiche, Politecnico di Torino, \\
Corso Duca degli Abruzzi 24, 10129 Torino, Italy\\
\textup{E-mail address: giovanni.manno@polito.it}}

\begin{abstract}
Zero-curvature representations (ZCRs) are one of the main tools in the theory of integrable PDEs. In particular, Lax pairs for (1+1)-dimensional PDEs can be interpreted as ZCRs.

In~\cite{scal13}, for any (1+1)-dimensional scalar evolution equation~$\CE$, we defined a family of Lie algebras~$\fds(\CE)$ which are responsible for all ZCRs of~$\CE$ in the following sense. Representations of the algebras~$\fds(\CE)$ classify all ZCRs of the equation~$\CE$ up to local gauge transformations. In~\cite{sbt18} we showed that, using these algebras, one obtains necessary conditions for existence of a B\"acklund transformation between two given equations. The algebras~$\fds(\CE)$ are defined in terms of generators and relations.

In this paper we show that, using the algebras~$\fds(\CE)$, one obtains some necessary conditions for integrability of (1+1)-dimensional scalar evolution PDEs, where integrability is understood in the sense of soliton theory. Using these conditions, we prove non-integrability for some scalar evolution PDEs of order~$5$. Also, we prove a result announced in~\cite{scal13} on the structure of the algebras~$\fds(\CE)$ for certain classes of equations of orders~$3$,~$5$,~$7$, which include KdV, mKdV, Kaup-Kupershmidt, Sawada-Kotera type equations. Among the obtained algebras for equations considered in this paper and in~\cite{sbt18}, one finds infinite-dimensional Lie algebras of certain polynomial matrix-valued functions 
on affine algebraic curves of genus~$1$ and~$0$.

In this approach, ZCRs may depend on partial derivatives of arbitrary order, which may be higher than the order of the equation~$\CE$. The algebras~$\fds(\CE)$ generalize Wahlquist-Estabrook prolongation algebras, which are responsible for a much smaller class of ZCRs.
\end{abstract}

\subjclass[2010]{37K30, 37K10, 37K35}

\keywords{Scalar evolution equations; zero-curvature representations; gauge transformations; normal forms for zero-curvature representations; infinite-dimensional Lie algebras; integrability conditions}

\maketitle

\section{Introduction}
\lb{subsint1}

Zero-curvature representations (ZCRs)
belong to the main tools in the theory of integrable nonlinear partial differential equations
(see, e.g.,~\cite{zakh-shab,ft}).
In particular, Lax pairs for (1+1)-dimensional partial differential equations (PDEs) can be interpreted as ZCRs.
This paper is a sequel of~\cite{scal13} and is part of a research program
on investigating the structure of ZCRs for PDEs of various types. 
(However, the present paper can be studied independently of~\cite{scal13}.)

Here we consider (1+1)-dimensional scalar evolution equations
\beq
\label{eveq_intr}
u_t=F(x,t,u_0,u_1,\dots,u_{\eo}),\qquad\quad  
u=u(x,t),
\ee
where one uses the notation
\beq
\lb{dnot}
u_t=\frac{\pd u}{\pd t},\qquad\quad u_0=u,\qquad\quad
u_k=\frac{\pd^k u}{\pd x^k},\qquad\quad
k\in\zp.
\ee
The number $\eo\in\zsp$ in~\er{eveq_intr} is such that 
the function $F$ may depend only on $x$, $t$, $u_k$ for $k\le\eo$. 
The symbols $\zsp$ and $\zp$ denote the sets of positive and 
nonnegative integers respectively.

The methods of this paper can be applied 
also to (1+1)-dimensional multicomponent evolution PDEs, 
which are discussed in Remark~\ref{multevol}. 

\begin{remark}
\lb{rfxtu}
When we consider a function $Q=Q(x,t,u_0,u_1,\dots,u_l)$ 
for some $l\in\zp$, we always assume that this function is analytic 
on an open subset of the space $V$ with the coordinates 
$x,t,u_0,u_1,\dots,u_l$.
For example, $Q$ may be a meromorphic function,  
because a meromorphic function is analytic on some open subset.
If we say that $Q$ is defined on a neighborhood of a point $a\in V$,
we assume that the function $Q$ is analytic on this neighborhood.
\end{remark}

PDEs of the form~\er{eveq_intr} have attracted a lot of attention in the last $50$ years and 
have been a source of many remarkable results on integrability. 
In particular, some types of equations~\er{eveq_intr} possessing 
higher-order symmetries and conservation laws have been 
classified (see, e.g.,~\cite{mesh-sok13,mikh91,sand-wang2009} and references therein).
However, the problem of complete understanding of all integrability properties 
for equations~\er{eveq_intr} is still far from being solved.  

Examples of integrable PDEs of the form~\er{eveq_intr}  
include the Korteweg-de Vries (KdV), Krichever-Novikov~\cite{krich80,svin-sok83}, 
Kaup-Kupershmidt~\cite{kaup80}, Sawada-Kotera~\cite{sk74} (Caudrey-Dodd-Gibbon~\cite{cdg76}) 
equations (these equations are discussed below). 
Many more examples can be found in~\cite{mesh-sok13,mikh91,sand-wang2009} 
and references therein. 

In the present paper, integrability is understood 
in the sense of soliton theory and the inverse scattering method,
relying on the use of ZCRs.  
(This is sometimes called S-integrability.) 
As discussed in Remark~\ref{rsymcl}, 
this approach to integrability is not equivalent to the approach 
of symmetries and conservation laws.

\begin{definition}
\lb{dzcr}
Let $\mg$ be a finite-dimensional Lie algebra.  
For an equation of the form~\er{eveq_intr}, 
a \emph{zero-curvature representation \textup{(}ZCR\textup{)} 
with values in~$\mg$} is given by $\mg$-valued functions
\beq
\lb{mnoc}
A=A(x,t,u_0,u_1,\dots,u_\ocs),\qquad\quad B=B(x,t,u_0,u_1,\dots,u_{\ocs+\eo-1})
\ee
satisfying
\beq
\lb{mnzcr}
D_x(B)-D_t(A)+[A,B]=0.
\ee
The \emph{total derivative operators} $D_x$, $D_t$ in~\er{mnzcr} are 
\beq
\lb{evdxdt}
D_x=\frac{\pd}{\pd x}+\sum_{k\ge 0} u_{k+1}\frac{\pd}{\pd u_k},\qquad\qquad
D_t=\frac{\pd}{\pd t}+\sum_{k\ge 0} D_x^k\big(F(x,t,u_0,u_1,\dots,u_{\eo})\big)\frac{\pd}{\pd u_k}.
\ee
The number $\ocs$ in~\er{mnoc} is such that  
the function $A$ may depend only on the variables $x$, $t$, $u_{k}$ for $k\le\ocs$.
Then equation~\er{mnzcr} implies that 
the function $B$ may depend only on $x$, $t$, $u_{k'}$ for $k'\le\ocs+\eo-1$.

Such ZCRs are said to be \emph{of order~$\le\ocs$}. 
In other words, a ZCR given by $A$, $B$ is of order~$\le\ocs$ iff 
$\dfrac{\pd}{\pd u_l}(A)=0$ for all $l>\ocs$. 

The right-hand side $F=F(x,t,u_0,\dots,u_{\eo})$ 
of~\er{eveq_intr} appears in condition~\er{mnzcr}, 
because $F$ appears in the formula for the operator $D_t$ in~\er{evdxdt}.
Note that~\er{mnzcr} can be written as $[D_x+A,\,D_t+B]=0$, because $[D_x,D_t]=0$.
\end{definition}

\begin{remark}
\lb{rinfzcr}
The methods of this paper are applicable also to 
ZCRs with values in infinite-dimensional Lie algebras.
Such ZCRs are discussed in Section~\ref{szcrinf} and in Section~\ref{szcrsak}.
\end{remark}

In~\cite{scal13} and in this paper we study the following problem. 
How to describe all ZCRs~\er{mnoc},~\er{mnzcr} for a given equation~\er{eveq_intr}? 

In the case when $\ocs=0$ and the functions $F$, $A$, $B$ do not depend on $x$, $t$, 
a partial answer to this question is provided by the Wahlquist-Estabrook prolongation
method (WE method for short). 
Namely, for a given equation of the form $u_t=F(u_0,u_1,\dots,u_{\eo})$, 
the WE method constructs a Lie algebra so that ZCRs of the form
\beq
\lb{wecov}
A=A(u_0),\qquad B=B(u_0,u_1,\dots,u_{\eo-1}),\qquad
D_x(B)-D_t(A)+[A,B]=0
\ee
correspond to representations of this algebra (see, e.g.,~\cite{Prol,kaup_prol80,dodd,mll-2012}). 
It is called the \emph{Wahlquist-Estabrook prolongation algebra}.
Note that in~\er{wecov} the function $A=A(u_0)$ depends only on~$u_0$.

To study the general case of ZCRs~\er{mnoc},~\er{mnzcr} with arbitrary $\ocs$ 
for any equation~\er{eveq_intr}, 
we need to consider gauge transformations, which are defined below. 

Without loss of generality, one can assume that $\mg$ is a Lie subalgebra 
of $\gl_\sm$ for some $\sm\in\zsp$, 
where $\gl_\sm$ is the algebra of 
$\sm\times\sm$ matrices with entries from $\mathbb{R}$ or $\mathbb{C}$. 
So our considerations are applicable to both cases $\gl_\sm=\gl_\sm(\mathbb{R})$ 
and $\gl_\sm=\gl_\sm(\mathbb{C})$.
And we denote by $\mathrm{GL}_\sm$ the group of invertible $\sm\times\sm$ matrices.

Let $\fik$ be either $\Com$ or $\mathbb{R}$.
Then $\gl_\sm=\gl_\sm(\fik)$ and $\mathrm{GL}_\sm=\mathrm{GL}_\sm(\fik)$.
In this paper, all algebras are supposed to be over the field~$\fik$.

\begin{definition}
\lb{dlggt}
Let $\mathcal{G}\subset\mathrm{GL}_\sm$ be the connected matrix Lie group 
corresponding to the Lie algebra $\mg\subset\gl_\sm$.
(That is, $\mathcal{G}$ is the connected 
immersed Lie subgroup of $\mathrm{GL}_\sm$ 
corresponding to the Lie subalgebra $\mg\subset\gl_\sm$.)
A \emph{gauge transformation} is given by a matrix-function 
$G=G(x,t,u_0,u_1,\dots,u_l)$ with values in~$\mathcal{G}$.
Here $l$ can be any nonnegative integer.

For any ZCR~\er{mnoc},~\er{mnzcr} and 
any gauge transformation $G=G(x,t,u_0,\dots,u_l)$, the functions 
\beq
\lb{mnprint}
\tilde{A}=GAG^{-1}-D_x(G)\cdot G^{-1},\qquad\qquad
\tilde{B}=GBG^{-1}-D_t(G)\cdot G^{-1}
\ee
satisfy $D_x(\tilde{B})-D_t(\tilde{A})+[\tilde{A},\tilde{B}]=0$ and, therefore, form a ZCR.
(This is explained in Remark~\ref{remzcrgt}.)

Since $A$, $B$ take values in~$\mg$ and $G$ 
takes values in~$\mathcal{G}$, 
the functions $\tilde{A}$, $\tilde{B}$ take values in~$\mg$.

The ZCR~\er{mnprint} is said to be \emph{gauge equivalent} to the ZCR~\er{mnoc},~\er{mnzcr}. 
For a given equation~\er{eveq_intr}, formulas~\er{mnprint} determine an action of the group 
of gauge transformations on the set of ZCRs of this equation.
\end{definition}

\begin{remark}
\lb{rogt}
According to Definition~\ref{dlggt},
we study gauge transformations with values in~$\mathcal{G}$.
Alternatively, one can take some other Lie group 
$\tilde{\mathcal{G}}\subset\mathrm{GL}_\sm$ whose Lie algebra is $\mg$
and consider gauge transformations with values in~$\tilde{\mathcal{G}}$.
The results of this paper will remain valid, if one replaces 
$\mathcal{G}$ by $\tilde{\mathcal{G}}$ everywhere.

We would like to emphasize that equation~\er{eveq_intr} remains fixed 
and does not change under the action of gauge transformations.
Also, we do not have any action on solutions $u(x,t)$ of equation~\er{eveq_intr}.
In the literature on integrable PDEs, 
other authors sometimes consider transformations of different nature 
with different properties and call them gauge transformations.
So, when one speaks about gauge transformations, one should carefully 
define what they are.
\end{remark}

The WE method does not use gauge transformations in a systematic way. 
In the classification of ZCRs~\er{wecov} this is acceptable, 
because the class of ZCRs~\er{wecov} is relatively small.  

The class of ZCRs~\er{mnoc},~\er{mnzcr} is much larger than that of~\er{wecov}.
Gauge transformations play a very important role in the classification 
of ZCRs~\er{mnoc},~\er{mnzcr}. 
Because of this, the classical WE method does not produce satisfactory results 
for~\er{mnoc},~\er{mnzcr}, especially in the case~$\ocs>0$.  

To overcome this problem, 
in~\cite{scal13} we found a normal form for ZCRs~\er{mnoc},~\er{mnzcr}
with respect to the action of the group of gauge transformations.
Using the normal form of ZCRs, for any given equation~\er{eveq_intr},
in~\cite{scal13} we defined a Lie algebra $\fds^{\ocs}$ for each $\ocs\in\zp$ so  
that the following property holds. 

For every finite-dimensional Lie algebra $\mg$, 
any $\mg$-valued ZCR~\er{mnoc},~\er{mnzcr} of order~$\le\ocs$  
is locally gauge equivalent to the ZCR arising from a homomorphism 
$\fds^{\ocs}\to\mg$. 

More precisely, as discussed below, 
in~\cite{scal13} we defined a Lie algebra $\fds^{\ocs}$ for each $\ocs\in\zp$ 
and each point~$a$ of the infinite prolongation~$\CE$ of equation~\er{eveq_intr}. 
So the full notation for the algebra is $\fds^{\ocs}(\CE,a)$.  
The definition of $\fd^{\oc}(\CE,a)$ from~\cite{scal13}
is recalled in Section~\ref{btcsev} of the present paper.

The family of Lie algebras $\fds(\CE)$ mentioned in the abstract 
of this paper consists of the algebras $\fds^{\ocs}(\CE,a)$ 
for all $\ocs\in\zp$, $a\in\CE$.

Recall that the \emph{infinite prolongation} $\CE$ of equation~\er{eveq_intr} 
is an infinite-dimensional manifold with the coordinates 
$x$, $t$, $u_k$ for $k\in\zp$.
The precise definitions of the manifold $\CE$ and the algebras $\fds^{\ocs}(\CE,a)$ 
for any equation~\er{eveq_intr} are presented in Section~\ref{btcsev}. 
For every $\ocs\in\zp$ and $a\in\CE$, 
the algebra $\fds^{\ocs}(\CE,a)$ is defined in terms of generators and relations.
(To clarify the main idea, in Example~\ref{edfoc1} we consider the case $\ocs=1$.)

For every finite-dimensional Lie algebra $\mg$, 
homomorphisms $\fds^{\ocs}(\CE,a)\to\mg$ classify (up to gauge equivalence) 
all $\mg$-valued ZCRs~\er{mnoc},~\er{mnzcr} of order~$\le\ocs$, 
where functions $A$, $B$ are defined on a neighborhood of 
the point $a\in\CE$. See Section~\ref{btcsev} for more details. 

According to Section~\ref{btcsev}, 
the algebras $\fds^{\ocs}(\CE,a)$ for $\ocs\in\zp$ 
are arranged in a sequence of surjective homomorphisms 
\beq
\lb{intfdoc1}
\dots\to\fds^{\ocs}(\CE,a)\to\fds^{\ocs-1}(\CE,a)\to\dots\to\fds^1(\CE,a)\to\fds^0(\CE,a).
\ee
According to Remark~\ref{fdrzcr}, for each $\oc\in\zsp$, 
the algebra $\fds^{\ocs}(\CE,a)$ is responsible for ZCRs of order~$\le\ocs$, 
and the algebra $\fds^{\ocs-1}(\CE,a)$ is responsible for ZCRs of order~$\le\ocs-1$.    
The surjective homomorphism $\fds^{\ocs}(\CE,a)\to\fds^{\ocs-1}(\CE,a)$ in~\er{intfdoc1}
reflects the fact that any ZCR of order~$\le\ocs-1$ is at the same time of order~$\le\ocs$.
The homomorphism $\fds^{\ocs}(\CE,a)\to\fds^{\ocs-1}(\CE,a)$ is 
defined by formulas~\er{fdhff}, using generators of the algebras 
$\fds^{\ocs}(\CE,a)$, $\fds^{\ocs-1}(\CE,a)$.

Using $\fds^{\ocs}(\CE,a)$, 
we obtain some necessary conditions for integrability of equations~\er{eveq_intr}  
and necessary conditions for existence of a B\"acklund transformation 
between two given equations. 
To get such results, one needs to study certain properties 
of ZCRs~\er{mnoc},~\er{mnzcr} with arbitrary~$\ocs$, 
and we do this by means of the algebras $\fds^{\ocs}(\CE,a)$.
As explained above, the classical WE method 
(which studies ZCRs of the form~\er{wecov}) is not sufficient for this.

Applications of $\fds^{\ocs}(\CE,a)$ to obtaining 
necessary conditions for integrability of equations~\er{eveq_intr} 
are presented in Section~\ref{secicn}. 
Examples of the use of these conditions in 
proving non-integrability for some equations of order~$5$
are given in Section~\ref{secicn} as well.
Applications of $\fds^{\ocs}(\CE,a)$ to the theory of 
B\"acklund transformations are described in~\cite{sbt18}. 
See also Remark~\ref{rncebt} below.

Furthermore, we present a number of results on the structure 
of the algebras $\fds^{\ocs}(\CE,a)$ 
for some classes of scalar evolution PDEs of orders~$3$,~$5$,~$7$ 
and concrete examples.
The KdV equation is considered in~\cite{scal13} and in Example~\ref{parkdv} below.
The Krichever-Novikov equation is discussed in Proposition~\ref{fdockn}, 
which is proved in~\cite{sbt18}.
In Section~\ref{sbhheq} we study the algebras $\fds^{\ocs}(\CE,a)$ 
and integrability properties for a parameter-dependent $5$th-order scalar evolution equation,
which was considered by A.~P.~Fordy~\cite{fordy-hh} 
in connection with the H\'enon-Heiles system. 
The problem to study this equation was suggested to us by A.~P.~Fordy.

In the theory of integrable (1+1)-dimensional PDEs, 
one is especially interested in ZCRs depending on a parameter.
That is, one studies ZCRs of the form
\beq
\notag
A=A(\la,x,t,u_0,\dots,u_\oc),\quad 
B=B(\la,x,t,u_0,\dots,u_{\oc+\eo-1}),\quad
D_x(B)-D_t(A)+[A,B]=0,
\ee
where $\mg$-valued functions $A$, $B$ depend on $x$, $t$, $u_k$ 
and a parameter $\la$.
For a given equation~\er{eveq_intr}, existence of 
a nontrivial parameter-dependent ZCR is reflected in the structure of 
the algebras $\fds^{\oc}(\CE,a)$ of equation~\er{eveq_intr}.
This is illustrated by Examples~\ref{parkdv},~\ref{parkn}.

In this paper we mostly study equations of the form 
\beq
\lb{utukd}
u_t=u_{2\kd+1}+f(x,t,u_0,u_1,\dots,u_{2\kd-1}),\qquad\qquad \kd\in\{1,2,3\},
\ee
where $f$ is an arbitrary function. 
Examples of such PDEs include 
\begin{itemize}
\item the KdV equation $u_t=u_3+u_0u_1$, 
\item the modified KdV (mKdV) equation $u_t=u_3+u_0^2u_1$,
\item the Kaup-Kupershmidt equation~\cite{kaup80} $u_t=u_5+10u_0u_3+25u_1u_2+20u_0^2u_1$, 
\item the Sawada-Kotera equation~\cite{sk74} $u_t=u_5+5u_0u_3+5u_1u_2+5u_0^2u_1$, 
which is sometimes called the Caudrey-Dodd-Gibbon equation~\cite{cdg76}.
\end{itemize}
Many more examples of integrable PDEs of this type 
can be found in~\cite{mesh-sok13,mikh91} and references therein.

\begin{remark}
A classification of equations of the form 
$$
u_t=u_3+g(x,u_0,u_1,u_2),\qquad\qquad
u_t=u_5+g(u_0,u_1,u_2,u_3,u_4)
$$
satisfying certain integrability conditions related 
to generalized symmetries and conservation laws is presented in~\cite{mesh-sok13}. 
We study the problem of describing 
all ZCRs~\er{mnoc},~\er{mnzcr} for a given equation~\er{eveq_intr}.
This problem is very different from the description of 
generalized symmetries and conservation laws.
\end{remark}

Let $\bl$, $\bl_1$, $\bl_2$ be Lie algebras. 
One says that \emph{$\bl_1$ is obtained from $\bl$ by central extension} 
if there is an ideal $\mathfrak{I}\subset\bl_1$ such that 
$\mathfrak{I}$ is contained in the center of $\bl_1$ and $\bl_1/\mathfrak{I}\cong\bl$. 
Note that $\mathfrak{I}$ may be of arbitrary dimension. 

We say that 
\emph{$\bl_2$ is obtained from $\bl$ 
by applying several times the operation of central extension} 
if there is a finite collection of Lie algebras $\mg_0,\mg_1,\dots,\mg_k$ such 
that $\mg_0\cong\bl$, $\mg_k\cong\bl_2$ and 
$\mg_i$ is obtained from $\mg_{i-1}$ by central extension for each  
$i=1,\dots,k$. 

Equations of the form~\er{utukd} are considered in Theorem~\ref{thcenter}.
Some consequences of Theorem~\ref{thcenter} are summarized in Remark~\ref{remfdceukd}.

\begin{remark}
\lb{remfdceukd}
Theorem~\ref{thcenter} implies that, 
for any equation of the form~\er{utukd} with $\kd\in\{1,2,3\}$, 
\begin{itemize}
\item for every $\ocs\ge\kd+\delta_{\kd,3}$ the algebra 
$\fds^\ocs(\CE,a)$ is obtained from $\fds^{\ocs-1}(\CE,a)$ 
by central extension, 
\item for every $\ocs\ge\kd+\delta_{\kd,3}$ the algebra $\fds^\ocs(\CE,a)$ 
is obtained from $\fds^{\kd-1+\delta_{\kd,3}}(\CE,a)$ 
by applying several times the operation of central extension. 
\end{itemize}
Here $\delta_{\kd,3}$ is the Kronecker delta. 
So $\delta_{3,3}=1$, and $\delta_{\kd,3}=0$ if $\kd\neq 3$.
\end{remark}

Theorem~\ref{thcenter} was announced without proof in~\cite{scal13}.
In Section~\ref{seckdvtype} we give a detailed proof for it.

Applications of Theorem~\ref{thcenter} to obtaining necessary conditions 
for integrability of equations~\er{utukd} are presented in Section~\ref{secicn}.
Results similar to Theorem~\ref{thcenter} can be proved for many other evolution equations 
as well. See, e.g., Proposition~\ref{fdockn} about the Krichever-Novikov equation.

Other approaches to the study of 
the action of gauge transformations on ZCRs can be found 
in~\cite{marvan93,marvan97,marvan2010,sakov95,sakov2004,sebest2008} 
and references therein.
For a given ZCR with values in a matrix Lie algebra $\mg$, 
the papers~\cite{marvan93,marvan97,sakov95} define 
certain $\mg$-valued functions, which transform by conjugation 
when the ZCR transforms by gauge. 
Applications of these functions to construction and classification of 
some types of ZCRs are described  
in~\cite{marvan93,marvan97,marvan2010,sakov95,sakov2004,sebest2008}.

To our knowledge, 
the theory of~\cite{marvan93,marvan97,marvan2010,sakov95,sakov2004,sebest2008} 
does not produce any infinite-dimensional Lie algebras responsible for ZCRs. 
So this theory does not contain the algebras $\fds^\oc(\CE,a)$.

\section{Preliminaries}
\lb{spr}

We continue to use the notations introduced in Section~\ref{subsint1}.
In particular, $\CE$ is the infinite prolongation of equation~\er{eveq_intr}.
According to Definition~\ref{dipr} in Section~\ref{btcsev}, 
$\CE$ is an infinite-dimensional manifold with the coordinates 
$x$, $t$, $u_k$ for $k\in\zp$.

We suppose that the variables $x$, $t$, $u_k$ take values in $\fik$,
where $\fik$ is either $\Com$ or $\mathbb{R}$.
A point $a\in\CE$ is determined by the values of the coordinates 
$x$, $t$, $u_k$ at $a$. Let
\begin{equation}
\notag
a=(x=x_a,\,t=t_a,\,u_k=a_k)\,\in\,\CE,\qquad\qquad x_a,\,t_a,\,a_k\in\fik,\qquad k\in\zp,
\end{equation}
be a point of $\CE$.
In other words, the constants $x_a$, $t_a$, $a_k$ are the coordinates 
of the point $a\in\CE$ in the coordinate system $x$, $t$, $u_k$.

The general theory of the Lie algebras $\fd^{\oc}(\CE,a)$, $\oc\in\zp$,
is presented in Section~\ref{btcsev}.
Before describing the general theory, we would like to discuss some examples and applications.

\begin{example}
\lb{edfoc1}
To clarify the definition of $\fd^{\oc}(\CE,a)$ presented in Section~\ref{btcsev}, 
let us consider the case $\oc=1$. 
To this end, 
we fix an equation~\er{eveq_intr} and study ZCRs of order~$\le 1$ for this equation. 

According to Theorem~\ref{thnfzcr}, any ZCR of order~$\le 1$ 
\beq
\lb{zcru1}
A=A(x,t,u_0,u_1),\qquad B=B(x,t,u_0,u_1,\dots,u_{\eo}),\qquad
D_x(B)-D_t(A)+[A,B]=0
\ee
on a neighborhood of $a\in\CE$ is gauge equivalent to a ZCR of the form 
\begin{gather}
\lb{nfzcr}
\tilde{A}=\tilde{A}(x,t,u_0,u_1),\qquad \tilde{B}=\tilde{B}(x,t,u_0,u_1,\dots,u_{\eo}),\\
\lb{nfzcreq}
D_x(\tilde{B})-D_t(\tilde{A})+[\tilde{A},\tilde{B}]=0,\\
\lb{nfab}
\frac{\pd\tilde{A}}{\pd u_1}(x,t,u_0,a_1)=0,\qquad 
\tilde{A}(x,t,a_0,a_1)=0,\qquad\tilde{B}(x_a,t,a_0,a_1,\dots,a_{\eo})=0. 
\end{gather}
Moreover, according to Theorem~\ref{thnfzcr},
for any given ZCR of the form~\er{zcru1}, 
on a neighborhood of $a\in\CE$ there is a unique 
gauge transformation $G=G(x,t,u_0,\dots,u_l)$ such that
the functions $\tilde{A}=GAG^{-1}-D_x(G)\cdot G^{-1}$,
$\tilde{B}=GBG^{-1}-D_t(G)\cdot G^{-1}$ 
satisfy \er{nfzcr}, \er{nfzcreq}, \er{nfab}  
and $G(x_a,t_a,a_0,\dots,a_l)=\mathrm{Id}$, 
where $\mathrm{Id}$ is the identity matrix.

In the case of ZCRs of order~$\le 1$, 
this gauge transformation $G$ depends on $x$, $t$, $u_0$, 
so $G=G(x,t,u_0)$. 
In a similar result about ZCRs of order~$\le\oc$, 
which is described in Theorem~\ref{thnfzcr}, the corresponding 
gauge transformation depends on $x$, $t$, $u_0,\dots,u_{\oc-1}$.

Therefore, we can say that properties~\er{nfab} determine a 
normal form for ZCRs~\er{zcru1} 
with respect to the action of the group of gauge transformations 
on a neighborhood of $a\in\CE$.

A similar normal form for ZCRs~\er{mnoc},~\er{mnzcr} 
with arbitrary $\oc$ is described in Theorem~\ref{thnfzcr}  
and Remark~\ref{rnfzcr}.

Since the functions $\tilde{A}$, $\tilde{B}$ from \er{nfzcr}, \er{nfzcreq}, \er{nfab}
are analytic on a neighborhood of $a\in\CE$, these functions
are represented as absolutely convergent power series
\begin{gather}
\label{aser1}
\tilde{A}=\sum_{l_1,l_2,i_0,i_1\ge 0} 
(x-x_a)^{l_1} (t-t_a)^{l_2}(u_0-a_0)^{i_0}(u_1-a_1)^{i_1}\cdot
\tilde{A}^{l_1,l_2}_{i_0,i_1},\\
\lb{bser1}
\tilde{B}=\sum_{l_1,l_2,j_0,\dots,j_{\eo}\ge 0} 
(x-x_a)^{l_1} (t-t_a)^{l_2}(u_0-a_0)^{j_0}\dots(u_{\eo}-a_{\eo})^{j_{\eo}}\cdot
\tilde{B}^{l_1,l_2}_{j_0\dots j_{\eo}}.
\end{gather}
Here $\tilde{A}^{l_1,l_2}_{i_0,i_1}$ and $\tilde{B}^{l_1,l_2}_{j_0\dots j_{\eo}}$ 
are elements of a Lie algebra, which we do not specify yet. 

Using formulas~\er{aser1},~\er{bser1}, we see that properties~\er{nfab} are equivalent to 
\beq
\lb{ab000int}
\tilde{A}^{l_1,l_2}_{i_0,1}=
\tilde{A}^{l_1,l_2}_{0,0}=
\tilde{B}^{0,l_2}_{0\dots 0}=0
\qquad\qquad\forall\,l_1,l_2,i_0\in\zp.
\ee
To define $\fd^1(\CE,a)$ in terms of generators and relations, 
we regard $\tilde{A}^{l_1,l_2}_{i_0,i_1}$, 
$\tilde{B}^{l_1,l_2}_{j_0\dots j_{\eo}}$ from~\er{aser1},~\er{bser1} 
as abstract symbols.  
By definition, the Lie algebra $\fd^1(\CE,a)$ is generated by the symbols 
$\tilde{A}^{l_1,l_2}_{i_0,i_1}$, $\tilde{B}^{l_1,l_2}_{j_0\dots j_{\eo}}$
for $l_1,l_2,i_0,i_1,j_0,\dots,j_{\eo}\in\zp$ so that
relations for these generators are provided by equations~\er{nfzcreq},~\er{ab000int}. 

That is, in order to get relations for the generators 
$\tilde{A}^{l_1,l_2}_{i_0,i_1}$, $\tilde{B}^{l_1,l_2}_{j_0\dots j_{\eo}}$ of the algebra $\fd^1(\CE,a)$,
we substitute~\er{aser1},~\er{bser1} in~\er{nfzcreq}, taking into account~\er{ab000int}.
A more detailed description of this construction 
is given in Section~\ref{btcsev} and in~\cite{scal13}
(with a slightly different notation for the generators).
\end{example}

\begin{example}
\lb{parkdv}
It is well known that the KdV equation $u_t=u_3+u_0u_1$ possesses 
an $\msl_2(\fik)$-valued ZCR depending polynomially on a parameter $\la$.
This is reflected in the structure of the algebras $\fds^{\oc}(\CE,a)$ 
for KdV as follows.

Consider the infinite-dimensional Lie algebra 
$\msl_2(\fik[\la])\cong \msl_2(\fik)\otimes_{\fik}\fik[\lambda]$, 
where $\fik[\lambda]$ is the algebra of polynomials in $\la$.
(If we regard $\fik$ as a rational algebraic curve with coordinate~$\la$, 
the elements of~$\msl_2(\fik[\la])$ can be identified with polynomial  
$\msl_2(\fik)$-valued functions on this rational curve.)

According to~\cite{scal13}, the algebra $\msl_2(\fik[\la])$
plays the main role in the description of $\fds^\ocs(\CE,a)$ for the KdV equation. 
Namely, it is shown in~\cite{scal13} that, for KdV, the algebras 
$\fds^\ocs(\CE,a)$ are obtained from $\msl_2(\fik[\la])$
by applying several times the operation of central extension.
In particular, $\fds^0(\CE,a)$ is isomorphic to the direct sum 
of $\msl_2(\fik[\la])$ and a $3$-dimensional abelian Lie algebra.
(In the computation of~$\fds^0(\CE,a)$ in~\cite{scal13} we use 
the fact that the structure of the Wahlquist-Estabrook prolongation algebra 
for KdV is known and contains $\msl_2(\fik[\la])$~\cite{kdv,kdv1}.)

Also, one can prove similar results on the structure of $\fds^{\oc}(\CE,a)$
for many other evolution equations possessing parameter-dependent ZCRs.
\end{example}
\begin{example}
\lb{parkn}
For any constants $e_1,e_2,e_3\in\Com$, 
one has the Krichever-Novikov equation~\cite{krich80,svin-sok83} 
\begin{equation}
\label{knedef}
  \kne(e_1,e_2,e_3)=\left\{
 u_t=u_3-\frac32\frac{(u_2)^2}{u_1}+
\frac{(u_0-e_1)(u_0-e_2)(u_0-e_3)}{u_1}\right\}.
\end{equation}
We denote by $\mathfrak{so}_3(\Com)$
the $3$-dimensional orthogonal Lie algebra over $\Com$.
According to~\cite{krich80,novikov99}, 
if $e_1\neq e_2\neq e_3\neq e_1$ then the Krichever-Novikov equation~\er{knedef}
has an $\mathfrak{so}_3(\Com)$-valued ZCR with elliptic parameter. 
One can see this in the structure of the algebras $\fds^{\oc}(\CE,a)$ as follows.

Suppose that $e_1\neq e_2\neq e_3\neq e_1$.
According to Proposition~\ref{fdockn}, which is proved in~\cite{sbt18}, 
in the description of $\fds^\ocs(\CE,a)$ for the Krichever-Novikov equation~\er{knedef}
we see an infinite-dimensional Lie algebra $\mR_{e_1,e_2,e_3}$, 
which consists of certain 
$\mathfrak{so}_3(\Com)$-valued functions on an elliptic curve.
The curve and the algebra $\mR_{e_1,e_2,e_3}$ are defined in Remark~\ref{skne}.
As discussed in Remark~\ref{skne}, the curve and the algebra $\mR_{e_1,e_2,e_3}$
were studied previously by other authors in a different context.
\end{example}

\begin{remark}
\lb{rncebt}
In~\cite{sbt18} we show that the algebras $\fds^\ocs(\CE,a)$ help to obtain 
necessary conditions for existence of a B\"acklund transformation (BT) 
between two given evolution equations. This allows us to prove 
a number of non-existence results for BTs. 
For instance, a result of this kind is presented in Proposition~\ref{scknprop}, 
which is proved in~\cite{sbt18}.

For any $e_1,e_2,e_3\in\Com$, 
we have the Krichever-Novikov equation $\kne(e_1,e_2,e_3)$ given by~\er{knedef}. 
Consider also the algebraic curve 
\beq
\lb{curez}
\cur(e_1,e_2,e_3)=\Big\{(z,y)\in\Com^2\ \Big|\ 
y^2=(z-e_1)(z-e_2)(z-e_3)\Big\}.
\ee
\begin{proposition}[\cite{sbt18}]
\label{scknprop}
Let $e_1,e_2,e_3,e'_1,e'_2,e'_3\in\Com$ such that $e_1\neq e_2\neq e_3\neq e_1$ 
and $e'_1\neq e'_2\neq e'_3\neq e'_1$.

If the curve $\cur(e_1,e_2,e_3)$ is not birationally equivalent to 
the curve $\cur(e'_1,e'_2,e'_3)$, 
then the equation $\kne(e_1,e_2,e_3)$ is not connected 
with the equation $\kne(e'_1,e'_2,e'_3)$ by any B\"acklund transformation. 

Also, if $e_1\neq e_2\neq e_3\neq e_1$, then $\kne(e_1,e_2,e_3)$ is not connected with the KdV equation by any BT. 
\end{proposition}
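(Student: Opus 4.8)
The plan is to combine two ingredients from~\cite{sbt18}: the necessary condition for existence of a B\"acklund transformation (BT) expressed in terms of the Lie algebras $\fds^{\ocs}(\CE,a)$, and the explicit description of these algebras for the Krichever--Novikov equations furnished by Proposition~\ref{fdockn}. The BT condition says, informally, that a BT connecting equations $\CE$ and $\CE'$ transports ZCRs from one equation to the other through the BT surface, so that certain quotients of $\fds^{\ocs}(\CE,a)$ and $\fds^{\ocs'}(\CE',a')$ (for suitable orders $\ocs$, $\ocs'$ and points $a\in\CE$, $a'\in\CE'$) are forced to share a common infinite-dimensional Lie-algebra subquotient, namely the structure algebra of the transported ZCR. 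Proposition~\ref{fdockn}, on the other hand, identifies, for $\kne(e_1,e_2,e_3)$ with $e_1\neq e_2\neq e_3\neq e_1$, the relevant infinite-dimensional part of $\fds^{\ocs}(\CE,a)$ with a central extension of the Lie algebra $\mR_{e_1,e_2,e_3}$ of $\mathfrak{so}_3(\Com)$-valued functions on the elliptic curve $\cur(e_1,e_2,e_3)$ (see Remark~\ref{skne}).

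For the first assertion, I would feed the pair $\kne(e_1,e_2,e_3)$, $\kne(e'_1,e'_2,e'_3)$ into the BT obstruction and use Proposition~\ref{fdockn} to identify, up to central extensions, the resulting common subquotient with subquotients of $\mR_{e_1,e_2,e_3}$ and of $\mR_{e'_1,e'_2,e'_3}$. The heart of the argument is then a rigidity statement: from the Lie-algebra structure of $\mR_{e_1,e_2,e_3}$ one can recover the curve $\cur(e_1,e_2,e_3)$ up to birational equivalence. The mechanism is that $\mR_{e_1,e_2,e_3}$ consists of $\mathfrak{so}_3(\Com)$-valued functions on $\cur(e_1,e_2,e_3)$, so the set of its homomorphisms onto the simple Lie algebra $\mathfrak{so}_3(\Com)\cong\msl_2(\Com)$ is, modulo the kernels, in bijection with the points of the curve, and the data needed to evaluate brackets reconstructs the coordinate ring, hence the function field up to isomorphism. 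Since non-birationally-equivalent elliptic curves have non-isomorphic function fields, the two algebras cannot share such an infinite-dimensional subquotient, contradicting the BT obstruction.

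For the second assertion I would run the same scheme with $\kne(e'_1,e'_2,e'_3)$ replaced by the KdV equation. By Example~\ref{parkdv}, for KdV the algebras $\fds^{\ocs}(\CE,a)$ are obtained from $\msl_2(\fik[\la])$ --- the Lie algebra of polynomial $\msl_2$-valued functions on the rational (genus-$0$) curve --- by applying several times the operation of central extension. A BT between $\kne(e_1,e_2,e_3)$ and KdV would then force a common infinite-dimensional subquotient which, after discarding centres, is simultaneously a current-type algebra over a genus-$1$ curve and over a genus-$0$ curve; since the genus is recovered from this data by the same reconstruction, and the exceptional isomorphism $\mathfrak{so}_3(\Com)\cong\msl_2(\Com)$ does not bridge different genera, this is impossible.

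The step I expect to be the main obstacle is precisely this reconstruction/rigidity: proving that the isomorphism type of $\mR_{e_1,e_2,e_3}$ --- or rather of the particular subquotient that survives the BT obstruction, and only up to central extensions --- determines the birational class of $\cur(e_1,e_2,e_3)$. One must control how much geometric information is lost in passing to central extensions and to the image of the transported ZCR, and then carry out an honest ``recognition of the base curve from the bracket'' argument. A secondary, bookkeeping-type difficulty is aligning the orders $\ocs$, $\ocs'$ and the base points $a$, $a'$ on the two infinite prolongations so that the BT obstruction of~\cite{sbt18} applies in the form used above; this uses the surjections~\er{intfdoc1} but must be done with care.
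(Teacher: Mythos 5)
First, a point of reference: the paper does not actually prove Proposition~\ref{scknprop} --- it is imported verbatim from~\cite{sbt18}, and the present text only supplies the surrounding context (Remark~\ref{rncebt}, Remark~\ref{skne}, Proposition~\ref{fdockn}, Example~\ref{parkdv}). So there is no in-paper proof to compare against line by line. Measured against the strategy the paper attributes to~\cite{sbt18}, your outline is the right one: use the necessary condition for existence of a BT phrased in terms of the algebras $\fds^{\ocs}(\CE,a)$, plug in Proposition~\ref{fdockn} (which gives $\fds^1(\CE,a)\cong\mR_{e_1,e_2,e_3}$ and identifies all higher $\fds^{\ocs}(\CE,a)$ as iterated central extensions of it) and, for KdV, the description of $\fds^{\ocs}(\CE,a)$ as iterated central extensions of $\msl_2(\fik[\la])$, and then distinguish the resulting algebras by the underlying algebraic curve.

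That said, what you have written is a plan rather than a proof, and both of the load-bearing steps are left open. (i) You never state the BT obstruction precisely; everything hinges on exactly which subquotients of $\fds^{\ocs}(\CE,a)$ and $\fds^{\ocs'}(\CE',a')$ a B\"acklund transformation forces to coincide, and on how the orders $\ocs,\ocs'$ and base points $a,a'$ are matched --- this is the content of~\cite{sbt18} and cannot be reconstructed from the present paper. (ii) The rigidity step is genuinely nontrivial and is where the mathematical substance lies. Note that $\mR_{e_1,e_2,e_3}$ is \emph{not} the full current algebra $\mathfrak{so}_3(\Com)\otimes_\Com E_{e_1,e_2,e_3}$ but the proper (``twisted'') subalgebra generated by $\al_i\otimes\hat v_i$, with basis~\er{rbas}; so your argument that surjections onto $\mathfrak{so}_3(\Com)$ biject with points of the curve, and that the coordinate ring can be read off from the bracket, needs to be adapted to this subalgebra and then shown to be stable under (a) passing to the particular subquotient that the BT obstruction produces and (b) quotienting by the nilpotent/central pieces coming from the extensions in Proposition~\ref{fdockn}. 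You flag both issues yourself, which is honest, but it means the proposal establishes the architecture of the argument without closing it; as it stands it would not be accepted as a proof of the proposition.
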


BTs of Miura type (differential substitutions) for~\eqref{knedef} were studied 
in~\cite{mesh-sok13,svin-sok83}. 
According to~\cite{mesh-sok13,svin-sok83}, the equation $\kne(e_1,e_2,e_3)$
is connected with the KdV equation by a BT of Miura type 
iff $e_i=e_j$ for some $i\neq j$.

Proposition~\ref{scknprop} considers the most general class of BTs, 
which is much larger than the class of 
BTs of Miura type studied in~\cite{mesh-sok13,svin-sok83}. 
The definition of BTs is given in~\cite{sbt18}, 
using a geometric approach from~\cite{nonl89}.

If $e_1\neq e_2\neq e_3\neq e_1$ and $e'_1\neq e'_2\neq e'_3\neq e'_1$, 
the curves $\cur(e_1,e_2,e_3)$ and $\cur(e'_1,e'_2,e'_3)$ are elliptic.
The theory of elliptic curves allows one to determine when 
$\cur(e_1,e_2,e_3)$ is not birationally equivalent to $\cur(e'_1,e'_2,e'_3)$.
One gets a certain algebraic condition on the numbers 
$e_1$, $e_2$, $e_3$, $e'_1$, $e'_2$, $e'_3$, 
which allows us to formulate the result of Proposition~\ref{scknprop}
more explicitly. See~\cite{sbt18} for details. 
\end{remark}

\begin{remark}
\lb{multevol}
It is possible to introduce an analog of $\fds^\ocs(\CE,a)$ for 
multicomponent evolution PDEs
\beq
\lb{uitfix}
\frac{\pd u^i}{\pd t}
=F^i(x,t,u^1,\dots,u^\nv,\,u^1_1,\dots,u^\nv_1,\dots,u^1_{\eo},\dots,u^\nv_{\eo}),
\quad 
u^i=u^i(x,t),\quad u^i_k=\frac{\pd^k u^i}{\pd x^k},\quad   
i=1,\dots,\nv. 
\ee
In this paper we study only the scalar case $\nv=1$.
The case $\nv>1$ requires much more computations, which will be presented elsewhere.
Some results for $\nv>1$ 
(including a normal form for ZCRs with respect to the action of gauge transformations
and computations of~$\fds^\ocs(\CE,a)$ for a number of multicomponent PDEs of 
Landau-Lifshitz and nonlinear Schr\"odinger types)
are sketched in the preprints~\cite{hjpa,zcrm17}.
\end{remark}

\begin{remark}
\lb{remzcrgt}
It is well known that equation~\er{mnzcr} implies 
$D_x(\tilde{B})-D_t(\tilde{A})+[\tilde{A},\tilde{B}]=0$ 
for $\tilde{A}$, $\tilde{B}$ given by~\er{mnprint}.
Indeed, formulas~\er{mnprint} yield 
$D_x+\tilde{A}=G(D_x+A)G^{-1}$ and $D_x+\tilde{B}=G(D_t+B)G^{-1}$. 
Therefore, 
\begin{multline*}
D_x(\tilde{B})-D_t(\tilde{A})+[\tilde{A},\tilde{B}]=
[D_x+\tilde{A},D_t+\tilde{B}]=[G(D_x+A)G^{-1},G(D_t+B)G^{-1}]=\\
=G[D_x+A,D_t+B]G^{-1}=G(D_x(B)-D_t(A)+[A,B])G^{-1}.
\end{multline*}
Hence the equation $D_x(B)-D_t(A)+[A,B]=0$ implies 
$D_x(\tilde{B})-D_t(\tilde{A})+[\tilde{A},\tilde{B}]=0$.
\end{remark}

\begin{remark}
\lb{skne}
In this remark we assume $\fik=\Com$. 
For any constants $e_1,e_2,e_3\in\Com$, consider the Krichever-Novikov equation~\er{knedef}.
To study the algebras $\fds^\ocs(\CE,a)$ for this equation, we need some auxiliary constructions.

Let $\Com[v_1,v_2,v_3]$ be the algebra of 
polynomials in the variables $v_1$, $v_2$, $v_3$.
Let $e_1,e_2,e_3\in\Com$ such that $e_1\neq e_2\neq e_3\neq e_1$.
Consider the ideal $\mathcal{I}_{e_1,e_2,e_3}\subset\Com[v_1,v_2,v_3]$ 
generated by the polynomials
\begin{equation}
  \label{elc}
  v_i^2-v_j^2+e_i-e_j,\qquad\qquad i,j=1,2,3.
\end{equation}

Set $E_{e_1,e_2,e_3}=\Com[v_1,v_2,v_3]/\mathcal{I}_{e_1,e_2,e_3}$. 
In other words, $E_{e_1,e_2,e_3}$ 
is the commutative associative algebra of polynomial 
functions on the algebraic curve 
in $\Com^3$ defined by the polynomials~\eqref{elc}.
(This curve is given by the equations 
$v_i^2-v_j^2+e_i-e_j=0$, $i,j=1,2,3$, 
in the space $\Com^3$ with coordinates $v_1$, $v_2$, $v_3$.)

Since we assume $e_1\neq e_2\neq e_3\neq e_1$, 
this curve is nonsingular, irreducible and is of genus~$1$, 
so this is an elliptic curve. 
It is known that the Landau-Lifshitz equation 
and the Krichever-Novikov equation
possess $\mathfrak{so}_3(\Com)$-valued ZCRs parametrized by points of this 
curve~\cite{sklyanin,ft,novikov99,ll}.
(For the Krichever-Novikov equation, the paper~\cite{novikov99} presents a ZCR 
with values in the Lie algebra $\msl_2(\Com)\cong\mathfrak{so}_3(\Com)$.)

We have the natural surjective homomorphism 
$\rho\cl\Com[v_1,v_2,v_3]\to
\Com[v_1,v_2,v_3]/\mathcal{I}_{e_1,e_2,e_3}=E_{e_1,e_2,e_3}$.
Set $\hat v_i=\rho(v_i)\in E_{e_1,e_2,e_3}$ for $i=1,2,3$.

Consider also a basis $\al_1$, $\al_2$, $\al_3$ of the Lie algebra
$\mathfrak{so}_3(\Com)$ such that 
$[\al_1,\al_2]=\al_3$, $[\al_2,\al_3]=\al_1$, $[\al_3,\al_1]=\al_2$.

Denote by $\mR_{e_1,e_2,e_3}$ the Lie subalgebra of 
$\mathfrak{so}_3(\Com)\otimes_\Com E_{e_1,e_2,e_3}$ generated by the elements
$\al_i\otimes\hat v_i$, $i=1,2,3$.
Since $\mR_{e_1,e_2,e_3}\subset\mathfrak{so}_3(\Com)\otimes_\Com E_{e_1,e_2,e_3}$, 
we can view elements of~$\mR_{e_1,e_2,e_3}$ 
as $\mathfrak{so}_3(\Com)$-valued functions on the elliptic curve in~$\Com^3$ 
determined by the polynomials~\eqref{elc}. 

Set $z=\hat v_1^2+e_1$.
As $\hat v_1^2+e_1=\hat v_2^2+e_2=\hat v_3^2+e_3$ in $E_{e_1,e_2,e_3}$, we have
$z=\hat v_1^2+e_1=\hat v_2^2+e_2=\hat v_3^2+e_3$.
It is easily seen (and is shown in~\cite{ll}) 
that the following elements form a basis for $\mR_{e_1,e_2,e_3}$
\beq
\label{rbas}
\al_i\otimes\hat v_i z^l,\quad\al_i\otimes\hat v_j\hat v_k z^l,\qquad 
i,j,k\in\{1,2,3\},\quad j<k,\quad j\neq i\neq k,\quad l\in\zp.
\ee
Since the basis~\er{rbas} is infinite,
the Lie algebra $\mR_{e_1,e_2,e_3}$ is infinite-dimensional.
It is known that the standard ZCR with elliptic parameter 
for the (fully anisotropic) Landau-Lifshitz equation can be interpreted as a ZCR 
with values in this algebra~\cite{sklyanin,holod84,ll}.

It is shown in~\cite{ll} that the Wahlquist-Estabrook prolongation algebra 
of the (fully anisotropic) Landau-Lifshitz equation is isomorphic to the 
direct sum of $\mR_{e_1,e_2,e_3}$ and a $2$-dimensional abelian Lie algebra.

According to Proposition~\ref{fdockn} below, the algebra $\mR_{e_1,e_2,e_3}$ 
shows up also in the structure of~$\fds^\ocs(\CE,a)$ for the Krichever-Novikov equation.
A proof of Proposition~\ref{fdockn} is given in~\cite{sbt18}.
This proof uses some results of~\cite{igon-martini,novikov99,ll}.
\begin{proposition}[\cite{sbt18}]
\lb{fdockn}
For any $e_1,e_2,e_3\in\Com$, 
consider the Krichever-Novikov equation $\kne(e_1,e_2,e_3)$ given by~\er{knedef}.
Let $\CE$ be the infinite prolongation of this equation. Let $a\in\CE$.
Then 
\begin{itemize}
 \item the algebra $\fds^0(\CE,a)$ is zero,
\item for each $\ocs\ge 2$, 
the algebra $\fds^\ocs(\CE,a)$ is obtained from $\fds^{\ocs-1}(\CE,a)$ 
by central extension, and the kernel of the surjective homomorphism 
$\fds^{\ocs}(\CE,a)\to\fds^{1}(\CE,a)$ from~\er{fdnn-1} is nilpotent,
\item if $e_1\neq e_2\neq e_3\neq e_1$, then $\fds^1(\CE,a)\cong\mR_{e_1,e_2,e_3}$  
and for each $\ocs\ge 2$ the algebra $\fds^{\ocs}(\CE,a)$ 
is obtained from $\mR_{e_1,e_2,e_3}$ by applying several times 
the operation of central extension.
\end{itemize}
\end{proposition}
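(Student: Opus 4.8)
The plan is to establish the three statements of the proposition in turn: a direct computation that $\fds^0(\CE,a)=0$; a uniform argument that the kernel of each surjection $\fds^\ocs(\CE,a)\to\fds^{\ocs-1}(\CE,a)$ is central for $\ocs\ge2$, whence also the nilpotency claim; and the isomorphism $\fds^1(\CE,a)\cong\mR_{e_1,e_2,e_3}$ in the nondegenerate case. Throughout I would work with the normal form of ZCRs from Theorem~\ref{thnfzcr} and the presentation of $\fds^\ocs(\CE,a)$ by generators and relations recalled in Section~\ref{btcsev}. For $\fds^0(\CE,a)=0$: a ZCR of $\kne(e_1,e_2,e_3)$ of order $\le 0$ has $A=A(x,t,u_0)$ and, since $\eo=3$, $B=B(x,t,u_0,u_1,u_2)$. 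Substituting these into $D_x(B)-D_t(A)+[A,B]=0$, working on the open set $u_1\ne0$, and reading off the coefficients of monomials in $u_1,u_2,u_3$, the coefficient of $u_3$ gives $\partial_{u_2}B=\partial_{u_0}A$, so $B$ is at most linear in $u_2$; then, after multiplying by $u_1$, the coefficient of $u_2^2$ — which receives a contribution only from the singular term $-\tfrac{3}{2}u_2^2/u_1$ of $F$ — forces $\partial_{u_0}A=0$. The normal-form conditions $\tilde A(x,t,a_0)=0$, $\tilde B(x_a,t,a_0,a_1,a_2)=0$ together with the remaining relations then annihilate all the generators of $\fds^0(\CE,a)$; equivalently, every ZCR of $\kne(e_1,e_2,e_3)$ of order $\le 0$ is gauge-trivial, the point being that the $u_1^{-1}$ nonlinearity leaves no room for a $u_0$-dependent $A$.

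For $\ocs\ge2$ I would show that the generators of $\fds^\ocs(\CE,a)$ coming from those Taylor coefficients of $\tilde A$ that genuinely depend on $u_\ocs$ — these generate $\mathfrak{I}_\ocs:=\ker\big(\fds^\ocs(\CE,a)\to\fds^{\ocs-1}(\CE,a)\big)$ — bracket to zero with every generator of $\fds^\ocs(\CE,a)$. Substituting expansions of type~\er{aser1} into the zero-curvature relation, multiplying by a suitable power of $u_1$ to clear the denominators coming from $D_x^k(F)$, and extracting the coefficient of the highest power of $u_{\ocs+\eo-1}$ in $\tilde B$ and the coefficients of the $u_\ocs$-monomials, one finds that for $\ocs\ge2$ the relations involving the level-$\ocs$ generators reduce to the statement that these generators are central. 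This is where $\kne(e_1,e_2,e_3)$ differs from the equations $u_t=u_{2\kd+1}+f(\dots,u_{2\kd-1})$ of Theorem~\ref{thcenter}: the KN nonlinearity reaches $u_2=u_{\eo-1}$, one derivative higher than in the case $\kd=1$, which raises the central-extension threshold from $1$ to $2$ and accounts for the ``$\ocs\ge2$'' in the statement. Centrality of $\mathfrak{I}_j$ in $\fds^j(\CE,a)$ for every $2\le j\le\ocs$ then makes nilpotency automatic: writing $\mathfrak{K}^{(j)}=\ker\big(\fds^\ocs(\CE,a)\to\fds^j(\CE,a)\big)$, centrality gives $[\fds^\ocs(\CE,a),\mathfrak{K}^{(j-1)}]\subseteq\mathfrak{K}^{(j)}$, so the lower central series of $\mathfrak{K}^{(1)}=\ker\big(\fds^\ocs(\CE,a)\to\fds^1(\CE,a)\big)$ descends through the finite chain $\mathfrak{K}^{(1)}\supseteq\mathfrak{K}^{(2)}\supseteq\dots\supseteq\mathfrak{K}^{(\ocs)}=0$ and terminates.

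Finally, for $e_1\ne e_2\ne e_3\ne e_1$, I would identify $\fds^1(\CE,a)$ with $\mR_{e_1,e_2,e_3}$ by a two-sided argument. In one direction, the $\mathfrak{so}_3(\Com)$-valued ZCR with elliptic parameter of~\cite{krich80,novikov99,ll} is a ZCR of $\kne(e_1,e_2,e_3)$ of order $\le 1$ taking values in $\mR_{e_1,e_2,e_3}$; putting it into normal form yields a Lie algebra homomorphism $\fds^1(\CE,a)\to\mR_{e_1,e_2,e_3}$, and checking that the images of the generators of $\fds^1(\CE,a)$ contain the elements $\al_i\otimes\hat v_i$ shows the homomorphism is surjective. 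In the other direction, I would analyze the defining relations of $\fds^1(\CE,a)$ obtained by substituting the order-$1$ ZCR of $\kne(e_1,e_2,e_3)$ in normal form into the zero-curvature relation: after clearing the $u_1$-denominators and extracting coefficients, the cubic $(u_0-e_1)(u_0-e_2)(u_0-e_3)$ forces precisely the relations~\er{elc} defining the genus-$1$ curve, and the $\mathfrak{so}_3$-structure is forced as well, so — using the known description of the Wahlquist-Estabrook prolongation algebra of the Landau-Lifshitz / Krichever-Novikov equation from~\cite{ll,igon-martini} — one gets that $\fds^1(\CE,a)$ is no larger than $\mR_{e_1,e_2,e_3}$. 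Combining the two directions gives $\fds^1(\CE,a)\cong\mR_{e_1,e_2,e_3}$, and the central-extension statement then upgrades this to: for every $\ocs\ge2$ the algebra $\fds^\ocs(\CE,a)$ is obtained from $\mR_{e_1,e_2,e_3}$ by applying the operation of central extension several times.

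I expect the main obstacle to be the second direction of the last step — showing $\fds^1(\CE,a)$ is not strictly larger than $\mR_{e_1,e_2,e_3}$. A priori $\fds^1(\CE,a)$ governs all ZCRs of $\kne(e_1,e_2,e_3)$ of order $\le 1$ with values in arbitrary finite-dimensional Lie algebras, so one must rule out ZCRs not arising from the elliptic $\mathfrak{so}_3(\Com)$ family; this is exactly where the interplay with the known structure of the Wahlquist-Estabrook algebra and the appearance of the elliptic curve~\er{elc} becomes essential, and it is what confines the ``new'' generators of the tower~\er{intfdoc1} to level~$1$.
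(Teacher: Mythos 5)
First, a point of comparison: this paper does not actually prove Proposition~\ref{fdockn}; the statement is imported from the companion paper \cite{sbt18}, and the text only records that the proof there uses results of \cite{igon-martini,novikov99,ll}. So there is no in-paper argument to measure your proposal against, only the strategy the paper attributes to \cite{sbt18} --- which your outline matches: put ZCRs into the normal form of Theorem~\ref{thnfzcr}, kill order~$0$ by a direct coefficient computation, prove centrality of $\ker\vf_\ocs$ for $\ocs\ge2$ in the spirit of Theorem~\ref{thcenter}, and identify $\fds^1(\CE,a)$ with $\mR_{e_1,e_2,e_3}$ by combining the elliptic ZCR of \cite{novikov99} with the Wahlquist--Estabrook computations of \cite{igon-martini,ll}. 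Your treatment of the first bullet is essentially a complete proof: the coefficient of $u_3$ gives $\pd_{u_2}B=\pd_{u_0}A$, the coefficient of $u_2^2/u_1$ then isolates $\tfrac32\pd_{u_0}A$ and kills it, and the normalizations \er{aukak}, \er{bxx0} annihilate what remains. Your derivation of nilpotency of $\ker\big(\fds^\ocs(\CE,a)\to\fds^1(\CE,a)\big)$ from centrality of each $\ker\vf_j$, $2\le j\le\ocs$, via the descent $\mathfrak{K}^{(1)}\supseteq\dots\supseteq\mathfrak{K}^{(\ocs)}=0$, is also correct and is the same mechanism used in the proof of Theorem~\ref{thcenter}.

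The gaps are in the second and third bullets, where the load-bearing computations are asserted rather than performed. For centrality at level $\ocs\ge2$ you say that extracting suitable coefficients ``reduces to the statement that these generators are central''; in the analogous and strictly easier setting of Theorem~\ref{thcenter} (where $f$ does not involve $u_{2\kd}$ and there are no denominators) this reduction occupies several pages --- one must establish an identity of type \er{dxpd2aun}, show the third $u_\ocs$-derivatives of $\ga$ vanish as in Lemma~\ref{ga2uk}, deduce the form \er{ga210} with $\pd_x(\ga_2)=0$, and run the induction \er{tgalfdq} --- and for the Krichever--Novikov equation all of this must be redone with the $u_1^{-1}$ terms present, so the claim needs an actual computation, not an appeal to analogy. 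Likewise for the upper bound $\fds^1(\CE,a)\cong\mR_{e_1,e_2,e_3}$: you correctly flag this as the main obstacle, but note that the assertion is sharper than ``the elliptic curve relations \er{elc} appear'' --- the proposition claims there is no extra abelian direct summand, in contrast with KdV, where $\fds^0(\CE,a)\cong\msl_2(\fik[\la])\oplus\fik^3$, and with the Landau--Lifshitz Wahlquist--Estabrook algebra, which is $\mR_{e_1,e_2,e_3}$ plus a $2$-dimensional abelian summand. Ruling out such summands, and all other order-$\le1$ ZCRs, is precisely the content of the computation deferred to \cite{sbt18}, and your sketch does not supply it.
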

\end{remark}

\section{ZCRs, gauge transformations, and the algebras $\fds^\oc(\CE,a)$}
\lb{btcsev}

In this section we recall some notions and results from~\cite{scal13}, adding some clarifications.

As said in Section~\ref{spr}, 
we suppose that $x$, $t$, $u_k$ take values in $\fik$, 
where $\fik$ is either $\Com$ or $\mathbb{R}$.
Let $\fik^\infty$ be the infinite-dimensional space  
with the coordinates $x$, $t$, $u_k$ for $k\in\zp$. 
The topology on $\fik^\infty$ is defined as follows. 

For each $l\in\zp$, consider the space $\fik^{l+3}$ 
with the coordinates $x$, $t$, $u_k$ for $k\le l$. 
One has the natural projection $\pi_l\cl\fik^\infty\to\fik^{l+3}$ that ``forgets'' 
the coordinates $u_{k'}$ for $k'>l$. 

Since $\fik^{l+3}$ is a finite-dimensional vector space, 
we have the standard topology on~$\fik^{l+3}$. 
For any $l\in\zp$ and any open subset $V\subset\fik^{l+3}$, 
the subset~$\pi_l^{-1}(V)\subset\fik^\infty$ 
is, by definition, open in $\fik^\infty$. 
Such subsets form a base of the topology on~$\fik^\infty$. 
In other words, we consider the smallest topology on~$\fik^\infty$ such that 
the maps $\pi_l$, $l\in\zp$, are continuous. 

For a connected open subset $W\subset\fik^\infty$, 
a function $f\cl W\to\fik$ is said to be \emph{analytic} if $f$ depends 
analytically on a finite number of the coordinates $x$, $t$, $u_k$, where $k\in\zp$.
(That is, $f$ is an analytic function of the form $f=f(x,t,u_0,\dots,u_m)$ for some $m\in\zp$.)
For an arbitrary open subset $\tilde{W}\subset\fik^\infty$, 
a function $g\cl\tilde{W}\to\fik$ is called analytic 
if $g$ is analytic on each connected component of $\tilde{W}$.

Since we have the topology on $\fik^\infty$ and the notion of 
analytic functions on open subsets of~$\fik^\infty$,
we can say that $\fik^\infty$ is an analytic manifold.

\begin{definition}
\lb{dipr}
Let $\ost\subset\fik^{\eo+3}$ be an open subset such that the function 
$F=F(x,t,u_0,u_1,\dots,u_{\eo})$ from~\er{eveq_intr} is defined on~$\ost$. 
(For instance, if the function $F$ is meromorphic on~$\fik^{\eo+3}$ then one can 
take $\ost\subset\fik^{\eo+3}$ to be the maximal open subset such that 
$F$ is analytic on~$\ost$.)

The \emph{infinite prolongation} $\CE$ of equation~\er{eveq_intr} 
is defined as follows:
$$
\CE=\pi_{\eo}^{-1}(\ost)\subset\fik^\infty.
$$
So $\CE$ is an open subset of the space $\fik^\infty$  
with the coordinates $x$, $t$, $u_k$ for $k\in\zp$. 
The topology on~$\CE$ is induced by the embedding $\CE\subset\fik^\infty$. 

As said above, we view the space $\fik^\infty$ as an analytic manifold.
Since $\CE$ is an open subset of~$\fik^\infty$, the set $\CE$ is an analytic manifold as well.
\end{definition}

\begin{example}
For any constants $e_1,e_2,e_3\in\fik$, 
we write the Krichever-Novikov equation~\er{knedef} as follows 
\begin{gather}
\label{u1kn}
u_t=F(u_0,u_1,u_2,u_3),\\
\lb{f1kn}
F(u_0,u_1,u_2,u_3)=
u_3-\frac32\frac{(u_2)^2}{u_1}+
\frac{(u_0-e_1)(u_0-e_2)(u_0-e_3)}{u_1}.
\end{gather}
Since the right-hand side of~\er{u1kn} depends on $u_k$ for $k\le 3$,
we have here $\eo=3$.

Let $\fik^6$ be the space with the coordinates 
$x$, $t$, $u_0$, $u_1$, $u_2$, $u_3$.
According to~\er{f1kn}, 
the function $F$ is defined on the open subset $\ost\subset\fik^6$ determined 
by the condition $u_1\neq 0$. 

Recall that $\fik^\infty$ is the space with the coordinates
$x$, $t$, $u_k$ for $k\in\zp$. 
We have the map $\pi_3\cl\fik^\infty\to\fik^6$
that ``forgets'' the coordinates $u_{k'}$ for $k'>3$. 
The infinite prolongation $\CE$ of equation~\er{u1kn} is the following open subset 
of $\fik^\infty$
$$
\CE=\pi_3^{-1}(\ost)=
\big\{(x,t,u_0,u_1,u_2,\dots)\in\fik^\infty\,\big|\,
u_1\neq 0\big\}.
$$
\end{example}

Consider again an arbitrary scalar evolution equation~\er{eveq_intr}.
Let $\CE$ be the infinite prolongation of~\er{eveq_intr}.

Since $\CE$ is an open subset of the space $\fik^\infty$  
with the coordinates $x$, $t$, $u_k$ for $k\in\zp$, 
a point $a\in\CE$ is determined by the values of $x$, $t$, $u_k$ at~$a$. Let
\begin{equation}
\lb{pointevs}
a=(x=x_a,\,t=t_a,\,u_k=a_k)\,\in\,\CE,\qquad\qquad x_a,\,t_a,\,a_k\in\fik,\qquad k\in\zp,
\end{equation}
be a point of $\CE$.
The constants $x_a$, $t_a$, $a_k$ are the coordinates 
of the point $a\in\CE$ in the coordinate system $x$, $t$, $u_k$.

We continue to use the notations introduced in Section~\ref{subsint1}.
In particular, $\mg\subset\gl_\sm$ is a matrix Lie algebra, 
and $\mathcal{G}\subset\mathrm{GL}_\sm$ is the connected matrix Lie group 
corresponding to $\mg$, where $\sm\in\zsp$.

According to Definition~\ref{dlggt},
a gauge transformation is a matrix-function $G=G(x,t,u_0,u_1,\dots,u_l)$ 
with values in~$\mathcal{G}$, where $l\in\zp$.
See also Remark~\ref{rogt} about gauge transformations with values 
in other matrix Lie groups.

In this section, when we speak about ZCRs, we always 
mean ZCRs of equation~\er{eveq_intr}.
For each $i=1,2$, let 
\beq
\notag
A_i=A_i(x,t,u_0,u_1,\dots),\quad 
B_i=B_i(x,t,u_0,u_1,\dots),\quad
D_x(B_i)-D_t(A_i)+[A_i,B_i]=0
\ee
be a $\mg$-valued ZCR.
The ZCR $A_1,B_1$ is said to be \emph{gauge equivalent} 
to the ZCR $A_2,B_2$ if there is a gauge transformation $G=G(x,t,u_0,\dots,u_l)$ 
such that 
$$
A_1=GA_2G^{-1}-D_x(G)\cdot G^{-1},\qquad\qquad
B_1=GB_2G^{-1}-D_t(G)\cdot G^{-1}.
$$

Let $s\in\zp$. For a function $M=M(x,t,u_0,u_1,u_2,\dots)$, the notation 
$M\,\Big|_{u_k=a_k,\ k\ge s}$ 
means that we substitute $u_k=a_k$ for all $k\ge s$ in the function $M$. 
Also, sometimes we substitute $x=x_a$ or $t=t_a$ in such functions. 
For example, if $M=M(x,t,u_0,u_1,u_2,u_3)$, then 
$$
M\,\Big|_{x=x_a,\ u_k=a_k,\ k\ge 2}=M(x_a,t,u_0,u_1,a_2,a_3).
$$

The following result is obtained in~\cite{scal13}.

\begin{theorem}[\cite{scal13}]
\lb{thnfzcr}
Let $\sm\in\zsp$ and $\oc\in\zp$. 
Let $\mg\subset\gl_\sm$ be a matrix Lie algebra and  
$\mathcal{G}\subset\mathrm{GL}_\sm$ be 
the connected matrix Lie group corresponding to $\mg\subset\gl_\sm$. 

Consider a ZCR of order~$\le\oc$ given by
\beq
\lb{guzcr}
A=A(x,t,u_0,\dots,u_\oc),\quad 
B=B(x,t,u_0,\dots,u_{\oc+\eo-1}),\quad
D_x(B)-D_t(A)+[A,B]=0
\ee
such that the functions $A$, $B$ 
are analytic on a neighborhood of $a\in\CE$ and take values in $\mg$.

Then on a neighborhood of $a\in\CE$ 
there is a unique gauge transformation $G=G(x,t,u_0,\dots,u_l)$ 
such that $G(a)=\mathrm{Id}$ and the functions 
\beq
\lb{tatbgu}
\tilde{A}=GAG^{-1}-D_x(G)\cdot G^{-1},\qquad\qquad
\tilde{B}=GBG^{-1}-D_t(G)\cdot G^{-1}
\ee
satisfy 
\begin{gather}
\label{d=0}
\frac{\pd \tilde{A}}{\pd u_s}\,\,\bigg|_{u_k=a_k,\ k\ge s}=0\qquad\quad
\forall\,s\ge 1,\\
\lb{aukak}
\tilde{A}\,\Big|_{u_k=a_k,\ k\ge 0}=0,\\
\lb{bxx0}
\tilde{B}\,\Big|_{x=x_a,\ u_k=a_k,\ k\ge 0}=0.
\end{gather}

Furthermore, one has the following.
\begin{itemize}
\item The function $G$ depends only on $x$, $t$, $u_0,\dots,u_{\oc-1}$.
\textup{(}In particular, if $\oc=0$ then $G$ depends only on $x$, $t$.\textup{)}
\item The function $G$ is analytic on a neighborhood of $a\in\CE$.
\item The functions~\er{tatbgu} take values in $\mg$ and satisfy  
\begin{gather}
\lb{tatbth}
\tilde{A}=\tilde{A}(x,t,u_0,\dots,u_\oc),\quad\qquad 
\tilde{B}=\tilde{B}(x,t,u_0,\dots,u_{\oc+\eo-1}),\\
\lb{zcrtmn}
D_x(\tilde{B})-D_t(\tilde{A})+[\tilde{A},\tilde{B}]=0.
\end{gather}
So the functions~\er{tatbgu} form 
a $\mg$-valued ZCR of order~$\le\oc$.
\end{itemize}

Note that, according to our definition of gauge transformations, 
$G$ takes values in $\mathcal{G}$. 
The property $G(a)=\mathrm{Id}$ means that $G(x_a,t_a,a_0,\dots,a_{\oc-1})=\mathrm{Id}$.
\end{theorem}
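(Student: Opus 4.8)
The plan is to produce the gauge transformation $G$ by solving, one variable at a time, a triangular cascade of linear matrix ODEs in the variables $u_{\oc-1},u_{\oc-2},\dots,u_0$, then $x$, then $t$, with the normalization $G(a)=\mathrm{Id}$ at the bottom. The normal-form conditions \er{d=0}, \er{aukak}, \er{bxx0} are exactly what force these equations, so one and the same analysis yields both existence and uniqueness.

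First I would dispose of the ``automatic'' parts of the conclusion. If $G=G(x,t,u_0,\dots,u_{\oc-1})$ depends only on $x,t,u_0,\dots,u_{\oc-1}$, then $D_x(G)$ depends only on $x,t,u_0,\dots,u_\oc$ and $D_t(G)$ only on $x,t,u_0,\dots,u_{\oc+\eo-1}$, so the functions $\tilde A$, $\tilde B$ of \er{tatbgu} automatically have the orders asserted in \er{tatbth}; and then \er{zcrtmn} follows from the input ZCR equation of \er{guzcr} by Remark~\ref{remzcrgt}. Moreover all the ODEs below will have $\mg$-valued analytic coefficients, so---integrating from $\mathrm{Id}$ and using that an ordered exponential of a $\mg$-valued path lies in $\mathcal{G}$ (here $\mathrm{Ad}_g(\mg)=\mg$ for $g\in\mathcal{G}$ is used)---the resulting $G$ automatically takes values in $\mathcal{G}$ and is analytic on a neighborhood of $a$, being built from finitely many applications of the analytic existence/uniqueness theorem for linear ODEs. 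Thus the substance of the proof is realizing \er{d=0}, \er{aukak}, \er{bxx0}.

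To set up the cascade I would differentiate $\tilde A=GAG^{-1}-D_x(G)\cdot G^{-1}$ in $u_\oc$; using $\pd G/\pd u_\oc=0$ and that $u_\oc$ enters $D_x(G)$ only through the term $u_\oc\,\pd G/\pd u_{\oc-1}$ one gets
\[
\frac{\pd\tilde A}{\pd u_\oc}=G\,\frac{\pd A}{\pd u_\oc}\,G^{-1}-\frac{\pd G}{\pd u_{\oc-1}}\,G^{-1},
\]
so \er{d=0} for $s=\oc$ is equivalent to the linear ODE $\pd G/\pd u_{\oc-1}=G\cdot\big(\pd A/\pd u_\oc\big)\big|_{u_\oc=a_\oc}$ in the variable $u_{\oc-1}$, with $\mg$-valued analytic coefficient; this determines the $u_{\oc-1}$-dependence of $G$ in terms of $G|_{u_{\oc-1}=a_{\oc-1}}$. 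Once this is imposed, a longer but entirely similar computation of $\pd\tilde A/\pd u_{\oc-1}$ shows that \er{d=0} for $s=\oc-1$ reduces to a linear ODE in $u_{\oc-2}$ for $G|_{u_{\oc-1}=a_{\oc-1}}$, its coefficient again $\mg$-valued and built from $A$ alone by iterated $D_x$, brackets, partial derivatives and restriction to slices; continuing inductively, \er{d=0} for $s=\oc,\oc-1,\dots,1$ becomes a triangular system of linear ODEs pinning down in turn the dependence of $G$ on $u_{\oc-1}$, then $u_{\oc-2}$, \dots, then $u_0$. After this $u$-cascade one is left with $G_\oc:=G|_{u_k=a_k,\ k\ge 0}$, a function of $x,t$: imposing \er{aukak} becomes, by the same device, a linear ODE $\pd G_\oc/\pd x=G_\oc\cdot R(x,t)$ with $\mg$-valued $R$ (the terms coming from $D_x(G)$ become, after restriction to $u_k=a_k$, $G_\oc$ times $\mg$-valued functions of $x,t$), and imposing \er{bxx0} becomes a linear ODE in $t$ for $G_\oc|_{x=x_a}$, whose coefficient now also involves $B$; finally $G(a)=\mathrm{Id}$ fixes the remaining constant. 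Solving these in succession from the bottom up yields the required $G$. For uniqueness one first notes that, if $G=G(x,t,u_0,\dots,u_l)$ realizes \er{d=0}, then applying \er{d=0} with $s=l+1$ (recalling $A$ has order $\le\oc$) forces $\pd G/\pd u_l=0$ whenever $l\ge\oc$, so $G$ depends only on $x,t,u_0,\dots,u_{\oc-1}$; then any such admissible $G$ with $G(a)=\mathrm{Id}$ must satisfy exactly the cascade above, which has at most one solution.

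The step I expect to be the main obstacle is the bookkeeping inside the $u$-cascade: checking that, once \er{d=0} has been imposed for $s=\oc,\dots,j+1$, the remaining condition for $s=j$ really is a first-order linear ODE for the $u_{j-1}$-dependence of $G|_{u_{\oc-1}=a_{\oc-1},\dots,u_j=a_j}$ (so that the $\pd/\pd u_{j-1}$-term carries an invertible coefficient $G^{-1}$), that this coefficient is $\mg$-valued, and---crucially---that imposing it does not spoil the conditions already achieved. The last point rests on the elementary but essential observation that a gauge transformation depending only on $x,t,u_0,\dots,u_{j-1}$ contributes to $D_x(G)$ only terms in $u_0,\dots,u_j$, so for $s>j$ the derivative $\pd\tilde A/\pd u_s$ transforms merely by conjugation; this triangularity with respect to the fibre variables $u_k$ is what makes the whole construction consistent, and verifying it in the inductive step is where the real work lies.
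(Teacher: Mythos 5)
The paper does not prove Theorem~\ref{thnfzcr} itself but imports it from~\cite{scal13}, where the argument is precisely the triangular cascade you describe: one determines $\pd G/\pd u_{\oc-1}$, then the lower $u$-derivatives, then the $x$- and $t$-dependence, by solving linear ODEs of the form $\pd G/\pd(\cdot)=G\cdot M$ with $\mg$-valued analytic $M$, normalized by $G(a)=\id$. Your proposal is correct and essentially coincides with that proof, including the two points that actually need care (the descending-induction argument forcing $\pd G/\pd u_l=0$ for $l\ge\oc$, which gives uniqueness and the claimed form of $G$, and the triangularity ensuring each new condition does not disturb the ones already imposed).
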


\begin{definition}
Fix a point $a\in\CE$ given by~\er{pointevs}, 
which is determined by constants $x_a$, $t_a$, $a_k$.
A ZCR 
\beq
\lb{anzcr}
\anA=\anA(x,t,u_0,u_1,\dots),\qquad 
\anB=\anB(x,t,u_0,u_1,\dots),\qquad
D_x(\anB)-D_t(\anA)+[\anA,\anB]=0
\ee
is said to be \emph{$a$-normal} if $\anA$, $\anB$ satisfy the following equations
\begin{gather}
\label{agd=0}
\frac{\pd \anA}{\pd u_s}\,\,\bigg|_{u_k=a_k,\ k\ge s}=0\qquad\quad
\forall\,s\ge 1,\\
\lb{agaukak}
\anA\,\Big|_{u_k=a_k,\ k\ge 0}=0,\\
\lb{agbxx0}
\anB\,\Big|_{x=x_a,\ u_k=a_k,\ k\ge 0}=0.
\end{gather}
\end{definition}
\begin{remark}
\lb{rnfzcr}
For example, the ZCR $\tilde{A},\tilde{B}$ 
described in Theorem~\ref{thnfzcr} is 
$a$-normal, because $\tilde{A}$, $\tilde{B}$ obey \er{d=0}, 
\er{aukak}, \er{bxx0}.
Theorem~\ref{thnfzcr} implies that any ZCR on a neighborhood of $a\in\CE$
is gauge equivalent to an $a$-normal ZCR.
Therefore, following~\cite{scal13}, we can say that 
properties \er{agd=0}, \er{agaukak}, \er{agbxx0} 
determine a normal form for ZCRs
with respect to the action of the group of gauge transformations 
on a neighborhood of $a\in\CE$.
\end{remark}

\begin{remark}
\lb{abcoef0}
The functions $A$, $B$, $G$ considered in Theorem~\ref{thnfzcr}
are analytic on a neighborhood of $a\in\CE$.
Therefore, the $\mg$-valued functions $\tilde{A}$, $\tilde{B}$ 
given by~\er{tatbgu} are analytic as well.

Since $\tilde{A}$, $\tilde{B}$ are analytic and are of the form~\er{tatbth},
these functions are represented as absolutely convergent power series
\begin{gather}
\label{aser}
\tilde{A}=\sum_{l_1,l_2,i_0,\dots,i_\oc\ge 0} 
(x-x_a)^{l_1} (t-t_a)^{l_2}(u_0-a_0)^{i_0}\dots(u_\oc-a_\oc)^{i_\oc}\cdot
\tilde{A}^{l_1,l_2}_{i_0\dots i_\oc},\\
\lb{bser}
\tilde{B}=\sum_{l_1,l_2,j_0,\dots,j_{\oc+\eo-1}\ge 0} 
(x-x_a)^{l_1} (t-t_a)^{l_2}(u_0-a_0)^{j_0}\dots(u_{\oc+\eo-1}-a_{\oc+\eo-1})^{j_{\oc+\eo-1}}\cdot
\tilde{B}^{l_1,l_2}_{j_0\dots j_{\oc+\eo-1}},\\
\notag
\tilde{A}^{l_1,l_2}_{i_0\dots i_\oc},\,\tilde{B}^{l_1,l_2}_{j_0\dots j_{\oc+\eo-1}}\in\mg.
\end{gather}

For each $k\in\zsp$, we set 
\beq
\lb{defzcsoc}
\zcs_k=\Big\{(i_0,\dots,i_{k})\in\zp^{k+1}\ \Big|\ \exists\,r\in\{1,\dots,k\}\,\ 
\text{such that}\,\ i_r=1,\,\ i_q=0\,\ \forall\,q>r\Big\}.
\ee
In other words, for $k\in\zsp$ and $i_0,\dots,i_{k}\in\zp$, one has $(i_0,\dots,i_{k})\in\zcs_k$ iff
there is $r\in\{1,\dots,k\}$ such that 
$$
(i_0,\dots,i_{r-1},i_r,i_{r+1},\dots,i_{k})=(i_0,\dots,i_{r-1},1,0,\dots,0).
$$
Set also $\zcs_0=\varnothing$. So the set $\zcs_0$ is empty.

Using formulas~\er{aser},~\er{bser}, 
we see that properties~\er{d=0},~\er{aukak}, \er{bxx0}
are equivalent to 
\beq
\lb{ab000}
\tilde{A}^{l_1,l_2}_{0\dots 0}=\tilde{B}^{0,l_2}_{0\dots 0}=0,\qquad
\tilde{A}^{l_1,l_2}_{i_0\dots i_{\oc}}=0,
\qquad (i_0,\dots,i_{\oc})\in\zcs_\oc,\qquad l_1,l_2\in\zp.
\ee
\end{remark}

\begin{remark}
\lb{psdxdtlie}
Let $\bl$ be a Lie algebra and $m\in\zp$. 
Consider a formal power series of the form 
$$
C=\sum_{l_1,l_2,i_0,\dots,i_m\ge 0} 
(x-x_a)^{l_1} (t-t_a)^{l_2}(u_0-a_0)^{i_0}\dots(u_m-a_m)^{i_m}\cdot 
C^{l_1,l_2}_{i_0\dots i_m},\qquad\quad
C^{l_1,l_2}_{i_0\dots i_m}\in\bl.
$$
Set 
\begin{gather}
\lb{dxck}
D_x(C)=\sum_{l_1,l_2,i_0,\dots,i_m} 
D_x\big((x-x_a)^{l_1} (t-t_a)^{l_2}(u_0-a_0)^{i_0}\dots(u_m-a_m)^{i_m}\big)\cdot 
C^{l_1,l_2}_{i_0\dots i_m},\\
\lb{dtck}
D_t(C)=\sum_{l_1,l_2,i_0,\dots,i_m} 
D_t\big((x-x_a)^{l_1} (t-t_a)^{l_2}(u_0-a_0)^{i_0}\dots(u_m-a_m)^{i_m}\big)\cdot 
C^{l_1,l_2}_{i_0\dots i_m}.
\end{gather}
The expressions 
\beq
\lb{dxtua}
\begin{aligned}
D_x\big((x-x_a)^{l_1} (t-t_a)^{l_2}(u_0-a_0)^{i_0}\dots(u_m-a_m)^{i_m}\big),\\
D_t\big((x-x_a)^{l_1} (t-t_a)^{l_2}(u_0-a_0)^{i_0}\dots(u_m-a_m)^{i_m}\big)
\end{aligned}
\ee
are functions of the variables $x$, $t$, $u_k$. 
Taking the corresponding Taylor series at the point~\eqref{pointevs}, 
we view~\er{dxtua} as power series. 
Then~\er{dxck},~\er{dtck} become formal power series with coefficients in $\bl$. 

According to~\er{evdxdt}, one has  
$D_t=\frac{\pd}{\pd t}+\sum_{k\ge 0} D_x^k(F)\frac{\pd}{\pd u_k}$, 
where $F=F(x,t,u_0,\dots,u_{\eo})$ is given in~\er{eveq_intr}. 
When we apply $D_t$ in~\er{dtck}, 
we view $F$ as a power series, using the Taylor series of the function~$F$.

Let $n\in\zp$ and consider another formal power series
$$
R=\sum_{q_1,q_2,j_0,\dots,j_n\ge 0} 
(x-x_a)^{q_1} (t-t_a)^{q_2}(u_0-a_0)^{j_0}\dots(u_n-a_n)^{j_n}\cdot 
R^{q_1,q_2}_{j_0\dots j_n},\qquad\quad
R^{q_1,q_2}_{j_0\dots j_n}\in\bl.
$$
Then the Lie bracket $[C,R]$ is defined in the obvious way and is a formal power series with coefficients in~$\bl$.

According to the described procedure, the expression $D_x(R)-D_t(C)+[C,R]$ is well defined and 
is a formal power series with coefficients in~$\bl$.
\end{remark} 

\begin{remark}  
The main idea of the definition of the Lie algebra $\fd^\oc(\CE,a)$  
can be informally outlined as follows. 
According to Theorem~\ref{thnfzcr} and Remark~\ref{abcoef0}, 
any ZCR~\er{guzcr} of order~$\le\oc$ is gauge equivalent 
to a ZCR given by functions $\tilde{A}$, $\tilde{B}$ 
that are of the form~\er{aser},~\er{bser} 
and satisfy~\er{zcrtmn}, \er{ab000}.

To define $\fd^\oc(\CE,a)$ in terms of generators and relations, 
one can regard $\tilde{A}^{l_1,l_2}_{i_0\dots i_\oc}$, 
$\tilde{B}^{l_1,l_2}_{j_0\dots j_{\oc+\eo-1}}$ from~\er{aser},~\er{bser} 
as abstract symbols. 
Then one can say that the Lie algebra $\fd^\oc(\CE,a)$ is generated by the symbols 
$\tilde{A}^{l_1,l_2}_{i_0\dots i_\oc}$, $\tilde{B}^{l_1,l_2}_{j_0\dots j_{\oc+\eo-1}}$
for $l_1,l_2,i_0,\dots,i_\oc,j_0,\dots,j_{\oc+\eo-1}\in\zp$ so that 
relations for these generators are provided by equations~\er{zcrtmn}, \er{ab000}.

The details of this construction are presented below.
To avoid confusion in notation, we introduce new symbols $\ga^{l_1,l_2}_{i_0\dots i_\oc}$,
$\gb^{l_1,l_2}_{j_0\dots j_{\oc+\eo-1}}$, which will be generators of the algebra $\fd^\oc(\CE,a)$.
\end{remark}

Fix $\oc\in\zp$ and consider formal power series 
\begin{gather}
\label{gasumxt}
\ga=\sum_{l_1,l_2,i_0,\dots,i_\oc\ge 0} 
(x-x_a)^{l_1} (t-t_a)^{l_2}(u_0-a_0)^{i_0}\dots(u_\oc-a_\oc)^{i_\oc}\cdot
\ga^{l_1,l_2}_{i_0\dots i_\oc},\\
\label{gbsumxt}
\gb=\sum_{l_1,l_2,j_0,\dots,j_{\oc+\eo-1}\ge 0} 
(x-x_a)^{l_1} (t-t_a)^{l_2}(u_0-a_0)^{j_0}\dots(u_{\oc+\eo-1}-a_{\oc+\eo-1})^{j_{\oc+\eo-1}}\cdot
\gb^{l_1,l_2}_{j_0\dots j_{\oc+\eo-1}},
\end{gather}
where 
\beq
\lb{gagbll}
\ga^{l_1,l_2}_{i_0\dots i_\oc},\qquad
\gb^{l_1,l_2}_{j_0\dots j_{\oc+\eo-1}},\qquad
l_1,l_2,i_0,\dots,i_\oc,j_0,\dots,j_{\oc+\eo-1}\in\zp,
\ee
are generators of a Lie algebra, which is described below.

We impose the equation
\beq
\lb{xgbtga}
D_x(\gb)-D_t(\ga)+[\ga,\gb]=0,
\ee
which is equivalent to some Lie algebraic relations for the generators~\er{gagbll}.
The left-hand side of~\er{xgbtga} is defined by the procedure described in Remark~\ref{psdxdtlie}.
Also, we impose the following condition 
\beq
\lb{gagb00}
\ga^{l_1,l_2}_{0\dots 0}=\gb^{0,l_2}_{0\dots 0}=0,\qquad 
\ga^{l_1,l_2}_{i_0\dots i_{\oc}}=0,
\qquad (i_0,\dots,i_{\oc})\in\zcs_\oc,\qquad l_1,l_2\in\zp.
\ee

\begin{definition}
\lb{dfd}
Recall that the manifold $\CE$ is the infinite prolongation
of equation~\er{eveq_intr}, and $a\in\CE$ is given by~\er{pointevs}, 
where the constants $x_a$, $t_a$, $a_k$ are the coordinates 
of the point~$a$ in the coordinate system $x$, $t$, $u_k$.
For each $\oc\in\zp$, 
the Lie algebra $\fd^\oc(\CE,a)$ is defined in terms 
of generators and relations as follows. 
The algebra $\fd^\oc(\CE,a)$ is given by the generators~\er{gagbll},
relations~\er{gagb00}, and the relations arising from~\er{xgbtga} 
when we substitute~\er{gasumxt},~\er{gbsumxt} in~\er{xgbtga}.
\end{definition}

This description of $\fd^\oc(\CE,a)$ is sufficient for the present paper.
A more detailed definition of $\fd^\oc(\CE,a)$ is given in~\cite{scal13}.
Note that condition~\er{gagb00} is equivalent to the following equations
\begin{gather}
\lb{pdgau}
\frac{\pd\ga}{\pd u_s}\,\,\bigg|_{u_k=a_k,\ k\ge s}=0\qquad\quad
\forall\,s\ge 1,\\
\lb{gaua0}
\ga\,\Big|_{u_k=a_k,\ k\ge 0}=0,\\
\lb{gbxua0}
\gb\,\Big|_{x=x_a,\ u_k=a_k,\ k\ge 0}=0.
\end{gather}

So the algebra $\fds^\ocs(\CE,a)$ is generated by the elements~\er{gagbll}.
Theorem~\ref{lemgenfdq} below, which is proved in~\cite{scal13}, says that 
the elements~\er{gal1alprop} generate the algebra $\fds^\ocs(\ce,a)$ as well.
This fact is very useful in computations of $\fds^\ocs(\ce,a)$ 
for concrete equations, because the set of the elements~\er{gal1alprop} 
is much smaller than that of~\er{gagbll}.
We will use Theorem~\ref{lemgenfdq} in Section~\ref{seckdvtype}.

\begin{theorem}[\cite{scal13}]
\lb{lemgenfdq}
The elements 
\beq
\lb{gal1alprop}
\ga^{l_1,0}_{i_0\dots i_\ocs},\qquad\qquad l_1,i_0,\dots,i_\ocs\in\zp,
\ee
generate the algebra $\fds^\ocs(\ce,a)$.
\end{theorem}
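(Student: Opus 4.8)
Let $\mathfrak{h}\subset\fds^\ocs(\ce,a)$ be the Lie subalgebra generated by the elements $\ga^{l_1,0}_{i_0\dots i_\ocs}$ with $l_1,i_0,\dots,i_\ocs\in\zp$; that is, $\mathfrak{h}$ is generated by the coefficients of the ``initial slice'' $\ga\big|_{t=t_a}$ of the formal series~\er{gasumxt}. Since $\fds^\ocs(\ce,a)$ is generated, by Definition~\ref{dfd}, by all the $\ga^{l_1,l_2}_{i_0\dots i_\ocs}$ together with all the $\gb^{l_1,l_2}_{j_0\dots j_{\ocs+\eo-1}}$, it suffices to show that every one of these generators lies in $\mathfrak{h}$. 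The guiding principle is that the defining relation~\er{xgbtga} is evolutionary in~$t$, so the whole pair $(\ga,\gb)$ is determined --- recursively, and inside $\mathfrak{h}$ --- by the initial slice of~$\ga$. I would organize this as an induction on the $t$-degree $N$, that is, on the power of $(t-t_a)$, of the Taylor coefficients: call a formal series $P$ with coefficients in $\fds^\ocs(\ce,a)$ \emph{known up to degree $N$} if every coefficient of $P$ of $t$-degree $\le N$ lies in $\mathfrak{h}$. By construction, $\ga$ is known up to degree $0$.

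The first half of the inductive step raises the $t$-degree of $\ga$. Expanding $D_t$ as in~\er{evdxdt} and using that $\ga$ depends only on $u_0,\dots,u_\ocs$, one rewrites~\er{xgbtga} as
\[
\pd_t(\ga)=D_x(\gb)+[\ga,\gb]-\sum_{k=0}^{\ocs}D_x^k(F)\,\pd_{u_k}(\ga).
\]
The operator $D_x$ preserves $t$-degree, the bracket $[\ga,\gb]$ combines $t$-degrees additively, and $F$ is a \emph{scalar} function, so the products $D_x^k(F)\,\pd_{u_k}(\ga)$ have coefficients in the $\fik$-span of the coefficients of $\ga$; hence, if $\ga$ and $\gb$ are known up to degree $N$, the $t$-degree-$N$ part of the right-hand side has all its coefficients in $\mathfrak{h}$. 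Since the $t$-degree-$N$ coefficient of $\pd_t(\ga)$ is $(N+1)$ times the $t$-degree-$(N+1)$ coefficient of $\ga$, and $N+1$ is invertible in $\fik$, the series $\ga$ becomes known up to degree $N+1$.

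The second half recovers $\gb$. Here I would differentiate~\er{xgbtga} successively with respect to $u_{\ocs+\eo},u_{\ocs+\eo-1},\dots,u_0$, using the identity $\pd_{u_j}D_x=D_x\pd_{u_j}+\pd_{u_{j-1}}$ (with $\pd_{u_{-1}}:=0$). Differentiating once in $u_{\ocs+\eo}$ gives $\pd_{u_{\ocs+\eo-1}}(\gb)=(\pd_{u_\eo}F)\,\pd_{u_\ocs}(\ga)$, and each further differentiation expresses $\pd_{u_j}(\gb)$ through $\ga$, the scalar function $F$ (and its $D_x$-derivatives), the already-found $\pd_{u_{j'}}(\gb)$ with $j'>j$, and brackets of the form $[\pd_{u_{j+1}}(\ga),\gb]$; the normalization~\er{gagb00}, specifically the part involving $\zcs_\ocs$, forces the $u$-constant piece of each $\pd_{u_{j+1}}(\ga)$ of $t$-degree $0$ to vanish, which is what keeps this recovery triangular. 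Term-by-term integration in the $u_j$'s then determines $\gb-\gb\big|_{u_k=a_k,\,k\ge 0}$ from $\ga$ and $F$, and the purely $(x,t)$-dependent remainder $\gb\big|_{u_k=a_k,\,k\ge 0}$ is pinned down by~\er{gbxua0} together with the restriction of~\er{xgbtga} to $u_k=a_k$ for all $k$ (which uses $\ga\big|_{u_k=a_k,\,k\ge 0}=0$ from~\er{gaua0} and the already-determined first $u$-derivatives of $\gb$ at $a$), via a recursion in the powers of $(x-x_a)$. Carried out at $t$-degree $\le N+1$, this uses only $\ga$ known up to degree $N+1$ and $\gb$ known up to degree $N$. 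Applying the two halves alternately --- the second already at the start, to obtain $\gb$ at $t$-degree $0$ --- places every generator of $\fds^\ocs(\ce,a)$ in $\mathfrak{h}$, so $\mathfrak{h}=\fds^\ocs(\ce,a)$.

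The delicate step, I expect, is the recovery of $\gb$: one has to verify that the successive differentiation in $u_{\ocs+\eo},u_{\ocs+\eo-1},\dots$ is genuinely triangular --- in particular that the bracket terms $[\pd_{u_{j+1}}(\ga),\gb]$ never reintroduce an as-yet-undetermined part of $\gb$ at the current $t$-degree, which is precisely where~\er{gagb00} enters and where, for small $j$, a secondary induction on the total $(x,u)$-degree may be needed --- and that the low values $\ocs\in\{0,1\}$ and $\eo\in\{1,2\}$, for which indices such as $\ocs+\eo-1$ collide with $\ocs$ or $\eo$, are handled correctly. Finally, although $\mathfrak{h}$ has infinitely many generators, the construction exhibits each generator of $\fds^\ocs(\ce,a)$ as a \emph{finite} Lie-algebraic expression in the $\ga^{l_1,0}_{i_0\dots i_\ocs}$, so no convergence issue arises.
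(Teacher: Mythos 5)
The paper does not actually prove Theorem~\ref{lemgenfdq} here --- it is imported from~\cite{scal13} --- so your attempt can only be measured against the structure that the normalization~\er{gagb00} and the relation~\er{xgbtga} impose. Your overall architecture (induction on the power of $(t-t_a)$, using the single relation~\er{xgbtga} alternately to recover $\gb$ and to produce the next $t$-coefficient of $\ga$, with~\er{gagb00} supplying the rigidity) is the right one and is presumably that of the cited proof; the first half, raising the $t$-degree of $\ga$ once $\gb$ is known at that degree, is correct as written.

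The second half, however, has a genuine gap at exactly the point you flag, and your proposed fixes do not close it. When you differentiate~\er{xgbtga} with respect to $u_{j+1}$ for $j+1\le\ocs$ in order to reach $\pd_{u_j}(\gb)$, the resulting identity contains not only the bracket $[\pd_{u_{j+1}}(\ga),\gb]$ (which involves the still-undetermined low-$u$ part of $\gb$ at the current $t$-degree) but also $\pd_t\bigl(\pd_{u_{j+1}}(\ga)\bigr)$, i.e.\ the \emph{next} $t$-coefficient of $\ga$, which in your ordering has not yet been placed in $\mathfrak{h}$; your list of ingredients omits this term, so the recursion as described is circular in two ways, not one. Moreover, the consequence of~\er{gagb00} that you invoke --- vanishing of the $u$-constant piece of $\pd_{u_{j+1}}(\ga)$ --- is too weak to neutralize $[\pd_{u_{j+1}}(\ga),\gb]$, and the fallback of a secondary induction on total $(x,u)$-degree does not obviously run, because after writing $u_{k+1}=(u_{k+1}-a_{k+1})+a_{k+1}$ the operator $D_x$ mixes total degrees in both directions. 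The missing idea is to use the full strength of~\er{pdgau}: after applying $\pd_{u_{j+1}}$, restrict the entire identity to $u_k=a_k$ for all $k\ge j+1$. Under this restriction $\pd_{u_{j+1}}(\ga)$ vanishes identically, $t$-coefficient by $t$-coefficient, so both offending terms disappear at once, and one obtains $\pd_{u_j}(\gb)\big|_{u_k=a_k,\,k\ge j+1}$ in terms of data already in $\mathfrak{h}$; the full $\pd_{u_j}(\gb)$ is then reassembled from this restriction together with the already-known mixed derivatives $\pd_{u_{j'}}\pd_{u_j}(\gb)$ for $j'>j$. A similar use of $\ga\big|_{u_k=a_k,\,k\ge 0}=0$ and of~\er{gbxua0} at the final integration in $x$ completes the recovery of $\gb$, after which your first half applies. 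With this device inserted, your argument goes through.
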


\begin{remark}
\lb{fdrzcr}
Let $\mg$ be a finite-dimensional matrix Lie algebra.
By Theorem~\ref{thnfzcr}, for any $\mg$-valued ZCR~\er{guzcr} of order~$\le\ocs$ 
on a neighborhood of $a\in\CE$, 
there is a unique gauge transformation $G$ 
such that $G(a)=\mathrm{Id}$ and the functions~\er{tatbgu} obey
\er{d=0}, \er{aukak}, \er{bxx0}.
Furthermore, Theorem~\ref{thnfzcr} says that the functions~\er{tatbgu} take values in~$\mg$ 
and satisfy \er{tatbth}, \er{zcrtmn}.

Consider the Taylor series~\er{aser},~\er{bser} of the functions~\er{tatbgu}.
Properties~\er{d=0},~\er{aukak}, \er{bxx0} are equivalent to~\er{ab000}.
Properties~\er{zcrtmn},~\er{ab000} imply that the following homomorphism 
\beq
\lb{hrfdef}
\hrf\cl\fds^\ocs(\CE,a)\to\mg,\qquad
\hrf\big(\ga^{l_1,l_2}_{i_0\dots i_\ocs}\big)=
\tilde{A}^{l_1,l_2}_{i_0\dots i_\ocs},\qquad
\hrf\big(\gb^{l_1,l_2}_{j_0\dots j_{\ocs+\eo-1}}\big)=
\tilde{B}^{l_1,l_2}_{j_0\dots j_{\ocs+\eo-1}},
\ee
is well defined.
Here 
$\tilde{A}^{l_1,l_2}_{i_0\dots i_\ocs},
\tilde{B}^{l_1,l_2}_{j_0\dots j_{\ocs+\eo-1}}\in\mg$ 
are the coefficients of the power series~\er{aser},~\er{bser}.
The definition~\er{hrfdef} of~$\hrf$ implies that
the ZCR given by~\er{aser},~\er{bser} takes
values in the Lie subalgebra $\hrf\big(\fds^\ocs(\CE,a)\big)\subset\mg$.

It is shown in~\cite{scal13} that the ZCR~\er{guzcr} is uniquely determined (up to gauge equivalence)
by the corresponding homomorphism $\hrf\cl\fds^\ocs(\CE,a)\to\mg$.

On the other hand, consider an arbitrary homomorphism $\tilde{\hrf}\cl\fds^\ocs(\CE,a)\to\mg$.
Applying $\tilde{\hrf}$ to the coefficients of the power series \er{gasumxt}, \er{gbsumxt}, 
we get the following power series with coefficients in~$\mg$
\begin{gather}
\label{hraser}
\anA=\sum_{l_1,l_2,i_0,\dots,i_\ocs} 
(x-x_a)^{l_1} (t-t_a)^{l_2}(u_0-a_0)^{i_0}\dots(u_\ocs-a_\ocs)^{i_\ocs}\cdot
\tilde{\hrf}\big(\ga^{l_1,l_2}_{i_0\dots i_\ocs}\big),\\
\lb{hrbser}
\anB=\sum_{l_1,l_2,j_0,\dots,j_{\ocs+\eo-1}} 
(x-x_a)^{l_1} (t-t_a)^{l_2}(u_0-a_0)^{j_0}\dots(u_{\ocs+\eo-1}-a_{\ocs+\eo-1})^{j_{\ocs+\eo-1}}\cdot
\tilde{\hrf}\big(\gb^{l_1,l_2}_{j_0\dots j_{\ocs+\eo-1}}\big).
\end{gather}
Since \er{gasumxt}, \er{gbsumxt} obey~\er{xgbtga}, 
the power series~\er{hraser},~\er{hrbser} satisfy $D_x(\anB)-D_t(\anA)+[\anA,\anB]=0$.
Using Definition~\ref{deffzcr}, 
we can say that the formal power series~\er{hraser},~\er{hrbser}
constitute a formal ZCR of order~$\le\oc$ with coefficients in~$\mg$.
If the power series~\er{hraser},~\er{hrbser} converge to analytic functions, 
then they constitute a $\mg$-valued ZCR of order~$\le\oc$.

The described correspondence between $\mg$-valued ZCRs 
and homomorphisms $\hrf\cl\fds^\ocs(\CE,a)\to\mg$
allows one to say that the algebra $\fd^{\oc}(\CE,a)$ is responsible for ZCRs of order~$\le\oc$.
\end{remark}

Suppose that $\oc\ge 1$. According to Definition~\ref{dfd}, 
to define the algebra $\fds^\ocs(\CE,a)$, we take formal power series~\er{gasumxt},~\er{gbsumxt}
and impose conditions~\er{xgbtga},~\er{gagb00}.
The Lie algebra $\fds^\ocs(\CE,a)$ is given by the generators 
$\ga^{l_1,l_2}_{i_0\dots i_\oc}$, $\gb^{l_1,l_2}_{j_0\dots j_{\oc+\eo-1}}$
and the relations arising from~\er{xgbtga},~\er{gagb00}. 
Similarly, to define the algebra $\fds^{\oc-1}(\CE,a)$, we take formal power series
\begin{gather*}
\hat\ga=\sum_{l_1,l_2,i_0,\dots,i_{\oc-1}} 
(x-x_a)^{l_1} (t-t_a)^{l_2}(u_0-a_0)^{i_0}\dots(u_{\oc-1}-a_{\oc-1})^{i_{\oc-1}}\cdot
\hat\ga^{l_1,l_2}_{i_0\dots i_{\oc-1}},\\
\hat\gb=\sum_{l_1,l_2,j_0,\dots,j_{\oc+\eo-2}} 
(x-x_a)^{l_1} (t-t_a)^{l_2}(u_0-a_0)^{j_0}\dots(u_{\oc+\eo-2}-a_{\oc+\eo-2})^{j_{\oc+\eo-2}}
\cdot\hat\gb^{l_1,l_2}_{j_0\dots j_{\oc+\eo-2}}
\end{gather*}
and impose the following conditions
\begin{gather}
\lb{hatgzcr}
D_x\big(\hat\gb\big)-D_t\big(\hat\ga\big)+\big[\hat\ga,\hat\gb\big]=0,\\
\lb{hatga0}
\hat\ga^{l_1,l_2}_{0\dots 0}=\hat\gb^{0,l_2}_{0\dots 0}=0,\qquad 
\hat\ga^{l_1,l_2}_{i_0\dots i_{\oc-1}}=0,
\qquad (i_0,\dots,i_{\oc-1})\in\zcs_{\oc-1},\qquad l_1,l_2\in\zp.
\end{gather}
The Lie algebra $\fds^{\oc-1}(\CE,a)$ is given by the generators 
$\hat\ga^{l_1,l_2}_{i_0\dots i_{\oc-1}}$, $\hat\gb^{l_1,l_2}_{j_0\dots j_{\oc+\eo-2}}$
and the relations arising from~\er{hatgzcr},~\er{hatga0}.

This implies that the map 
\beq
\lb{fdhff}
\ga^{l_1,l_2}_{i_0\dots i_{\oc-1}i_{\oc}}\,\mapsto\,
\delta_{0,i_{\oc}}\cdot\hat\ga^{l_1,l_2}_{i_0\dots i_{\oc-1}},\qquad\quad 
\gb^{l_1,l_2}_{j_0\dots j_{\oc+\eo-2}j_{\oc+\eo-1}}\,\mapsto\,
\delta_{0,j_{\oc+\eo-1}}\cdot\hat\gb^{l_1,l_2}_{j_0\dots j_{\oc+\eo-2}}  
\ee
determines a surjective homomorphism $\fds^{\oc}(\CE,a)\to\fds^{\oc-1}(\CE,a)$.
Here $\delta_{0,i_{\oc}}$ and $\delta_{0,j_{\oc+\eo-1}}$ are the Kronecker deltas.
We denote this homomorphism by $\vf_\oc\cl\fd^\oc(\CE,a)\to\fd^{\oc-1}(\CE,a)$.

According to Remark~\ref{fdrzcr}, the algebra $\fd^{\oc}(\CE,a)$ 
is responsible for ZCRs of order~$\le\oc$, 
and the algebra $\fd^{\oc-1}(\CE,a)$ is responsible for ZCRs of order~$\le\oc-1$.    
The constructed homomorphism $\vf_\oc\cl\fd^\oc(\CE,a)\to\fd^{\oc-1}(\CE,a)$ 
reflects the fact that any ZCR of order~$\le\oc-1$ is at the same time of order~$\le\oc$.
Thus we obtain the following sequence of surjective homomorphisms of Lie algebras
\beq
\lb{fdnn-1}
\dots\xrightarrow{\vf_{\oc+1}}
\fd^{\oc}(\CE,a)\xrightarrow{\vf_\oc}\fd^{\oc-1}(\CE,a)
\xrightarrow{\vf_{\oc-1}}\dots\xrightarrow{\vf_{2}}
\fd^1(\CE,a)\xrightarrow{\vf_{1}}\fd^0(\CE,a).
\ee

\section{Some results on $\fds^\ocs(\CE,a)$ for equations~\er{utukd}}
\lb{seckdvtype}

In this section we study the algebras~\er{fdnn-1} for equations 
of the form~\er{utukd},
where $f=f(x,t,u_0,\dots,u_{2\kd-1})$ is an arbitrary function 
and $\kd\in\{1,2,3\}$. 

Let $\CE$ be the infinite prolongation of equation~\er{utukd}. 
According to Definition~\ref{dipr}, $\CE$ is an open subset of the space $\fik^\infty$  
with the coordinates $x$, $t$, $u_k$ for $k\in\zp$. 
For equation~\er{utukd}, the total derivative operators~\er{evdxdt} are 
\beq
\label{kdvdxdt}
  D_x=\frac{\pd}{\pd x}+\sum_{k\ge 0} u_{k+1}\frac{\pd}{\pd u_k},\qquad\qquad
  D_t=\frac{\pd}{\pd t}+\sum_{k\ge 0} 
  D_x^k\big(u_{2\kd+1}+f(x,t,u_0,\dots,u_{2\kd-1})\big)\frac{\pd}{\pd u_k}.
\ee
Consider an arbitrary point $a\in\CE$ given by~\er{pointevs}, 
where the constants $x_a$, $t_a$, $a_k$ are the coordinates 
of~$a$ in the coordinate system $x$, $t$, $u_k$.

Let $\ocs\in\zsp$ such that $\ocs\ge\kd+\delta_{\kd,3}$, where $\delta_{\kd,3}$ is the Kronecker delta.
According to Definition~\ref{dfd}, the algebra $\fds^\ocs(\CE,a)$ can be described as follows.
Consider formal power series 
\begin{gather}
\label{kdvgaser}
\ga=\sum_{l_1,l_2,i_0,\dots,i_\ocs\ge 0} 
(x-x_a)^{l_1} (t-t_a)^{l_2}(u_0-a_0)^{i_0}\dots(u_\ocs-a_\ocs)^{i_\ocs}\cdot
\ga^{l_1,l_2}_{i_0\dots i_\ocs},\\
\lb{kdvgbser}
\gb=\sum_{l_1,l_2,j_0,\dots,j_{\ocs+2\kd}\ge 0} 
(x-x_a)^{l_1} (t-t_a)^{l_2}(u_0-a_0)^{j_0}\dots(u_{\ocs+2\kd}-a_{\ocs+2\kd})^{j_{\ocs+2\kd}}\cdot
\gb^{l_1,l_2}_{j_0\dots j_{\ocs+2\kd}}
\end{gather}
satisfying 
\begin{gather}
\label{relgakdv}
\ga^{l_1,l_2}_{i_0\dots i_\ocs}=0\quad \text{if}\quad \exists\,r\in\{1,\dots,\ocs\}\quad \text{such that}\quad 
i_r=1,\quad i_m=0\quad \forall\,m>r,\\
\lb{kdvaukak}
\ga^{l_1,l_2}_{0\dots 0}=0\qquad\qquad\forall\,l_1,l_2\in\zp,\\
\lb{kdvbxx0}
\gb^{0,l_2}_{0\dots 0}=0\qquad\qquad\forall\,l_2\in\zp.
\end{gather}
Then $\ga^{l_1,l_2}_{i_0\dots i_\ocs}$, $\gb^{l_1,l_2}_{j_0\dots j_{\ocs+2\kd}}$ 
are generators of the Lie algebra $\fds^\ocs(\CE,a)$, and the equation 
\beq
\label{gckdv}
D_x(\gb)-D_t(\ga)+[\ga,\gb]=0
\ee
provides relations for these generators (in addition to 
relations~\er{relgakdv},~\er{kdvaukak},~\er{kdvbxx0}). 

Condition~\er{relgakdv} is equivalent to
\beq
\label{kdvd=0}
\frac{\pd}{\pd u_s}(\ga)\,\,\bigg|_{u_k=a_k,\ k\ge s}=0\qquad\qquad
\forall\,s\ge 1.
\ee
Using~\er{kdvdxdt}, one can rewrite equation~\er{gckdv} as
\beq
\lb{kdvqzcr}
\frac{\pd}{\pd x}(\gb)+\sum_{k=0}^{\ocs+2\kd} u_{k+1}\frac{\pd}{\pd u_k}(\gb)+[\ga,\gb]
=\frac{\pd}{\pd t}(\ga)+\sum_{k=0}^\ocs
\Big(u_{k+2\kd+1}+D_x^k\big(f(x,t,u_0,\dots,u_{2\kd-1})\big)\Big)\frac{\pd}{\pd u_k}(\ga).
\ee
Here we view $f(x,t,u_0,\dots,u_{2\kd-1})$ as a power series, 
using the Taylor series of the function~$f$ at the point~\er{pointevs}.
Differentiating~\eqref{kdvqzcr} with respect to $u_{\ocs+2\kd+1}$, we obtain 
\beq
\lb{uq2buqa}
\frac{\pd}{\pd u_{\ocs+2\kd}}(\gb)=\frac{\pd}{\pd u_\ocs}(\ga).
\ee
Since $\kd\in\{1,2,3\}$, from~\er{uq2buqa} it follows that $\gb$ is of the form
\begin{equation}
\label{buq2a}
\gb=u_{\ocs+2\kd}\frac{\pd}{\pd u_\ocs}(\ga)+\gb_0(x,t,u_0,\dots,u_{\ocs+2\kd-1}),
\end{equation}
where $\gb_0(x,t,u_0,\dots,u_{\ocs+2\kd-1})$ is a power series in the variables 
$$
x-x_a,\qquad t-t_a,\qquad u_0-a_0,\qquad \dots,\qquad u_{\ocs+2\kd-1}-a_{\ocs+2\kd-1}.
$$ 

Differentiating~\eqref{kdvqzcr} with respect to $u_{\ocs+2\kd}$, $u_{\ocs+i}$ 
for $i=1,\dots,2\kd-1$ and using~\er{buq2a}, one gets 
\begin{equation*}
\frac{\pd^2}{\pd u_{\ocs}\pd u_{\ocs}}(\ga)+\frac{\pd^2}{\pd u_{\ocs+1}\pd u_{\ocs+2\kd-1}}(\gb_0)=0,\qquad
\frac{\pd^2}{\pd u_{\ocs+s}\pd u_{\ocs+2\kd-1}}(\gb_0)=0,\qquad 2\le s\le 2\kd-1.
\end{equation*}
Therefore, $\gb_0=\gb_0(x,t,u_0,\dots,u_{\ocs+2\kd-1})$ is of the form 
\beq
\label{buq2b}
\gb_0=u_{\ocs+1}u_{\ocs+2\kd-1}\Big(\frac12\delta_{\kd,1}-1\Big)
\frac{\pd^2}{\pd u_{\ocs}\pd u_{\ocs}}(\ga)
+u_{\ocs+2\kd-1}\gb_{01}(x,t,u_0,\dots,u_{\ocs})+\gb_{00}(x,t,u_0,\dots,u_{\ocs+2\kd-2}).
\ee
Here 
$\gb_{01}(x,t,u_0,\dots,u_{\ocs})$ is a power series 
in the variables 
$$
x-x_a,\quad t-t_a,\quad u_0-a_0,\quad\dots,\quad u_{\ocs}-a_{\ocs},
$$ 
and $\gb_{00}(x,t,u_0,\dots,u_{\ocs+2\kd-2})$ is a power series 
in the variables 
$$
x-x_a,\quad t-t_a,\quad 
u_0-a_0,\quad\dots,\quad u_{\ocs+2\kd-2}-a_{\ocs+2\kd-2}.
$$

\begin{lemma}
Recall that $\kd\in\{1,2,3\}$ and $\ocs\ge\kd+\delta_{\kd,3}$. We have  
\beq
\lb{dxpd2aun}
D_x\Big(\frac{\pd^2}{\pd u_{\ocs}\pd u_{\ocs}}(\ga)\Big)+
\Big[\ga,\,\frac{\pd^2}{\pd u_{\ocs}\pd u_{\ocs}}(\ga)\Big]=0.
\ee
\end{lemma}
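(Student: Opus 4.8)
The plan is to derive \er{dxpd2aun} from the defining relation \er{gckdv} (equivalently \er{kdvqzcr}) by continuing the differentiation analysis that produced \er{buq2a} and \er{buq2b}. I will repeatedly use the commutation rule $\frac{\pd}{\pd u_m}\circ D_x=D_x\circ\frac{\pd}{\pd u_m}+\frac{\pd}{\pd u_{m-1}}$ for $m\ge 1$, the Leibniz rule $\frac{\pd}{\pd u_m}[\ga,R]=\big[\frac{\pd\ga}{\pd u_m},R\big]+\big[\ga,\frac{\pd R}{\pd u_m}\big]$, and the facts that $\ga$ involves only $x,t$ and the $u_k$ with $k\le\ocs$, that $\gb_0$ involves only $x,t$ and the $u_k$ with $k\le\ocs+2\kd-1$, and that $D_x^k(f)$ involves only $x,t$ and the $u_j$ with $j\le 2\kd-1+k$ and is affine in its top variable $u_{2\kd-1+k}$ with coefficient $\frac{\pd f}{\pd u_{2\kd-1}}$. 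Differentiating \er{kdvqzcr} with respect to $u_{\ocs+2\kd}$ and using \er{buq2a} (so that $\frac{\pd\gb}{\pd u_{\ocs+2\kd}}=\frac{\pd\ga}{\pd u_\ocs}$), and noting that the only term of $D_t(\ga)$ depending on $u_{\ocs+2\kd}$ is $u_{\ocs+2\kd}\frac{\pd\ga}{\pd u_{\ocs-1}}$ while $D_x^k(f)$ is independent of $u_{\ocs+2\kd}$ for $k\le\ocs$, one obtains
\beq
\lb{pdstep1}
D_x\Big(\frac{\pd\ga}{\pd u_\ocs}\Big)+\Big[\ga,\frac{\pd\ga}{\pd u_\ocs}\Big]+\frac{\pd\gb_0}{\pd u_{\ocs+2\kd-1}}=\frac{\pd\ga}{\pd u_{\ocs-1}}.
\ee

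Next, differentiate \er{pdstep1} with respect to $u_\ocs$. By the commutation rule the first term becomes $D_x\big(\frac{\pd^2}{\pd u_\ocs\pd u_\ocs}(\ga)\big)+\frac{\pd^2}{\pd u_{\ocs-1}\pd u_\ocs}(\ga)$; by the Leibniz rule and $\big[\frac{\pd\ga}{\pd u_\ocs},\frac{\pd\ga}{\pd u_\ocs}\big]=0$ the bracket becomes $\big[\ga,\frac{\pd^2}{\pd u_\ocs\pd u_\ocs}(\ga)\big]$; and the right-hand side becomes $\frac{\pd^2}{\pd u_{\ocs-1}\pd u_\ocs}(\ga)$, which cancels. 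Hence
\beq
\lb{pdstep2}
D_x\Big(\frac{\pd^2}{\pd u_\ocs\pd u_\ocs}(\ga)\Big)+\Big[\ga,\frac{\pd^2}{\pd u_\ocs\pd u_\ocs}(\ga)\Big]=-\frac{\pd^2\gb_0}{\pd u_\ocs\,\pd u_{\ocs+2\kd-1}},
\ee
so that \er{dxpd2aun} is equivalent to $\frac{\pd^2\gb_0}{\pd u_\ocs\,\pd u_{\ocs+2\kd-1}}=0$. By \er{buq2b} (differentiating the term quadratic in $u_{\ocs+1}$ when $\kd=1$, where $u_{\ocs+2\kd-1}=u_{\ocs+1}$), in all three cases $\frac{\pd\gb_0}{\pd u_{\ocs+2\kd-1}}=-u_{\ocs+1}\frac{\pd^2}{\pd u_\ocs\pd u_\ocs}(\ga)+\gb_{01}$ with $\gb_{01}=\gb_{01}(x,t,u_0,\dots,u_\ocs)$; consequently $\frac{\pd^2\gb_0}{\pd u_\ocs\,\pd u_{\ocs+2\kd-1}}=-u_{\ocs+1}\frac{\pd^3\ga}{\pd u_\ocs^3}+\frac{\pd\gb_{01}}{\pd u_\ocs}$ is affine in $u_{\ocs+1}$ with coefficients power series in $x-x_a,t-t_a,u_0-a_0,\dots,u_\ocs-a_\ocs$, so it vanishes if and only if $\frac{\pd^3\ga}{\pd u_\ocs^3}=0$ and $\frac{\pd\gb_{01}}{\pd u_\ocs}=0$.

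To establish these two identities one returns to \er{kdvqzcr} and compares the coefficients of suitable monomials in $u_{\ocs+1},u_{\ocs+2},\dots$. When $\kd=1$ this is immediate: $\gb_{00}$ then involves only the $u_k$ with $k\le\ocs$ and $D_t(\ga)$ is affine in $u_{\ocs+1}$ (because $\ocs\ge 1$ puts $u_{2\kd-1}=u_1$ strictly below $u_{\ocs+1}$), so the coefficient of $u_{\ocs+1}^3$ in \er{kdvqzcr} reads $-\tfrac12\frac{\pd^3\ga}{\pd u_\ocs^3}=0$; then, using $\frac{\pd^3\ga}{\pd u_\ocs^3}=0$, the coefficient of $u_{\ocs+1}^2$ expresses $\frac{\pd\gb_{01}}{\pd u_\ocs}$ as $\tfrac12$ of the left-hand side of \er{dxpd2aun}, which by \er{pdstep2} equals $-\tfrac12\frac{\pd\gb_{01}}{\pd u_\ocs}$, whence $\frac{\pd\gb_{01}}{\pd u_\ocs}=0$ and \er{dxpd2aun} follows. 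The hard part is the case $\kd\ge 2$: here $\gb_{00}=\gb_{00}(x,t,u_0,\dots,u_{\ocs+2\kd-2})$ genuinely depends on $u_{\ocs+1},\dots,u_{\ocs+2\kd-2}$, so before the coefficient comparison one must pin down the relevant part of $\gb_{00}$ by a separate structural analysis, carried out by differentiating \er{kdvqzcr} with respect to $u_{\ocs+1},\dots,u_{\ocs+2\kd-2}$ in the spirit of the derivation of \er{buq2b}; moreover one must check that the terms $D_x^k(f)\frac{\pd\ga}{\pd u_k}$ of $D_t(\ga)$ do not interfere with these coefficients, and this is exactly where the hypothesis $\ocs\ge\kd+\delta_{\kd,3}$ enters, since it keeps every jet variable occurring in $f$ strictly below $u_{\ocs+1}$ — the two borderline cases $(\kd,\ocs)=(2,2)$ and $(\kd,\ocs)=(3,4)$, where $2\kd-1=\ocs+1$, being treated by hand via the $u_{\ocs+1}$-expansion of $\frac{\pd f}{\pd u_{\ocs+1}}\cdot\frac{\pd\ga}{\pd u_\ocs}$.
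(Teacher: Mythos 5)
Your reduction is clean and correct as far as it goes: differentiating \er{kdvqzcr} by $u_{\ocs+2\kd}$ and then by $u_{\ocs}$ does give \er{pdstep2}, and since $\frac{\pd}{\pd u_{\ocs+2\kd-1}}(\gb_0)=-u_{\ocs+1}\frac{\pd^2}{\pd u_\ocs\pd u_\ocs}(\ga)+\gb_{01}$ in all three cases, the lemma is indeed equivalent to the pair of identities $\frac{\pd^3}{\pd u_\ocs^3}(\ga)=0$ and $\frac{\pd}{\pd u_\ocs}(\gb_{01})=0$. Your treatment of $\kd=1$ (coefficient of $u_{\ocs+1}^3$, then of $u_{\ocs+1}^2$, then the self-consistency $\frac{\pd}{\pd u_\ocs}(\gb_{01})=-\tfrac12\frac{\pd}{\pd u_\ocs}(\gb_{01})$) is complete and correct; the factor $\tfrac32$ you encounter is exactly the one in the paper's identity \er{pdpdpdzcr}.

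The genuine gap is that the cases $\kd=2$ and $\kd=3$ — two thirds of the lemma, and the ones requiring most of the work in the paper — are not proved but only announced as ``the hard part'' with a plan. Concretely, three things remain unverified. First, the structural analysis of $\gb_{00}$ that you defer is not a formality: the paper has to derive \er{gagbpd2} for $\kd=2$ and the further decomposition \er{buq2bq3} together with \er{gagbpd2q3} for $\kd=3$ before any coefficient comparison can be made, and these identities feed back into the final cancellation. Second, your route needs $\frac{\pd^3}{\pd u_\ocs^3}(\ga)=0$ as an \emph{independent} input; in the paper this fact is only obtained afterwards (Lemma~\ref{ga2uk}) \emph{as a consequence} of \er{dxpd2aun}, so you cannot borrow it, and for $\kd\ge 2$ there is no monomial as clean as $u_{\ocs+1}^3$ isolating it — the candidate monomial $u_{\ocs+1}^2u_{\ocs+3}$ (for $\kd=2$) receives a priori contributions from $D_x(\gb_{00})$ and, in the borderline case $\ocs=2$, from $f_{u_3u_3u_3}$ via $D_x^{\ocs}(f)\frac{\pd}{\pd u_\ocs}(\ga)$. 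Third, your parenthetical claim that $\ocs\ge\kd+\delta_{\kd,3}$ keeps every jet variable of $f$ strictly below $u_{\ocs+1}$ is false for $\kd=2,3$ (that would require $\ocs\ge 2\kd-1$), which you acknowledge, but the promised by-hand treatment of $(\kd,\ocs)=(2,2)$ and $(3,4)$ is not supplied. By contrast, the paper avoids all of this by applying the alternating-sign operators $\sum_j(-1)^j c_j\frac{\pd^2}{\pd u_{\ocs+j}\pd u_{\ocs+2\kd-j}}$ to the ZCR: in that combination the second derivatives of $\gb$ sum to a multiple of $\frac{\pd^2}{\pd u_\ocs\pd u_\ocs}(\ga)$ exactly, so no third derivative of $\ga$ ever needs to be controlled. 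To complete your argument you would either have to carry out the deferred analysis in full for $\kd=2,3$, or switch to the paper's choice of operators for those cases.
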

\begin{proof}
Since $D_x=\frac{\pd}{\pd x}+\sum_{k\ge 0} u_{k+1}\frac{\pd}{\pd u_k}$, one has
\beq
\lb{undxq}
\frac{\pd}{\pd u_n}\big(D_x(Q)\big)=D_x\Big(\frac{\pd}{\pd u_n}(Q)\Big)
+\frac{\pd}{\pd u_{n-1}}(Q)\qquad\quad\forall\,n\in\zsp
\ee
for any $Q=Q(x,t,u_0,u_1,\dots,u_l)$. Here $Q$ is either a function or a power series.

In what follows we sometimes use the notation 
\beq
\notag
\gb_{u_{n}}=\frac{\pd}{\pd u_{n}}(\gb),\qquad
\gb_{u_{m}u_{n}}=\frac{\pd^2}{\pd u_{m}\pd u_{n}}(\gb),
\quad\qquad m,n\in\zp.
\ee
Using~\er{undxq}, one gets
\beq
\lb{umnb}
\frac{\pd^2}{\pd u_{m}\pd u_{n}}\big(D_x(\gb)\big)=
\frac{\pd}{\pd u_m}\Big(D_x\Big(\frac{\pd}{\pd u_n}(\gb)\Big)+
\frac{\pd}{\pd u_{n-1}}(\gb)\Big)
=D_x(\gb_{u_{m}u_{n}})+
\gb_{u_{m-1}u_{n}}+\gb_{u_{m}u_{n-1}}\quad\,
\forall\,m,n\in\zsp.
\ee
We will need also the formula
\beq
\lb{dtga}
D_t(\ga)=\frac{\pd}{\pd t}(\ga)+\sum_{k=0}^\ocs
\Big(u_{k+2\kd+1}+D_x^k\big(f(x,t,u_0,\dots,u_{2\kd-1})\big)\Big)\frac{\pd}{\pd u_k}(\ga),
\ee
which follows from~\er{kdvdxdt}.

Consider the case $\kd=1$. Then, by our assumption, $\ocs\ge 1$.
Equation~\er{utukd} reads $u_t=u_{3}+f(x,t,u_0,u_1)$. 
According to~\er{buq2a},~\er{buq2b}, for $\kd=1$ one has 
\beq
\lb{gbq1}
\gb=u_{\ocs+2}\frac{\pd}{\pd u_\ocs}(\ga)
-\frac12(u_{\ocs+1})^2\frac{\pd^2}{\pd u_{\ocs}\pd u_{\ocs}}(\ga)
+u_{\ocs+1}\gb_{01}(x,t,u_0,\dots,u_{\ocs})+\gb_{00}(x,t,u_0,\dots,u_{\ocs}).
\ee
Since we assume $\kd=1$ and $\ocs\ge 1$, formula~\er{dtga} implies 
\beq
\lb{pddtaq1}
\frac{\pd^2}{\pd u_{\ocs}\pd u_{\ocs+2}}\big(D_t(\ga)\big)=
\frac{\pd^2}{\pd u_{\ocs}\pd u_{\ocs-1}}(\ga),\qquad\quad
\frac{\pd^2}{\pd u_{\ocs+1}\pd u_{\ocs+1}}\big(D_t(\ga)\big)=0.
\ee
Using~\er{umnb},~\er{gbq1},~\er{pddtaq1}, one obtains  
\begin{multline}
\lb{pdpdpdzcr}
\Big(\frac{\pd^2}{\pd u_{\ocs}\pd u_{\ocs+2}}
-\frac12\frac{\pd^2}{\pd u_{\ocs+1}\pd u_{\ocs+1}}\Big)
\Big(D_x(\gb)-D_t(\ga)+[\ga,\gb]\Big)=\\
=D_x\Big(\gb_{u_{\ocs}u_{\ocs+2}}-\frac12\gb_{u_{\ocs+1}u_{\ocs+1}}\Big)
+\gb_{u_{\ocs-1}u_{\ocs+2}}-
\frac{\pd^2}{\pd u_{\ocs}\pd u_{\ocs-1}}(\ga)+
\Big[\ga,\,\gb_{u_{\ocs}u_{\ocs+2}}-\frac12\gb_{u_{\ocs+1}u_{\ocs+1}}\Big]
=\\
=\frac32\bigg(D_x\Big(\frac{\pd^2}{\pd u_{\ocs}\pd u_{\ocs}}(\ga)\Big)+
\Big[\ga,\,\frac{\pd^2}{\pd u_{\ocs}\pd u_{\ocs}}(\ga)\Big]\bigg).
\end{multline}
Since $D_x(\gb)-D_t(\ga)+[\ga,\gb]=0$ by~\er{gckdv}, 
equation~\er{pdpdpdzcr} 
implies~\er{dxpd2aun} in the case $\kd=1$. 

Now let $\kd=2$. Then $\ocs\ge 2$. 
Equation~\er{utukd} reads 
$u_t=u_{5}+f(x,t,u_0,u_1,u_2,u_3)$.
Using~\er{buq2a},~\er{buq2b}, for $\kd=2$ one obtains 
\beq
\lb{gbq2}
\gb=u_{\ocs+4}\frac{\pd}{\pd u_\ocs}(\ga)
-u_{\ocs+1}u_{\ocs+3}\frac{\pd^2}{\pd u_{\ocs}\pd u_{\ocs}}(\ga)
+u_{\ocs+3}\gb_{01}(x,t,u_0,\dots,u_{\ocs})+\gb_{00}(x,t,u_0,\dots,u_{\ocs+2}).
\ee
Applying the 
operator~$\dfrac{\pd^2}{\pd u_{\ocs+2}\pd u_{\ocs+3}}$ 
to equation~\er{kdvqzcr} and using~\er{gbq2}, we get 
\beq
\lb{gagbpd2}
\frac{\pd^2}{\pd u_{\ocs+2}\pd u_{\ocs+2}}(\gb_{00})-
\frac{\pd^2}{\pd u_{\ocs}\pd u_{\ocs}}(\ga)=0.
\ee
Since we assume $\kd=2$ and $\ocs\ge 2$, formula~\er{dtga} implies
\begin{gather}
\lb{ddtaq21}
\frac{\pd^2}{\pd u_{\ocs}\pd u_{\ocs+4}}\big(D_t(\ga)\big)=
\frac{\pd^2}{\pd u_{\ocs}\pd u_{\ocs-1}}(\ga),\\
\lb{ddtaq22}
\Big(-\frac{\pd^2}{\pd u_{\ocs+1}\pd u_{\ocs+3}}
+\frac12\frac{\pd^2}{\pd u_{\ocs+2}\pd u_{\ocs+2}}\Big)\big(D_t(\ga)\big)=0.
\end{gather}
Using~\er{umnb},~\er{gbq2},~\er{gagbpd2},~\er{ddtaq21},~\er{ddtaq22},
one obtains  
\begin{multline}
\lb{pdpdpdzcrq2}
\Big(\frac{\pd^2}{\pd u_{\ocs}\pd u_{\ocs+4}}-\frac{\pd^2}{\pd u_{\ocs+1}\pd u_{\ocs+3}}
+\frac12\frac{\pd^2}{\pd u_{\ocs+2}\pd u_{\ocs+2}}\Big)
\Big(D_x(\gb)-D_t(\ga)+[\ga,\gb]\Big)=\\
=D_x\Big(\gb_{u_{\ocs}u_{\ocs+4}}-\gb_{u_{\ocs+1}u_{\ocs+3}}+
\frac12\gb_{u_{\ocs+2}u_{\ocs+2}}\Big)
+\gb_{u_{\ocs-1}u_{\ocs+4}}-
\frac{\pd^2}{\pd u_{\ocs}\pd u_{\ocs-1}}(\ga)+\\
+\Big[\ga,\,\gb_{u_{\ocs}u_{\ocs+4}}-\gb_{u_{\ocs+1}u_{\ocs+3}}+
\frac12\gb_{u_{\ocs+2}u_{\ocs+2}}\Big]=
\\
=\frac52\bigg(D_x\Big(\frac{\pd^2}{\pd u_{\ocs}\pd u_{\ocs}}(\ga)\Big)+
\Big[\ga,\,\frac{\pd^2}{\pd u_{\ocs}\pd u_{\ocs}}(\ga)\Big]\bigg).
\end{multline}
As $D_x(\gb)-D_t(\ga)+[\ga,\gb]=0$, equation~\er{pdpdpdzcrq2} 
yields~\er{dxpd2aun} in the case $\kd=2$. 

Finally, consider the case $\kd=3$. Then $\ocs\ge 4$.

Equation~\er{utukd} reads 
$u_t=u_{7}+f(x,t,u_0,u_1,u_2,u_3,u_4,u_5)$.
According to~\er{buq2a},~\er{buq2b}, for $\kd=3$ we have  
\beq
\lb{gbq3}
\gb=u_{\ocs+6}\frac{\pd}{\pd u_\ocs}(\ga)
-u_{\ocs+1}u_{\ocs+5}\frac{\pd^2}{\pd u_{\ocs}\pd u_{\ocs}}(\ga)
+u_{\ocs+5}\gb_{01}(x,t,u_0,\dots,u_{\ocs})+\gb_{00}(x,t,u_0,\dots,u_{\ocs+4}).
\ee
Differentiating~\eqref{kdvqzcr} with respect to $u_{\ocs+5}$, $u_{\ocs+i}$ 
for $i=2,3,4$ and using~\er{gbq3}, one gets 
\begin{equation*}
-\frac{\pd^2}{\pd u_{\ocs}\pd u_{\ocs}}(\ga)+\frac{\pd^2}{\pd u_{\ocs+2}\pd u_{\ocs+4}}(\gb_{00})=0,
\qquad
\frac{\pd^2}{\pd u_{\ocs+3}\pd u_{\ocs+4}}(\gb_{00})=
\frac{\pd^2}{\pd u_{\ocs+4}\pd u_{\ocs+4}}(\gb_{00})=0.
\end{equation*}
Therefore, $\gb_{00}=\gb_{00}(x,t,u_0,\dots,u_{\ocs+4})$ is of the form 
\beq
\label{buq2bq3}
\gb_{00}=u_{\ocs+2}u_{\ocs+4}\frac{\pd^2}{\pd u_{\ocs}\pd u_{\ocs}}(\ga)
+u_{\ocs+4}\gb_{001}(x,t,u_0,\dots,u_{\ocs+1})+\gb_{000}(x,t,u_0,\dots,u_{\ocs+3})
\ee
for some $\gb_{001}(x,t,u_0,\dots,u_{\ocs+1})$ and $\gb_{000}(x,t,u_0,\dots,u_{\ocs+3})$.

Applying the operator~$\dfrac{\pd^2}{\pd u_{\ocs+3}\pd u_{\ocs+4}}$ 
to equation~\er{kdvqzcr} and using~\er{gbq3},~\er{buq2bq3}, we obtain 
\beq
\lb{gagbpd2q3}
\frac{\pd^2}{\pd u_{\ocs+3}\pd u_{\ocs+3}}(\gb_{000})+
\frac{\pd^2}{\pd u_{\ocs}\pd u_{\ocs}}(\ga)=0.
\ee
Since we consider the case when $\kd=3$ and $\ocs\ge 4$, formula~\er{dtga} implies
\begin{gather}
\lb{ddtaq31}
\frac{\pd^2}{\pd u_{\ocs}\pd u_{\ocs+6}}\big(D_t(\ga)\big)=
\frac{\pd^2}{\pd u_{\ocs}\pd u_{\ocs-1}}(\ga),\\
\lb{ddtaq32}
\Big(-\frac{\pd^2}{\pd u_{\ocs+1}\pd u_{\ocs+5}}
+\frac{\pd^2}{\pd u_{\ocs+2}\pd u_{\ocs+4}}
-\frac12\frac{\pd^2}{\pd u_{\ocs+3}\pd u_{\ocs+3}}\Big)\big(D_t(\ga)\big)=0.
\end{gather}

Using~\er{umnb},~\er{gbq3},~\er{buq2bq3},~\er{gagbpd2q3},~\er{ddtaq31},~\er{ddtaq32} 
one gets 
\begin{multline}
\lb{pdpdpdzcrq3}
\Big(\frac{\pd^2}{\pd u_{\ocs}\pd u_{\ocs+6}}-\frac{\pd^2}{\pd u_{\ocs+1}\pd u_{\ocs+5}}
+\frac{\pd^2}{\pd u_{\ocs+2}\pd u_{\ocs+4}}
-\frac12\frac{\pd^2}{\pd u_{\ocs+3}\pd u_{\ocs+3}}\Big)
\Big(D_x(\gb)-D_t(\ga)+[\ga,\gb]\Big)=\\
=D_x\Big(\gb_{u_{\ocs}u_{\ocs+6}}-\gb_{u_{\ocs+1}u_{\ocs+5}}+
\gb_{u_{\ocs+2}u_{\ocs+4}}-\frac12\gb_{u_{\ocs+3}u_{\ocs+3}}\Big)
+\gb_{u_{\ocs-1}u_{\ocs+6}}-
\frac{\pd^2}{\pd u_{\ocs}\pd u_{\ocs-1}}(\ga)+\\
+\Big[\ga,\,\gb_{u_{\ocs}u_{\ocs+6}}-\gb_{u_{\ocs+1}u_{\ocs+5}}+
\gb_{u_{\ocs+2}u_{\ocs+4}}-\frac12\gb_{u_{\ocs+3}u_{\ocs+3}}\Big]=\\
=\frac72\bigg(D_x\Big(\frac{\pd^2}{\pd u_{\ocs}\pd u_{\ocs}}(\ga)\Big)+
\Big[\ga,\,\frac{\pd^2}{\pd u_{\ocs}\pd u_{\ocs}}(\ga)\Big]\bigg).
\end{multline}
Since $D_x(\gb)-D_t(\ga)+[\ga,\gb]=0$, equation~\er{pdpdpdzcrq3} 
implies~\er{dxpd2aun} in the case $\kd=3$. 
\end{proof}
\begin{lemma}
\lb{ga2uk}
One has
\beq
\lb{ukga2}
\frac{\pd^3}{\pd u_k\pd u_{\ocs}\pd u_{\ocs}}(\ga)=0\qquad\qquad\forall\,k\in\zp.
\ee
\end{lemma}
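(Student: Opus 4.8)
The plan is to set $P:=\frac{\pd^2}{\pd u_\ocs\pd u_\ocs}(\ga)$ and exploit systematically the identity $D_x(P)+[\ga,P]=0$ just established, i.e., equation~\er{dxpd2aun}. Since $\ga$ is a power series in $x-x_a$, $t-t_a$, $u_0-a_0,\dots,u_\ocs-a_\ocs$, the same is true of $P$, so \er{ukga2} is trivial for $k>\ocs$ and it remains to show $\frac{\pd P}{\pd u_k}=0$ for $0\le k\le\ocs$. First I would dispose of $k=\ocs$: differentiating $D_x(P)+[\ga,P]=0$ with respect to $u_{\ocs+1}$, applying \er{undxq} to the term $D_x(P)$, and noting that $[\ga,P]$ involves no variable of index $>\ocs$, the only surviving contribution is $\frac{\pd P}{\pd u_\ocs}$; hence $\frac{\pd P}{\pd u_\ocs}=0$. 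Consequently $P$ is independent of $u_\ocs$, so $\frac{\pd\ga}{\pd u_\ocs}-(u_\ocs-a_\ocs)P$ is independent of $u_\ocs$; it vanishes at $u_\ocs=a_\ocs$ by the normal-form condition \er{pdgau} with $s=\ocs$, and we obtain the structural identity $\frac{\pd\ga}{\pd u_\ocs}=(u_\ocs-a_\ocs)P$.

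The remaining cases $0\le k\le\ocs-1$ I would handle by downward induction on $m:=k$. Suppose $\frac{\pd P}{\pd u_i}=0$ for every $i$ with $m<i\le\ocs$, so that $P$ depends only on $x,t,u_0,\dots,u_m$. Differentiating $D_x(P)+[\ga,P]=0$ with respect to $u_{m+1}$ by means of \er{undxq} and using $\frac{\pd P}{\pd u_{m+1}}=0$, one gets $\frac{\pd P}{\pd u_m}+\big[\frac{\pd\ga}{\pd u_{m+1}},P\big]=0$; thus it suffices to prove that the bracket $\big[\frac{\pd\ga}{\pd u_{m+1}},P\big]$ vanishes. For $m=\ocs-1$ this is immediate from the structural identity, since then $\frac{\pd\ga}{\pd u_{m+1}}=(u_\ocs-a_\ocs)P$ commutes with $P$. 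For $m<\ocs-1$, I would differentiate $\frac{\pd P}{\pd u_m}+\big[\frac{\pd\ga}{\pd u_{m+1}},P\big]=0$ with respect to $u_i$ for each $i$ with $m+1\le i\le\ocs$; as $P$, hence $\frac{\pd P}{\pd u_m}$, is independent of $u_i$, this gives $\big[\frac{\pd^2\ga}{\pd u_i\pd u_{m+1}},P\big]=0$, so that $\frac{\pd}{\pd u_i}\big[\frac{\pd\ga}{\pd u_{m+1}},P\big]=0$ for all $i$ with $m+1\le i\le\ocs$ (and trivially for $i>\ocs$). Therefore $\big[\frac{\pd\ga}{\pd u_{m+1}},P\big]$ does not depend on any $u_i$ with $i\ge m+1$; evaluating it at $u_i=a_i$ for all $i\ge m+1$ and invoking \er{pdgau} with $s=m+1$, which makes $\frac{\pd\ga}{\pd u_{m+1}}$ vanish there, we conclude $\big[\frac{\pd\ga}{\pd u_{m+1}},P\big]=0$, hence $\frac{\pd P}{\pd u_m}=0$. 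This closes the induction and yields \er{ukga2}.

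The step I expect to be the real obstacle is precisely the vanishing of $\big[\frac{\pd\ga}{\pd u_{m+1}},P\big]$ when $m+1<\ocs$. At the top level the point is cheap because $\frac{\pd\ga}{\pd u_\ocs}$ is literally proportional to $P$; at lower levels there is no such explicit formula for $\frac{\pd\ga}{\pd u_{m+1}}$, so one must bootstrap — first use \er{undxq} and the inductive hypothesis to show that the bracket is constant along all the higher jet directions $u_{m+1},\dots,u_\ocs$, and only then kill it using the vanishing of $\frac{\pd\ga}{\pd u_{m+1}}$ at the base point $a$ that is built into the normal form \er{pdgau}. Beyond this, the only care needed is bookkeeping: \er{undxq} is to be applied with indices $\ocs+1$ and $m+1$ (all positive, as required), and \er{dxpd2aun} is used within its stated range $\kd\in\{1,2,3\}$, $\ocs\ge\kd+\delta_{\kd,3}$.
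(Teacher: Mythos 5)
Your proof is correct and follows essentially the same route as the paper's: with $P=\frac{\pd^2}{\pd u_\ocs\pd u_\ocs}(\ga)$, differentiate \er{dxpd2aun} with respect to $u_{k+1}$ via \er{undxq} to obtain $\frac{\pd P}{\pd u_k}+\big[\frac{\pd\ga}{\pd u_{k+1}},P\big]=0$ and then kill the bracket using the normal-form condition \er{pdgau}. The paper packages the same descent as a single maximal-counterexample contradiction and disposes of the bracket by simply evaluating the whole identity at $u_k=a_k$, $k\ge k_0+1$ (the term $\frac{\pd P}{\pd u_{k_0}}$ being unaffected by maximality), so your additional sub-step showing the bracket is constant in the higher jet variables, and the identity $\frac{\pd\ga}{\pd u_\ocs}=(u_\ocs-a_\ocs)P$, are correct but not needed.
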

\begin{proof}
Suppose that~\eqref{ukga2} does not hold. Let $k_0$ be the maximal integer 
such that $\dfrac{\pd^3}{\pd u_{k_0}\pd u_{\ocs}\pd u_{\ocs}}(\ga)\neq 0$. 
Equation~\er{kdvd=0} for $s=k_0+1$ says  
\beq
\lb{pduk0ga0}
\frac{\pd}{\pd u_{k_0+1}}(\ga)\,\bigg|_{u_k=a_k,\ k\ge k_0+1}=0.
\ee
Differentiating~\er{dxpd2aun} with respect to~$u_{k_0+1}$, we obtain 
\beq
\lb{pdga2g0}
\frac{\pd^3}{\pd u_{k_0}\pd u_{\ocs}\pd u_{\ocs}}(\ga)+
\Big[\frac{\pd}{\pd u_{k_0+1}}(\ga),\,\frac{\pd^2}{\pd u_{\ocs}\pd u_{\ocs}}(\ga)\Big]=0.
\ee
Substituting $u_k=a_k$ in~\er{pdga2g0} for all $k\ge k_0+1$ and 
using~\er{pduk0ga0}, one gets $\dfrac{\pd^3}{\pd u_{k_0}\pd u_{\ocs}\pd u_{\ocs}}(\ga)=0$, 
which contradicts our assumption.				 
\end{proof}

Using equation~\er{kdvd=0} for $s=\ocs$ and equation~\er{ukga2} for all $k\in\zp$, 
we see that $\ga$ is of the form
\beq
\label{ga210}
\ga=(u_{\ocs}-a_\ocs)^2\ga_2(x,t)+\ga_0(x,t,u_0,\dots,u_{\ocs-1}),
\ee
where $\ga_2(x,t)$ is a power series 
in the variables $x-x_a$, $t-t_a$ 
and $\ga_0(x,t,u_0,\dots,u_{\ocs-1})$ is a power series in the variables $x-x_a$, $t-t_a$, $u_0-a_0,\dots,u_{\ocs-1}-a_{\ocs-1}$.

From~\er{ukga2},~\er{ga210} it follows that equation~\er{dxpd2aun} reads 
\beq
\lb{pdxga2g0}
2\frac{\pd}{\pd x}(\ga_2)+2[\ga_0,\ga_2]=0.
\ee
Note that condition~\er{kdvaukak} implies 
\beq
\lb{ga0ukak}
\ga_0\Big|_{u_k=a_k,\ k\ge 0}=0.
\ee
Substituting $u_k=a_k$ in~\er{pdxga2g0} for all $k\ge 0$ and using~\er{ga0ukak}, 
we get 
\beq
\lb{pdxga20}
\frac{\pd}{\pd x}(\ga_2)=0.
\ee
Combining~\er{pdxga20} with~\er{pdxga2g0}, one obtains 
\beq
\lb{ga0ga20}
[\ga_2,\ga_0]=0.
\ee 

In view of~\er{kdvgaser},~\er{ga210}, we have 
\beq
\lb{ga0sum}
\ga_0=\sum_{l_1,l_2,i_0,\dots,i_{\ocs-1}\ge 0} 
(x-x_a)^{l_1} (t-t_a)^{l_2}(u_0-a_0)^{i_0}\dots(u_{\ocs-1}-a_{\ocs-1})^{i_{\ocs-1}}\cdot
\ga^{l_1,l_2}_{i_0\dots i_{\ocs-1}0}
\ee
According to~\er{kdvgaser},~\er{ga210},~\er{pdxga20}, one has 
\beq
\lb{ga2sum}
\ga_2=\sum_{l\ge 0}(t-t_a)^l\cdot\tilde\ga^l,\qquad\qquad 
\tilde\ga^l=\ga^{0,l}_{0\dots 02}\in\fds^\ocs(\CE,a).
\ee

Combining~\er{ga210},~\er{ga0sum},~\er{ga2sum} with 
Theorem~\ref{lemgenfdq}, we see that the elements 
\beq
\lb{tga0ga}
\tilde\ga^0,\qquad\ga^{l_1,0}_{i_0\dots i_{\ocs-1}0},\qquad 
l_1,i_0,\dots,i_{\ocs-1}\in\zp,
\ee
generate the algebra $\fds^\ocs(\CE,a)$. 
Substituting $t=t_a$ in~\er{ga0ga20} and using~\er{ga0sum},~\er{ga2sum}, one gets 
\beq
\lb{tgagal1}
\big[\tilde\ga^0,
\ga^{l_1,0}_{i_0\dots i_{\ocs-1}0}\big]=0\qquad\forall\,l_1,i_0,\dots,i_{\ocs-1}\in\zp.
\ee
Since the elements~\er{tga0ga} generate the algebra $\fds^\ocs(\CE,a)$, equation~\er{tgagal1} yields
\beq
\lb{tga0fdq}
\big[\tilde\ga^0,\,\fds^\ocs(\CE,a)\big]=0.
\ee

\begin{lemma} One has
\beq
\lb{tgalfdq}
\big[\tilde\ga^l,\,\fds^\ocs(\CE,a)\big]=0\qquad\qquad\forall\,l\in\zp.
\ee
\end{lemma}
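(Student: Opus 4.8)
The plan is to prove \er{tgalfdq} by induction on $l$. The base case $l=0$ is precisely the already established identity \er{tga0fdq}. For the inductive step the key input is a power series expansion of the relation \er{ga0ga20}, i.e.\ $[\ga_2,\ga_0]=0$, together with the fact recorded in \er{tga0ga} that the algebra $\fds^\ocs(\CE,a)$ is generated by $\tilde\ga^0$ and the elements $\ga^{l_1,0}_{i_0\dots i_{\ocs-1}0}$.

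First I would substitute the expansions \er{ga0sum} and \er{ga2sum} into $[\ga_2,\ga_0]=0$ and compare coefficients of the monomials $(x-x_a)^{l_1}(t-t_a)^{n}(u_0-a_0)^{i_0}\cdots(u_{\ocs-1}-a_{\ocs-1})^{i_{\ocs-1}}$. Since $\ga_2=\sum_{l_2\ge 0}(t-t_a)^{l_2}\tilde\ga^{l_2}$ carries no $x$-dependence, this produces, for every $l_1,n,i_0,\dots,i_{\ocs-1}\in\zp$, the convolution identity $\sum_{l_2=0}^{n}\big[\tilde\ga^{l_2},\ga^{l_1,n-l_2}_{i_0\dots i_{\ocs-1}0}\big]=0$; for $n=0$ this is just \er{tgagal1}.

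Next, assuming $[\tilde\ga^{j},\fds^\ocs(\CE,a)]=0$ for all $j<l$, I would take $n=l$ in this convolution identity. Every term with $l_2<l$ then vanishes by the inductive hypothesis (the case $l_2=0$ being \er{tga0fdq}), and what remains is $\big[\tilde\ga^{l},\ga^{l_1,0}_{i_0\dots i_{\ocs-1}0}\big]=0$ for all $l_1,i_0,\dots,i_{\ocs-1}\in\zp$. Since $\tilde\ga^l$ itself lies in $\fds^\ocs(\CE,a)$, identity \er{tga0fdq} also gives $[\tilde\ga^l,\tilde\ga^0]=0$. Hence $\tilde\ga^l$ commutes with every element of the generating set of \er{tga0ga}; as the centralizer of $\tilde\ga^l$ in $\fds^\ocs(\CE,a)$ is a Lie subalgebra (by the Jacobi identity) containing this set, it equals all of $\fds^\ocs(\CE,a)$, which is \er{tgalfdq} for $l$ and completes the induction.

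I do not expect a genuine obstacle here: the argument is a coefficient extraction feeding an induction. The only points to handle with a little care are that comparing coefficients of formal power series with coefficients in a Lie algebra is legitimate (the bracket is extended bilinearly), and that it suffices to check $\tilde\ga^l$ against a generating set rather than against all of $\fds^\ocs(\CE,a)$.
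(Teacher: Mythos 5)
Your proposal is correct and follows essentially the same route as the paper: induction on $l$ starting from~\eqref{tga0fdq}, using the expansion of $[\ga_2,\ga_0]=0$ to isolate $\big[\tilde\ga^{l},\ga^{l_1,0}_{i_0\dots i_{\ocs-1}0}\big]=0$ after the inductive hypothesis kills the lower-order terms, and then concluding via the generating set~\eqref{tga0ga}. The only cosmetic difference is that you extract monomial coefficients directly, whereas the paper applies $\pd^{r+1}/\pd t^{r+1}$ and evaluates at $t=t_a$ — the same operation on formal power series.
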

\begin{proof}
We prove~\er{tgalfdq} by induction on $l$. 
The property $\big[\tilde\ga^0,\,\fds^\ocs(\CE,a)\big]=0$ has been obtained 
in~\er{tga0fdq}.

Let $r\in\zp$ such that $\big[\tilde\ga^l,\,\fds^\ocs(\CE,a)\big]=0$ 
for all $l\le r$. Since 
$\dfrac{\pd^{l}}{\pd t^{l}}(\ga_2)\,\bigg|_{t=t_a}=l!\cdot\tilde\ga^l$, 
we get
\beq
\lb{pdkga2}
\bigg[
\frac{\pd^{l}}{\pd t^{l}}(\ga_2)\,\bigg|_{t=t_a},\,
\frac{\pd^{m}}{\pd t^{m}}(\ga_0)\,\bigg|_{t=t_a}\bigg]=0\qquad\forall\,l\le r,
\qquad\forall\,m\in\zp.
\ee
Applying the operator $\dfrac{\pd^{r+1}}{\pd t^{r+1}}$ to equation~\er{ga0ga20}, 
substituting $t=t_a$, and using~\er{pdkga2}, one obtains 
\begin{multline*}
0=\frac{\pd^{r+1}}{\pd t^{r+1}}
\big([\ga_2,\ga_0]\big)\,\bigg|_{t=t_a}
=\sum_{k=0}^{r+1}\binom{r+1}{k}\cdot
\bigg[
\frac{\pd^{k}}{\pd t^{k}}(\ga_2)\,\bigg|_{t=t_a},\,
\frac{\pd^{r+1-k}}{\pd t^{r+1-k}}(\ga_0)\,\bigg|_{t=t_a}\bigg]=\\
=
\bigg[
\frac{\pd^{r+1}}{\pd t^{r+1}}(\ga_2)\,\bigg|_{t=t_a},\,
\ga_0\,\bigg|_{t=t_a}\bigg]=\\
=\bigg[
(r+1)!\cdot\tilde\ga^{r+1},\,
\sum_{l_1,i_0,\dots,i_{\ocs-1}} 
(x-x_a)^{l_1}(u_0-a_0)^{i_0}\dots(u_{\ocs-1}-a_{\ocs-1})^{i_{\ocs-1}}\cdot
\ga^{l_1,0}_{i_0\dots i_{\ocs-1}0}\bigg],
\end{multline*}
which implies 
\beq
\lb{tgan1ga0}
\big[\tilde\ga^{r+1},\,\ga^{l_1,0}_{i_0\dots i_{\ocs-1}0}\big]=0
\qquad\qquad\forall\,l_1,i_0,\dots,i_{\ocs-1}\in\zp.
\ee
Equation~\er{tga0fdq} yields
\beq
\lb{tga0tgan1}
\big[\tilde\ga^{0},\tilde\ga^{r+1}\big]=0.
\ee
Since the elements~\er{tga0ga} generate the algebra $\fds^\ocs(\CE,a)$, 
from~\er{tgan1ga0},~\er{tga0tgan1} 
we get 
$\big[\tilde\ga^{r+1},\,\fds^\ocs(\CE,a)\big]=0$. 
\end{proof}

\begin{theorem}
\lb{thcenter}
Let $\CE$ be the infinite prolongation of an equation of the form~\er{utukd} 
with $\kd\in\{1,2,3\}$. Let $a\in\CE$. 
For each $\ocs\in\zsp$, consider the surjective 
homomorphism $\vf_\ocs\cl\fds^\ocs(\CE,a)\to\fds^{\ocs-1}(\CE,a)$
from~\er{fdnn-1}. 

If $\ocs\ge\kd+\delta_{\kd,3}$ then  
\beq
\lb{v1v2ker}
[v_1,v_2]=0\qquad\qquad\forall\,v_1\in\ker\vf_\ocs,\qquad
\forall\,v_2\in\fds^\ocs(\CE,a).
\ee
In other words, if $\ocs\ge\kd+\delta_{\kd,3}$ then 
the kernel of~$\vf_\ocs$ is contained in the center of the Lie 
algebra $\fds^\ocs(\CE,a)$.

For each $k\in\zsp$, let 
$\psi_{k}\colon\fds^{k+\kd-1+\delta_{\kd,3}}(\CE,a)\to\fds^{\kd-1+\delta_{\kd,3}}(\CE,a)$ 
be the composition of the homomorphisms
\beq
\notag
\fds^{k+\kd-1+\delta_{\kd,3}}(\CE,a)\to\fds^{k+\kd-2+\delta_{\kd,3}}(\CE,a)\to\dots
\to\fds^{\kd+\delta_{\kd,3}}(\CE,a)\to\fds^{\kd-1+\delta_{\kd,3}}(\CE,a)
\ee
from~\er{fdnn-1}. Then   
\beq
\lb{hhhhk}
[h_1,[h_2,\dots,[h_{k-1},[h_k,h_{k+1}]]\dots]]=0\qquad\qquad\forall\,h_1,\dots,h_{k+1}\in\ker\psi_k.
\ee
In particular, the kernel of $\psi_k$ is nilpotent.
\end{theorem}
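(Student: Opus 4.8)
\emph{Proof plan.}
The computational work is essentially finished by the lemmas above: for $\ocs\ge\kd+\delta_{\kd,3}$ they give
$$
\ga=(u_\ocs-a_\ocs)^2\ga_2(t)+\ga_0,\qquad
\gb=2u_{\ocs+2\kd}(u_\ocs-a_\ocs)\ga_2(t)+\gb_0,
$$
where $\ga_0$ is free of $u_\ocs$ and $\gb_0$ of $u_{\ocs+2\kd}$ (combine~\er{buq2a} with~\er{ga210}), where $\ga_2(t)=\sum_{l\ge0}(t-t_a)^l\tilde\ga^l$ by~\er{pdxga20},~\er{ga2sum}, and where, by~\er{tgalfdq}, every $\tilde\ga^l$ lies in the center of $\fds^\ocs(\CE,a)$. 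I would build both assertions of Theorem~\ref{thcenter} on these facts.

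For~\er{v1v2ker} I would identify $\ker\vf_\ocs$ outright as the central subspace $J:=\mathrm{span}_\fik\{\tilde\ga^l:l\in\zp\}$. The inclusion $J\subseteq\ker\vf_\ocs$ is immediate, since $\tilde\ga^l=\ga^{0,l}_{0\dots02}$ carries a nonzero last index and hence maps to $0$ under~\er{fdhff}. For the reverse inclusion I would show that $\vf_\ocs$ factors through an isomorphism $\fds^\ocs(\CE,a)/J\xrightarrow{\sim}\fds^{\ocs-1}(\CE,a)$ by producing the inverse map $\hat\ga^{l_1,l_2}_{i_0\dots i_{\ocs-1}}\mapsto[\ga^{l_1,l_2}_{i_0\dots i_{\ocs-1}0}]$, $\hat\gb^{l_1,l_2}_{j_0\dots j_{\ocs+2\kd-1}}\mapsto[\gb^{l_1,l_2}_{j_0\dots j_{\ocs+2\kd-1}0}]$. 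By the displayed forms of $\ga$, $\gb$ (together with~\er{relgakdv},~\er{ukga2}), every generator with a positive last index is either $0$ or one of $\tilde\ga^l$, $2\tilde\ga^l$, hence lies in $J$; so modulo $J$ the relations~\er{relgakdv}--\er{kdvbxx0} reduce to~\er{hatga0}. Moreover, since $\ga_2$ has central coefficients, all $\ga_2$-terms drop out of~\er{gckdv}: they contribute nothing to the brackets $[\ga,\gb]$, and $D_x$, $D_t$ of a series with coefficients in $J$ again has coefficients in $J$; consequently~\er{gckdv} reduces modulo $J$ to exactly $D_x(\gb_0)-D_t(\ga_0)+[\ga_0,\gb_0]=0$, which is~\er{hatgzcr}. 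Thus the inverse map is well defined, $\ker\vf_\ocs=J$, and since $J$ is central this is~\er{v1v2ker}.

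For the second assertion, put $m=\kd-1+\delta_{\kd,3}$, so that $\psi_k=\vf_{m+1}\circ\dots\circ\vf_{m+k}\colon\fds^{m+k}(\CE,a)\to\fds^m(\CE,a)$, and for $0\le j\le k$ set $\pi_j=\vf_{m+j+1}\circ\dots\circ\vf_{m+k}\colon\fds^{m+k}(\CE,a)\to\fds^{m+j}(\CE,a)$, so that $\pi_k=\id$, $\pi_0=\psi_k$, and $\pi_j=\vf_{m+j+1}\circ\pi_{j+1}$. Let $N_j=\ker\pi_j$; then $0=N_k\subseteq N_{k-1}\subseteq\dots\subseteq N_0=\ker\psi_k$. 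If $w\in N_j$ then $\pi_{j+1}(w)\in\ker\vf_{m+j+1}$, which by the first assertion (applicable since $m+j+1=\kd+\delta_{\kd,3}+j\ge\kd+\delta_{\kd,3}$) lies in the center of $\fds^{m+j+1}(\CE,a)$; hence $\pi_{j+1}([w,v])=[\pi_{j+1}(w),\pi_{j+1}(v)]=0$ for every $v\in\fds^{m+k}(\CE,a)$, i.e.\ $[w,v]\in N_{j+1}$. Thus $[N_j,\fds^{m+k}(\CE,a)]\subseteq N_{j+1}$ for $0\le j\le k-1$. Iterating: for $h_1,\dots,h_{k+1}\in\ker\psi_k=N_0$ the element $[h_k,h_{k+1}]$ lies in $N_1$, then $[h_{k-1},[h_k,h_{k+1}]]$ in $N_2$, and after $k$ brackets $[h_1,[h_2,\dots,[h_{k-1},[h_k,h_{k+1}]]\dots]]\in N_k=0$, which is~\er{hhhhk}; the same inclusions give $\gamma_{j+1}(\ker\psi_k)\subseteq N_j$ for the lower central series, so $\gamma_{k+1}(\ker\psi_k)=0$ and $\ker\psi_k$ is nilpotent of class $\le k$.

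The genuinely nontrivial part of the argument is already contained in the lemmas (the reductions of~\er{gckdv} by the second-order operators in~\er{pdpdpdzcr},~\er{pdpdpdzcrq2},~\er{pdpdpdzcrq3}, and the inductions in Lemma~\ref{ga2uk} and in the preceding lemma). The one step that still needs care is the exact identification $\ker\vf_\ocs=\mathrm{span}_\fik\{\tilde\ga^l\}$; its crux is precisely the centrality of the $\tilde\ga^l$, which guarantees that passing from the order-$\ocs$ relation~\er{gckdv} to the order-$(\ocs-1)$ relation~\er{hatgzcr} costs exactly the span of the $\tilde\ga^l$ and nothing else. Once this is in hand, everything else — in particular the whole second assertion — is a formal computation with the filtration $N_k\subseteq\dots\subseteq N_0$.
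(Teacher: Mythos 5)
Your proposal is correct and follows essentially the same route as the paper: both parts rest on the preceding lemmas (the form $\ga=(u_\ocs-a_\ocs)^2\ga_2+\ga_0$ and the centrality of the $\tilde\ga^l$), on identifying $\ker\vf_\ocs$ with the ideal generated by the $\tilde\ga^l$, and on iterating the centrality of each $\ker\vf$ to get~\er{hhhhk}. Your filtration $0=N_k\subseteq\dots\subseteq N_0=\ker\psi_k$ is just a repackaging of the paper's induction on $k$, and your more explicit verification that $\ker\vf_\ocs$ is exactly the span of the $\tilde\ga^l$ supplies a detail the paper asserts without elaboration.
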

\begin{proof}
Let $\ocs\ge\kd+\delta_{\kd,3}$.
Combining~\er{ga210},~\er{buq2a},~\er{ga2sum} with the definition of the homomorphism 
$$
\vf_\ocs\cl\fds^\ocs(\CE,a)\to\fds^{\ocs-1}(\CE,a),
$$ 
we see that $\ker\vf_\ocs$ 
is equal to the ideal generated by the elements $\tilde\ga^l$, $l\in\zp$.
Then~\er{v1v2ker} follows from~\er{tgalfdq}.

So we have proved that 
the kernel of the homomorphism $\vf_\ocs\cl\fds^\ocs(\CE,a)\to\fds^{\ocs-1}(\CE,a)$ 
is contained in the center of the Lie algebra $\fds^\ocs(\CE,a)$ for any
$\ocs\ge\kd+\delta_{\kd,3}$.

Let us prove~\er{hhhhk} by induction on $k$. 
Since $\psi_{1}=\vf_{\kd+\delta_{\kd,3}}$, 
for $k=1$ property~\er{hhhhk} follows from~\er{v1v2ker}.
Let $r\in\zsp$ such that~\er{hhhhk} is valid for $k=r$. 
Then for any $h'_1,h'_2,\dots,h'_{r+2}\in\ker\psi_{r+1}$ we have 
\beq
\lb{vfhn2}
\big[\vf_{r+\kd+\delta_{\kd,3}}(h'_2),\big[\vf_{r+\kd+\delta_{\kd,3}}(h'_3),\dots,
\big[\vf_{r+\kd+\delta_{\kd,3}}(h'_r),
\big[\vf_{r+\kd+\delta_{\kd,3}}(h'_{r+1}),\vf_{r+\kd+\delta_{\kd,3}}(h'_{r+2})
\big]\big]\dots\big]\big]=0,
\ee
because $\vf_{r+\kd+\delta_{\kd,3}}(h'_i)\in\ker\psi_r$ for $i=2,3,\dots,r+2$.
Equation~\er{vfhn2} says that 
\beq
\lb{hhkervf}
\big[h'_2,\big[h'_3,\dots,\big[h'_r,
\big[h'_{r+1},h'_{r+2}\big]\big]\dots\big]\big]\in\ker\vf_{r+\kd+\delta_{\kd,3}}.
\ee
Since $\ker\vf_{r+\kd+\delta_{\kd,3}}$ is contained in the center of 
$\fds^{r+\kd+\delta_{\kd,3}}(\CE,a)$,
property~\er{hhkervf} yields 
$$
\big[h'_1,\big[h'_2,\big[h'_3,\dots,\big[h'_r,
\big[h'_{r+1},h'_{r+2}\big]\big]\dots\big]\big]\big]=0.
$$
So we have proved~\er{hhhhk} for $k=r+1$. 
Clearly, property~\er{hhhhk} implies that $\ker\psi_k$ is nilpotent.
\end{proof}

Now we prove a result which is used in Example~\ref{exintu5}.

\begin{theorem}
\lb{thu5f}
Let $\CE$ be the infinite prolongation of the equation 
\beq
\lb{utu5f}
u_t=u_5+f(x,t,u_0,u_1,u_2,u_3)
\ee
for some function $f=f(x,t,u_0,u_1,u_2,u_3)$ such that 
$\frac{\pd^3 f}{\pd u_{3}\pd u_{3}\pd u_{3}}\neq 0$. 
\textup{(}More precisely, we assume that the function 
$\frac{\pd^3 f}{\pd u_{3}\pd u_{3}\pd u_{3}}$ 
is not identically zero on any connected component of the manifold~$\CE$. 
Usually, the manifold~$\CE$ is connected, 
and then our assumption means that 
$\frac{\pd^3 f}{\pd u_{3}\pd u_{3}\pd u_{3}}$ 
is not identically zero on~$\CE$.\textup{)}

Then $\fds^1(\CE,a)=\fds^0(\CE,a)=0$ and $\fds^\ocs(\CE,a)$ 
is nilpotent for all $a\in\CE$, $\ocs>1$. 
\end{theorem}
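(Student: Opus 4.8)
The plan is to reduce the whole statement to the single fact that $\fds^1(\CE,a)=0$. Equation~\er{utu5f} is of the form~\er{utukd} with $\kd=2$ (so $\delta_{\kd,3}=0$), hence Theorem~\ref{thcenter} applies: for every $\ocs>1$ the homomorphism $\psi_{\ocs-1}\colon\fds^{\ocs}(\CE,a)\to\fds^1(\CE,a)$ from that theorem satisfies $[h_1,[h_2,\dots,[h_{\ocs-1},h_{\ocs}]\dots]]=0$ for all $h_i\in\ker\psi_{\ocs-1}$. Thus, once $\fds^1(\CE,a)=0$ is known, $\ker\psi_{\ocs-1}=\fds^\ocs(\CE,a)$, so $\fds^\ocs(\CE,a)$ is nilpotent for all $\ocs>1$; and $\fds^0(\CE,a)=0$ follows from surjectivity of $\vf_1\colon\fds^1(\CE,a)\to\fds^0(\CE,a)$ in~\er{fdnn-1}.

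It remains to prove $\fds^1(\CE,a)=0$. By Definition~\ref{dfd} this algebra is generated by the coefficients of the formal series $\ga=\ga(x,t,u_0,u_1)$ and $\gb=\gb(x,t,u_0,\dots,u_5)$ subject to $D_x(\gb)-D_t(\ga)+[\ga,\gb]=0$ and the normalization~\er{gagb00}, so it is enough to deduce $\ga=0$ and $\gb=0$ from these relations. First I would repeat, in the boundary case $\ocs=1$, the kind of computation used in the proof of Theorem~\ref{thcenter}. Differentiating the zero-curvature relation in $u_6$ gives $\frac{\pd\gb}{\pd u_5}=\frac{\pd\ga}{\pd u_1}$, hence $\gb=u_5\frac{\pd\ga}{\pd u_1}+\gb_0$ with $\gb_0$ independent of $u_5$; collecting the coefficient of $u_5$ in the resulting equation then gives $\frac{\pd\gb_0}{\pd u_4}=P$, where
$$
P=\frac{\pd\ga}{\pd u_0}-\frac{\pd^2\ga}{\pd x\,\pd u_1}-u_1\frac{\pd^2\ga}{\pd u_0\,\pd u_1}-u_2\frac{\pd^2\ga}{\pd u_1^2}-\Big[\ga,\frac{\pd\ga}{\pd u_1}\Big]
$$
is a series in $x,t,u_0,u_1,u_2$; crucially $P$ is \emph{affine} in $u_2$, so $\frac{\pd^2 P}{\pd u_2^2}=0$. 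Writing $\gb_0=u_4 P+\gb_{00}(x,t,u_0,\dots,u_3)$, the coefficient of $u_5^0$ in the zero-curvature relation turns out to be affine in $u_4$; its $u_4$-coefficient yields an explicit formula for $\frac{\pd\gb_{00}}{\pd u_3}$, and differentiating in $u_3$ twice more gives $\frac{\pd^2\gb_{00}}{\pd u_3^2}=\frac{\pd\ga}{\pd u_1}\cdot\frac{\pd^2 f}{\pd u_3^2}-\frac{\pd P}{\pd u_2}$ and $\frac{\pd^3\gb_{00}}{\pd u_3^3}=\frac{\pd\ga}{\pd u_1}\cdot\frac{\pd^3 f}{\pd u_3^3}$.

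The key step is to apply $\frac{\pd^3}{\pd u_3^3}$ to the $u_4^0$-part of the zero-curvature relation and substitute the two formulas just obtained for $\frac{\pd^2\gb_{00}}{\pd u_3^2}$ and $\frac{\pd^3\gb_{00}}{\pd u_3^3}$. I expect this to be the main computational obstacle: one must verify that after the substitution every term involving derivatives of $f$ other than $\frac{\pd^3 f}{\pd u_3^3}$ cancels, and that the surviving combination of second derivatives of $\ga$ equals precisely $-P$, so that the identity collapses to $\big(\frac{\pd^3 f}{\pd u_3^3}\big)P+3\frac{\pd^2 P}{\pd u_2^2}=0$, i.e.\ $\big(\frac{\pd^3 f}{\pd u_3^3}\big)P=0$. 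The hypothesis on $f$ guarantees that the Taylor expansion of $\frac{\pd^3 f}{\pd u_3^3}$ at $a$ is a nonzero element of the integral domain $\fik[[x-x_a,\dots,u_3-a_3]]$ (otherwise $\frac{\pd^3 f}{\pd u_3^3}$ would vanish on a neighbourhood of $a$, hence on the connected component of $\CE$ through $a$ by analyticity). Since $P$ has coefficients in the $\fik$-vector space $\fds^1(\CE,a)$, it follows that $P=0$.

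Finally, $P=0$ together with the normalization~\er{gagb00} forces everything. The left-hand side of $P=0$ does not depend on $u_2$, whence $\frac{\pd^2\ga}{\pd u_1^2}=0$, so $\frac{\pd\ga}{\pd u_1}$ is independent of $u_1$; combined with $\frac{\pd\ga}{\pd u_1}\big|_{u_1=a_1}=0$ this gives $\frac{\pd\ga}{\pd u_1}=0$. Then $P=0$ reads $\frac{\pd\ga}{\pd u_0}=0$, and $\ga\big|_{u_0=a_0,\,u_1=a_1}=0$ gives $\ga=0$. With $\ga=0$ and $P=0$ we get $\gb=\gb_{00}$ and $\frac{\pd\gb_{00}}{\pd u_3}=0$, and the $u_4^0$-part of the zero-curvature relation becomes $D_x(\gb_{00})=0$ for a series $\gb_{00}$ in $x,t,u_0,u_1,u_2$ only; reading off successively the coefficients of $u_3$, $u_2$, $u_1$ and then the $\frac{\pd}{\pd x}$-term forces $\gb_{00}$ to depend on $t$ alone, and $\gb\big|_{x=x_a,\,u_k=a_k}=0$ then gives $\gb_{00}=0$. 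Hence $\ga=0$ and $\gb=0$, so all generators of $\fds^1(\CE,a)$ vanish, $\fds^1(\CE,a)=0$, and by the reduction above the theorem follows.
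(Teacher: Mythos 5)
Your proof is correct and follows essentially the same route as the paper: the reduction of the nilpotency claims to $\fds^1(\CE,a)=0$ via Theorem~\ref{thcenter}, and the key identity $\frac{\pd^3 f}{\pd u_3^3}\cdot\big(D_x(\frac{\pd}{\pd u_1}(\ga))+[\ga,\frac{\pd}{\pd u_1}(\ga)]-\frac{\pd}{\pd u_0}(\ga)\big)=0$ (your $\frac{\pd^3 f}{\pd u_3^3}\cdot P=0$) extracted by a third-order differentiation of the zero-curvature relation, followed by $\frac{\pd^2\ga}{\pd u_1^2}=0\Rightarrow\frac{\pd\ga}{\pd u_1}=0\Rightarrow\ga=0$, are exactly the paper's argument. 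The only cosmetic differences are that the paper applies the operator $-\frac{\pd^3}{\pd u_3\pd u_2\pd u_4}+\frac12\frac{\pd^3}{\pd u_3^3}$ to the full relation rather than $\frac{\pd^3}{\pd u_3^3}$ to its $u_4^0u_5^0$-part, and that the paper concludes $\fds^1(\CE,a)=0$ from $\ga=0$ by citing Theorem~\ref{lemgenfdq} instead of also verifying $\gb=0$ directly.
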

\begin{proof}
Consider an arbitrary point $a\in\CE$ given by~\er{pointevs}. 
According to Definition~\ref{dfd}, 
the algebra $\fds^1(\CE,a)$ for equation~\er{utu5f} can be described as follows.
Consider formal power series 
\begin{gather}
\label{u5fgaser}
\ga=\sum_{l_1,l_2,i_0,i_1\ge 0} 
(x-x_a)^{l_1} (t-t_a)^{l_2}(u_0-a_0)^{i_0}(u_1-a_1)^{i_1}\cdot
\ga^{l_1,l_2}_{i_0,i_1},\\
\lb{u5fgbser}
\gb=\sum_{l_1,l_2,j_0,\dots,j_{5}\ge 0} 
(x-x_a)^{l_1} (t-t_a)^{l_2}(u_0-a_0)^{j_0}\dots(u_{5}-a_{5})^{j_{5}}\cdot
\gb^{l_1,l_2}_{j_0\dots j_{5}}
\end{gather}
satisfying 
\beq
\lb{relu5f}
\ga^{l_1,l_2}_{i_0,1}=
\ga^{l_1,l_2}_{0,0}=
\gb^{0,l_2}_{0\dots 0}=0,
\qquad\qquad l_1,l_2,i_0\in\zp.
\ee
Then $\ga^{l_1,l_2}_{i_0,i_1}$, $\gb^{l_1,l_2}_{j_0\dots j_{5}}$ 
are generators of the algebra $\fds^1(\CE,a)$, and the equation 
\beq
\label{gcu5f}
D_x(\gb)-D_t(\ga)+[\ga,\gb]=0
\ee
provides relations for these generators (in addition to relations~\er{relu5f}). 
Note that here $D_t(\ga)$ is given by formula~\er{dtga} for $\kd=2$ and $\ocs=1$,
so we have
\beq
\lb{dtga21}
D_t(\ga)=\frac{\pd}{\pd t}(\ga)+
\big(u_{5}+f(x,t,u_0,u_1,u_2,u_3)\big)\frac{\pd}{\pd u_0}(\ga)
+\big(u_{6}+D_x\big(f(x,t,u_0,u_1,u_2,u_3)\big)\big)\frac{\pd}{\pd u_1}(\ga).
\ee

Similarly to~\er{gbq2}, from~\er{gcu5f} we deduce that $\gb$ is of the form 
\beq
\lb{gbu5f}
\gb=u_{5}\frac{\pd}{\pd u_1}(\ga)
-u_{2}u_{4}\frac{\pd^2}{\pd u_{1}\pd u_{1}}(\ga)
+u_{4}\gb_{01}(x,t,u_0,u_{1})+\gb_{00}(x,t,u_0,u_1,u_2,u_{3}),
\ee
where 
$\gb_{01}(x,t,u_0,u_{1})$ is a power series 
in the variables $x-x_a$, $t-t_a$, $u_0-a_0$, $u_{1}-a_{1}$ 
and $\gb_{00}(x,t,u_0,u_1,u_2,u_{3})$ is a power series 
in the variables $x-x_a$, $t-t_a$, $u_0-a_0$, $u_1-a_1$, $u_2-a_2$, $u_3-a_3$.

Differentiating~\eqref{gcu5f} with respect to $u_{4}$, $u_{3}$ 
and using~\er{gbu5f}, we get  
\beq
\lb{gb00f}
\frac{\pd^2}{\pd u_{3}\pd u_{3}}(\gb_{00})=
\frac{\pd^2}{\pd u_{1}\pd u_{1}}(\ga)+
\frac{\pd^2 f}{\pd u_{3}\pd u_{3}}\cdot\frac{\pd}{\pd u_{1}}(\ga).
\ee
Using~\er{undxq},~\er{dtga21},~\er{gbu5f},~\er{gb00f}, one can verify that  
\begin{multline}
\lb{longu5f}
\Big(-\frac{\pd^3}{\pd u_{3}\pd u_{2}\pd u_{4}}
+\frac12\frac{\pd^3}{\pd u_{3}\pd u_{3}\pd u_{3}}\Big)
\Big(D_x(\gb)-D_t(\ga)+[\ga,\gb]\Big)=\\
=\frac12\frac{\pd^3 f}{\pd u_{3}\pd u_{3}\pd u_{3}}\cdot\bigg(
D_x\Big(\frac{\pd}{\pd u_{1}}(\ga)\Big)
+\Big[\ga,\,\frac{\pd}{\pd u_{1}}(\ga)\Big]-\frac{\pd}{\pd u_{0}}(\ga)
\bigg).
\end{multline}
Since $D_x(\gb)-D_t(\ga)+[\ga,\gb]=0$ by~\er{gcu5f}, equation~\er{longu5f} implies 
\beq
\lb{pd3fbigg}
\frac{\pd^3 f}{\pd u_{3}\pd u_{3}\pd u_{3}}\cdot\bigg(
D_x\Big(\frac{\pd}{\pd u_{1}}(\ga)\Big)
+\Big[\ga,\,\frac{\pd}{\pd u_{1}}(\ga)\Big]-\frac{\pd}{\pd u_{0}}(\ga)
\bigg)=0.
\ee
As the analytic function $\frac{\pd^3 f}{\pd u_{3}\pd u_{3}\pd u_{3}}$ 
is not identically zero on any connected component of the manifold~$\CE$, 
equation~\er{pd3fbigg} yields  
\beq
\lb{dxcomu5}
D_x\Big(\frac{\pd}{\pd u_{1}}(\ga)\Big)
+\Big[\ga,\,\frac{\pd}{\pd u_{1}}(\ga)\Big]-\frac{\pd}{\pd u_{0}}(\ga)=0.
\ee
Since $\frac{\pd}{\pd u_{2}}(\ga)=0$,
differentiating~\eqref{dxcomu5} with respect to $u_{2}$, we obtain 
$\frac{\pd^2}{\pd u_{1}\pd u_{1}}(\ga)=0$.

Recall that $\ga$ is of the form~\er{u5fgaser}. 
As $\ga^{l_1,l_2}_{i_0,1}=0$ for all $l_1,l_2,i_0\in\zp$ by~\er{relu5f}, 
equation $\frac{\pd^2}{\pd u_{1}\pd u_{1}}(\ga)=0$ yields $\frac{\pd}{\pd u_{1}}(\ga)=0$.
Combining the equation $\frac{\pd}{\pd u_{1}}(\ga)=0$ with~\er{dxcomu5}, one gets 
$\frac{\pd}{\pd u_{0}}(\ga)=0$. 

Combining the equations $\frac{\pd}{\pd u_{1}}(\ga)=\frac{\pd}{\pd u_{0}}(\ga)=0$ with~\er{relu5f}, 
we get $\ga^{l_1,l_2}_{i_0,i_1}=0$ for all $l_1,l_2,i_0,i_1\in\zp$.

Since, by Theorem~\ref{lemgenfdq}, the algebra $\fds^1(\CE,a)$ 
is generated by the elements 
$\ga^{l_1,0}_{i_0,i_1}$ for $l_1,i_0,i_1\in\zp$, we obtain $\fds^1(\CE,a)=0$. 
As one has the surjective homomorphism 
$\fds^1(\CE,a)\to\fds^0(\CE,a)$ in~\er{fdnn-1}, one gets $\fds^0(\CE,a)=0$.

According to Theorem~\ref{thcenter} for $\kd=2$, 
for any $k\in\zsp$ the kernel of the homomorphism 
$$
\psi_{k}\colon\fds^{k+1}(\CE,a)\to\fds^{1}(\CE,a)
$$ 
is nilpotent. 
Since $\fds^1(\CE,a)=0$, this implies that $\fds^\ocs(\CE,a)$ is nilpotent for all $\ocs>1$.
\end{proof}

\section{ZCRs with values in infinite-dimensional Lie algebras}
\lb{szcrinf}

According to Definition~\ref{dzcr} and Remark~\ref{rfxtu},
for a finite-dimensional Lie algebra~$\mg$,
a ZCR with values in~$\mg$ is given by analytic functions 
$A(x,t,u_0,u_1,\dots)$, $B(x,t,u_0,u_1,\dots)$ with values in~$\mg$
satisfying~\er{mnzcr}.

Sometimes one needs to consider ZCRs with values in infinite-dimensional Lie algebras.
An example of such a ZCR is studied in Section~\ref{szcrsak}.

For an arbitrary infinite-dimensional Lie algebra $\bl$, 
the notion of analytic functions with values in~$\bl$ is not defined.
Because of this, a theory for ZCRs with values in infinite-dimensional Lie algebras 
is developed below by using formal power series instead of analytic functions.

Consider an arbitrary scalar evolution equation~\er{eveq_intr}.
Let $\CE$ be the infinite prolongation of~\er{eveq_intr}.
Fix a point $a\in\CE$ given by~\er{pointevs}, which is determined by constants $x_a$, $t_a$, $a_k$.

\begin{definition}
\lb{deffzcr}
Let $\bl$ be a (possibly infinite-dimensional) Lie algebra.
A \emph{formal ZCR of order~$\le\oc$ with coefficients in $\bl$}
is given by formal power series 
\begin{gather}
\label{anasum}
\anA=\sum_{l_1,l_2,i_0,\dots,i_\ocs\ge 0} 
(x-x_a)^{l_1} (t-t_a)^{l_2}(u_0-a_0)^{i_0}\dots(u_\ocs-a_\ocs)^{i_\ocs}\cdot
\anA^{l_1,l_2}_{i_0\dots i_\ocs},\\
\label{anbsum}
\anB=
\sum_{l_1,l_2,j_0,\dots,j_{\ocs+\eo-1}\ge 0} 
(x-x_a)^{l_1} (t-t_a)^{l_2}(u_0-a_0)^{j_0}\dots(u_{\ocs+\eo-1}-a_{\ocs+\eo-1})^{j_{\ocs+\eo-1}}\cdot
\anB^{l_1,l_2}_{j_0\dots j_{\ocs+\eo-1}}
\end{gather}
such that 
\begin{gather}
\lb{abbl}
\anA^{l_1,l_2}_{i_0\dots i_\ocs},\anB^{l_1,l_2}_{j_0\dots j_{\ocs+\eo-1}}\in\bl,\\
\lb{dxanb0}
D_x(\anB)-D_t(\anA)+[\anA,\anB]=0.
\end{gather}
If the power series~\er{anasum},~\er{anbsum} 
satisfy \er{agd=0}, \er{agaukak}, \er{agbxx0} 
then this formal ZCR is said to be \emph{$a$-normal}.
\end{definition}

\begin{example}
Since \er{gasumxt}, \er{gbsumxt} 
obey \er{xgbtga}, \er{pdgau}, \er{gaua0}, \er{gbxua0} and 
$\ga^{l_1,l_2}_{i_0\dots i_\ocs},\gb^{l_1,l_2}_{j_0\dots j_{\ocs+\eo-1}}\in\fd^\oc(\CE,a)$, 
the power series \er{gasumxt}, \er{gbsumxt} constitute an $a$-normal 
formal ZCR of order~$\le\oc$ with coefficients in $\fd^\oc(\CE,a)$.
\end{example}

\begin{example}
Consider a ZCR of order~$\le\oc$ with values in a finite-dimensional Lie algebra~$\mg$ 
given by $\mg$-valued functions $A=A(x,t,u_0,\dots,u_\ocs)$, 
$B=B(x,t,u_0,\dots,u_{\ocs+\eo-1})$ satisfying~\er{mnzcr}.
If the functions $A$, $B$ are analytic on a neighborhood of the point $a\in\CE$,
then the Taylor series of these functions constitute a formal ZCR of order~$\le\oc$ with coefficients in~$\mg$.
\end{example}

For any vector space $V$, 
we denote by $\gl(V)$ the vector space of linear maps $V\to V$.
The space~$\gl(V)$ is an associative algebra with respect to the composition of such maps 
and is a Lie algebra with respect to the commutator.
We denote by $\id_V\in\gl(V)$ the identity map $\id_V\cl V\to V$.

Let $m,n\in\zp$. Consider power series
\begin{gather}
\lb{psergl}
P=\sum_{l_1,l_2,i_0,\dots,i_m\ge 0} 
(x-x_a)^{l_1} (t-t_a)^{l_2}(u_0-a_0)^{i_0}\dots(u_m-a_m)^{i_m}\cdot 
P^{l_1,l_2}_{i_0\dots i_m},\\
\notag
Q=\sum_{l_1,l_2,i_0,\dots,i_{n}\ge 0} 
(x-x_a)^{l_1} (t-t_a)^{l_2}(u_0-a_0)^{i_0}\dots(u_{n}-a_{n})^{i_{n}}\cdot 
Q^{l_1,l_2}_{i_0\dots i_{n}}
\end{gather}
with coefficients $P^{l_1,l_2}_{i_0\dots i_m},Q^{l_1,l_2}_{i_0\dots i_{n}}\in\gl(V)$.

The product $PQ$ is defined in the standard way, using the associative multiplication of the coefficients.
The power series $D_x(P)$, $D_t(P)$, $[P,Q]$ are defined as described in Remark~\ref{psdxdtlie}.
Thus $PQ$, $D_x(P)$, $D_t(P)$, $[P,Q]$ 
are power series in the variables $x-x_a$, $t-t_a$, $u_k-a_k$ with coefficients in~$\gl(V)$.

If the coefficient $P^{0,0}_{0\dots 0}\in\gl(V)$ in~\er{psergl} is invertible 
(i.e., the linear map $P^{0,0}_{0\dots 0}\cl V\to V$ is invertible), then we can consider 
the power series $P^{-1}$ such that $PP^{-1}=P^{-1}P=\id_V$.

For any Lie algebra $\bl$, 
there is a (possibly infinite-dimensional) vector space $V$ such that 
$\bl$ is isomorphic to a Lie subalgebra of~$\gl(V)$.
For example, one can use the following well-known construction. 
Denote by $\un(\bl)$ the universal enveloping algebra of $\bl$.
Using the canonical embedding $\bl\subset\un(\bl)$,
we get the injective homomorphism of Lie algebras 
\beq
\notag
\xi\cl \bl\hookrightarrow\gl(\un(\bl)),\quad\qquad \xi(v)(w)=vw,
\quad\qquad v\in \bl\subset\un(\bl),\quad\qquad w\in\un(\bl),\quad\qquad vw\in\un(\bl).
\ee
So one can set $V=\un(\bl)$.

As said above, Theorem~\ref{thnfzcr} about analytic ZCRs is proved in~\cite{scal13}.
Similarly, one can prove the following analog of Theorem~\ref{thnfzcr} for formal ZCRs.

\begin{theorem}
\lb{tnffrzcr}
Let $\oc\in\zp$. 
Consider a vector space $V$ and a Lie subalgebra $\bl\subset\gl(V)$.
Note that $V$ and $\bl$ can be infinite-dimensional.
Consider a formal ZCR of order~$\le\oc$ with coefficients in $\bl$ 
given by power series $\anA$, $\anB$ 
satisfying \er{anasum}, \er{anbsum}, \er{abbl}, \er{dxanb0}.

Then there is a unique power series of the form
\beq
\lb{gfps}
\anG=\id_V+\sum_{\substack{l_1,l_2,i_0,\dots,i_{m}\ge 0\\
l_1+l_2+i_0+\dots+i_{m}>0}} 
(x-x_a)^{l_1} (t-t_a)^{l_2}(u_0-a_0)^{i_0}\dots(u_{m}-a_{m})^{i_{m}}\cdot
\anG^{l_1,l_2}_{i_0\dots i_{m}},\qquad 
\anG^{l_1,l_2}_{i_0\dots i_{m}}\in\gl(V),
\ee
such that the power series 
\beq
\lb{ftanab}
\tilde{\anA}=\anG \anA \anG^{-1}-D_x(\anG)\cdot \anG^{-1},\qquad\qquad
\tilde{\anB}=\anG \anB \anG^{-1}-D_t(\anG)\cdot \anG^{-1}
\ee
satisfy 
\begin{gather}
\label{fd=0}
\frac{\pd\tilde{\anA}}{\pd u_s}\,\,\bigg|_{u_k=a_k,\ k\ge s}=0\qquad\quad
\forall\,s\ge 1,\\
\lb{faukak}
\tilde{\anA}\,\Big|_{u_k=a_k,\ k\ge 0}=0,\\
\lb{fbxx0}
\tilde{\anB}\,\Big|_{x=x_a,\ u_k=a_k,\ k\ge 0}=0.
\end{gather}

Furthermore, one has the following.
\begin{itemize}
\item The power series~\er{gfps} depends only on the variables
$x-x_a$, $t-t_a$, $u_k-a_k$ for $k=0,1,\dots,\oc-1$.
That is, one can write $m=\oc-1$ in~\er{gfps}.
\textup{(}In particular, if $\oc=0$ then~\er{gfps} depends only on $x-x_a$, $t-t_a$.\textup{)}
\item The power series~\er{ftanab} are of the form
\begin{gather}
\label{tanasum}
\tilde{\anA}=\sum_{l_1,l_2,i_0,\dots,i_\ocs\ge 0} 
(x-x_a)^{l_1} (t-t_a)^{l_2}(u_0-a_0)^{i_0}\dots(u_\ocs-a_\ocs)^{i_\ocs}\cdot
\tilde{\anA}^{l_1,l_2}_{i_0\dots i_\ocs},\\
\label{tanbsum}
\tilde{\anB}=
\sum_{l_1,l_2,j_0,\dots,j_{\ocs+\eo-1}\ge 0} 
(x-x_a)^{l_1} (t-t_a)^{l_2}(u_0-a_0)^{j_0}\dots(u_{\ocs+\eo-1}-a_{\ocs+\eo-1})^{j_{\ocs+\eo-1}}\cdot
\tilde{\anB}^{l_1,l_2}_{j_0\dots j_{\ocs+\eo-1}}
\end{gather}
for some $\tilde{\anA}^{l_1,l_2}_{i_0\dots i_\ocs},\tilde{\anB}^{l_1,l_2}_{j_0\dots j_{\ocs+\eo-1}}\in\bl$
and obey 
\beq
\lb{zcrtab}
D_x(\tilde{\anB})-D_t(\tilde{\anA})+[\tilde{\anA},\tilde{\anB}]=0.
\ee
That is, $\tilde{\anA}$, $\tilde{\anB}$ 
constitute a formal ZCR of order~$\le\oc$ with coefficients in~$\bl$.
Equations \er{fd=0}, \er{faukak}, \er{fbxx0} say that this ZCR is $a$-normal.
\item The power series~\er{gfps} satisfies the following.
\beq
\lb{cbl}
\text{The coefficients of the power series 
$\,\frac{\pd}{\pd x}(\anG)\cdot\anG^{-1}$, $\,\frac{\pd}{\pd t}(\anG)\cdot\anG^{-1}$, 
$\,\frac{\pd}{\pd u_k}(\anG)\cdot\anG^{-1}$, $\,k\in\zp$,
belong to~$\bl$.}
\ee
\end{itemize}
\end{theorem}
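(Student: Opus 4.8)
The plan is to follow the proof of Theorem~\ref{thnfzcr} from~\cite{scal13}, simply replacing analytic functions by formal power series throughout; since formal power series carry no convergence issues, the argument becomes purely algebraic and in fact simpler than the analytic one. First I would rewrite the relations~\er{ftanab} in the equivalent form
\beq
\notag
D_x(\anG)=\anG\,\anA-\tilde{\anA}\,\anG,\qquad\qquad
D_t(\anG)=\anG\,\anB-\tilde{\anB}\,\anG,
\ee
understood as identities of formal power series in $x-x_a$, $t-t_a$, $u_k-a_k$ with coefficients in $\gl(V)$, where $D_x$, $D_t$ act on such series as in Remark~\ref{psdxdtlie}. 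As in Remark~\ref{abcoef0}, conditions~\er{fd=0}, \er{faukak}, \er{fbxx0} are equivalent to the vanishing of a distinguished family of Taylor coefficients of $\tilde{\anA}$, $\tilde{\anB}$, in complete analogy with~\er{ab000}.

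Next I would build $\anG$ one layer at a time, matching the $\oc+2$ normalization conditions against the $\oc+2$ admissible arguments $u_0,\dots,u_{\oc-1},x,t$ of $\anG$. Start with~\er{fd=0} for $s=\oc$: since $\anA$ has order~$\le\oc$ and $\anG$ is free of $u_\oc$, applying $\pd/\pd u_\oc$ to the $D_x$-relation and restricting to $u_\oc=a_\oc$ converts the requirement $\frac{\pd\tilde{\anA}}{\pd u_\oc}\big|_{u_\oc=a_\oc}=0$ into a linear recursion
\beq
\notag
\frac{\pd\anG}{\pd u_{\oc-1}}=\anG\cdot\Big(\frac{\pd\anA}{\pd u_\oc}\Big)\Big|_{u_\oc=a_\oc},
\ee
which determines the $u_{\oc-1}$-dependence of $\anG$ uniquely from its restriction to $u_{\oc-1}=a_{\oc-1}$. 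Differentiating the $D_x$-relation further in $u_{\oc-1}$ and substituting what has just been found, the conditions~\er{fd=0} for $s=\oc-1,\dots,1$ yield analogous recursions fixing the $u_{\oc-2},\dots,u_0$ dependence of $\anG$; then~\er{faukak} gives a recursion in $x$ for $\anG\big|_{u_k=a_k,\,k\ge0}$, condition~\er{fbxx0} restricted to $x=x_a$ a recursion in $t$, and $\anG(a)=\id_V$ pins down the remaining constant term. Each step has a unique solution, so $\anG$ exists, is unique, and depends only on $x-x_a$, $t-t_a$, $u_0-a_0,\dots,u_{\oc-1}-a_{\oc-1}$, which is the assertion $m=\oc-1$ in~\er{gfps}.

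It then remains to check that this $\anG$ really does the job. That $\tilde{\anA}$, $\tilde{\anB}$ have order~$\le\oc$, i.e.\ are of the forms~\er{tanasum}, \er{tanbsum}, is automatic, since $\anA$, $\anB$ are and $\anG$ is free of $u_k$ for $k\ge\oc$. The zero-curvature identity~\er{zcrtab} is the formal counterpart of Remark~\ref{remzcrgt}: $D_x(\tilde{\anB})-D_t(\tilde{\anA})+[\tilde{\anA},\tilde{\anB}]=\anG\big(D_x(\anB)-D_t(\anA)+[\anA,\anB]\big)\anG^{-1}=0$ by~\er{dxanb0}. That the restricted vanishing imposed at each slice actually propagates to the full conditions~\er{fd=0}--\er{fbxx0}, and that later recursions do not spoil earlier ones, is forced by the zero-curvature equation, by the same bookkeeping as in~\cite{scal13}.

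The step I expect to be the main obstacle is the $\bl$-valuedness, i.e.\ properties~\er{abbl} and~\er{cbl}: a priori $\anG$ lies only in the formal completion of $\gl(V)$ and need not belong to any group integrating~$\bl$, so one cannot simply invoke that conjugation by $\anG$ preserves $\bl$. The remedy is to carry out the construction of $\anG$ and the proof that the adjoint action of $\anG$ preserves the space of $\bl$-valued power series \emph{simultaneously}, by induction on total degree in $x-x_a$, $t-t_a$, $u_k-a_k$: the recursions above express each homogeneous component of $\anG$ of a given degree through components of strictly lower degree and through the $\bl$-valued series $\anA$, $\anB$, and the associated logarithmic derivatives $\pd_x(\anG)\anG^{-1}$, $\pd_t(\anG)\anG^{-1}$, $\pd_{u_k}(\anG)\anG^{-1}$ are obtained from these by operations that stay inside $\bl$ by the inductive hypothesis. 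This gives~\er{cbl}, and then formulas~\er{ftanab} exhibit $\tilde{\anA}$, $\tilde{\anB}$ as $\bl$-valued, yielding~\er{abbl} and completing the argument.
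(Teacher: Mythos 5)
Your proposal is correct and follows essentially the route the paper itself indicates: the paper gives no proof of Theorem~\ref{tnffrzcr}, saying only that it is obtained ``similarly'' to the analytic Theorem~\ref{thnfzcr} proved in~\cite{scal13}, and your sketch is precisely that transplantation to formal power series, with the triangular recursions in $u_{\oc-1},\dots,u_0,x,t$ determining $\anG$ uniquely layer by layer. Your identification of the $\bl$-valuedness of the logarithmic derivatives (property~\er{cbl}) as the one point requiring a genuinely new, degree-induction argument in the absence of a Lie group integrating~$\bl$ is exactly where the formal setting departs from the analytic one.
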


Fix a vector space $V$ and a Lie subalgebra $\bl\subset\gl(V)$.
A formal power series of the form~\er{gfps} satisfying~\er{cbl} is called a \emph{formal gauge transformation}.
It is easily seen that formal gauge transformations constitute a group with respect 
to the associative multiplication of power series with coefficients in~$\gl(V)$.
Formulas~\er{ftanab} determine an action of the group 
of formal gauge transformations on the set of formal ZCRs with coefficients in~$\bl$.

The formal ZCR given by~\er{ftanab} is gauge equivalent to the formal ZCR 
given by $\anA$, $\anB$ satisfying \er{anasum}, \er{anbsum}, \er{abbl}, \er{dxanb0}.

\begin{remark}
\lb{rhfzsr}
Equations \er{fd=0}, \er{faukak}, \er{fbxx0}, \er{zcrtab} 
imply that the following homomorphism 
\beq
\notag
\hrf\cl\fds^\ocs(\CE,a)\to\bl,\qquad
\hrf\big(\ga^{l_1,l_2}_{i_0\dots i_\ocs}\big)=
\tilde{\anA}^{l_1,l_2}_{i_0\dots i_\ocs},\qquad
\hrf\big(\gb^{l_1,l_2}_{j_0\dots j_{\ocs+\eo-1}}\big)=
\tilde{\anB}^{l_1,l_2}_{j_0\dots j_{\ocs+\eo-1}},
\ee
is well defined, where $\tilde{\anA}^{l_1,l_2}_{i_0\dots i_\ocs},\tilde{\anB}^{l_1,l_2}_{j_0\dots j_{\ocs+\eo-1}}\in\bl$ 
are the coefficients of the power series~\er{tanasum},~\er{tanbsum}.
Theorem~\ref{tnffrzcr} implies that any formal ZCR of order~$\le\oc$ 
with coefficients in~$\bl$ is gauge equivalent to an 
$a$-normal formal ZCR corresponding to a homomorphism $\hrf\cl\fds^\ocs(\CE,a)\to\bl$.

We will use this in Remark~\ref{rfdsak}, 
in order to get some information about the algebra $\fds^0(\CE,a)$ for equation~\er{esak}.
\end{remark}

\section{Integrability conditions and examples of proving non-integrability}
\lb{secicn}

\subsection{Necessary conditions for integrability}
\lb{secintegr}

As said in Section~\ref{subsint1}, in this paper, 
integrability of PDEs is understood in the sense of soliton theory
and the inverse scattering method, relying on the use of ZCRs.

For each scalar evolution equation~\er{eveq_intr},
in Section~\ref{btcsev} we have defined the family of Lie algebras $\fds^\ocs(\CE,a)$, 
where $\CE$ is the infinite prolongation of~\er{eveq_intr}, 
$a$ is a point of the manifold~$\CE$, and $\ocs\in\zp$.
In this subsection and in Subsection~\ref{sbhheq}  
we show that, using the algebras $\fds^\ocs(\CE,a)$, 
one obtains some necessary conditions for integrability of equations~\er{eveq_intr}.
Examples of the use of these conditions in 
proving non-integrability for some equations 
are given in Example~\ref{exintu5} and in Subsection~\ref{sbhheq}.

In this subsection, $\mg$ is a finite-dimensional matrix Lie algebra, and 
$\mathcal{G}$ is the connected matrix Lie group corresponding to~$\mg$.
(The precise definition of~$\mathcal{G}$ is given in Definition~\ref{dlggt}.)
A gauge transformation is a matrix-function 
$G=G(x,t,u_0,\dots,u_l)$ with values in~$\mathcal{G}$, where $l\in\zp$.
ZCRs and gauge transformations are supposed to be defined 
on a neighborhood of a point $a\in\CE$. 

As said in Section~\ref{subsint1}, all algebras are supposed to be over the field~$\fik$, 
where $\fik$ is either $\Com$ or $\mathbb{R}$, 
and the variables $x$, $t$, $u_k$ take values in~$\fik$.

\begin{definition}
\lb{dzcrgnil}
A $\mg$-valued ZCR 
$$
A=A(x,t,u_0,u_1,\dots),\qquad B=B(x,t,u_0,u_1,\dots),\qquad D_x(B)-D_t(A)+[A,B]=0
$$
is called \emph{gauge-nilpotent} if there is 
a gauge transformation $G=G(x,t,u_0,\dots,u_l)$ such that the functions 
\beq
\notag
\tilde{A}=GAG^{-1}-D_x(G)\cdot G^{-1},\qquad\qquad
\tilde{B}=GBG^{-1}-D_t(G)\cdot G^{-1}
\ee
take values in a nilpotent Lie subalgebra of $\mg$.
In other words, a $\mg$-valued ZCR is gauge-nilpotent iff it is gauge 
equivalent to a ZCR with values in a nilpotent Lie subalgebra of $\mg$.
\end{definition}

It is known that a ZCR with values in a nilpotent Lie algebra 
cannot establish integrability of a given equation~\er{eveq_intr}.
Therefore, a gauge-nilpotent ZCR cannot establish integrability of~\er{eveq_intr} either,
because a gauge-nilpotent ZCR is equivalent to a ZCR with values in 
a nilpotent Lie algebra.

Hence the property
\beq
\text{``there is $\mg$ such that equation~\er{eveq_intr} possesses 
a $\mg$-valued ZCR which is not gauge-nilpotent''}
\ee
can be regarded as a necessary condition for integrability of equation~\er{eveq_intr}.

It is shown in~\cite{scal13} that, for any $\mg$-valued ZCR of order~$\le\ocs$,  
there is a homomorphism $\hrf\cl\fds^\ocs(\CE,a)\to\mg$ 
such that this ZCR is gauge equivalent to 
a ZCR with values in the Lie subalgebra $\hrf\big(\fds^\ocs(\CE,a)\big)\subset\mg$.
The construction of~$\hrf$ is described in Remark~\ref{fdrzcr}.

Therefore, if for each $\ocs\in\zp$ and each $a\in\CE$ 
the Lie algebra $\fds^\ocs(\CE,a)$ is nilpotent then any ZCR of~\er{eveq_intr} 
is gauge-nilpotent, which implies that equation~\er{eveq_intr} is not integrable.
This yields the following result. 
\begin{theorem}
\lb{thgeninteg}
Let $\CE$ be the infinite prolongation of an equation of the form~\er{eveq_intr}.
If for each $\ocs\in\zp$ and each $a\in\CE$ 
the Lie algebra $\fds^{\ocs}(\CE,a)$ is nilpotent, 
then this equation is not integrable.

In other words, the property 
\beq
\lb{fdgennots}
\text{``there exist $\ocs\in\zp$ and $a\in\CE$ such that 
the Lie algebra $\fds^{\ocs}(\CE,a)$ is not nilpotent''}
\ee
is a necessary condition for integrability of equation~\er{eveq_intr}.
\end{theorem}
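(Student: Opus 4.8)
The plan is to reduce the statement to two already-available ingredients: the classification of ZCRs of the equation by homomorphisms out of the algebras $\fds^\ocs(\CE,a)$, and the soliton-theoretic principle that a ZCR valued in a nilpotent Lie algebra carries no integrability information. The logical skeleton is the contrapositive: assuming every $\fds^\ocs(\CE,a)$ is nilpotent, I would show that every ZCR of the equation is gauge-nilpotent in the sense of Definition~\ref{dzcrgnil}, and hence that the equation cannot be integrable.

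First I would fix an arbitrary finite-dimensional matrix Lie algebra $\mg$ and an arbitrary $\mg$-valued ZCR given by functions $A$, $B$ satisfying $D_x(B)-D_t(A)+[A,B]=0$. Since $A$ is analytic, it depends on only finitely many of the coordinates $u_k$; hence the ZCR has finite order $\le\ocs$ for some $\ocs\in\zp$, and $A$, $B$ are defined and analytic on a neighborhood of some point $a\in\CE$. At this point I would invoke the construction recalled in Remark~\ref{fdrzcr} (from~\cite{scal13}): it produces a Lie algebra homomorphism $\hrf\colon\fds^\ocs(\CE,a)\to\mg$ such that the given ZCR is gauge equivalent to a ZCR taking values in the subalgebra $\hrf\bigl(\fds^\ocs(\CE,a)\bigr)\subset\mg$.

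Next I would feed in the hypothesis. By assumption $\fds^\ocs(\CE,a)$ is nilpotent, and the image of a nilpotent Lie algebra under any Lie algebra homomorphism is again nilpotent --- concretely, each term of the lower central series of $\hrf\bigl(\fds^\ocs(\CE,a)\bigr)$ is the $\hrf$-image of the corresponding term for $\fds^\ocs(\CE,a)$, so the series terminates. Therefore $\hrf\bigl(\fds^\ocs(\CE,a)\bigr)$ is a nilpotent Lie subalgebra of $\mg$, which means, by Definition~\ref{dzcrgnil}, that the chosen ZCR is gauge-nilpotent. Since $\mg$ and the ZCR were arbitrary, every ZCR of the equation is gauge-nilpotent.

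Finally I would close the argument with the principle stated in the paragraphs preceding the theorem: a ZCR valued in a nilpotent Lie algebra, and hence any gauge-nilpotent ZCR, cannot establish integrability in the sense of soliton theory and the inverse scattering method. As the equation admits only gauge-nilpotent ZCRs, it cannot be integrable in this sense. I expect the only non-formal ingredient to be this last principle, which in the present framework is taken as background from soliton theory rather than reproved here; the remainder of the proof is bookkeeping --- matching a given ZCR to a homomorphism out of $\fds^\ocs(\CE,a)$ via Remark~\ref{fdrzcr}, and using that nilpotency passes to homomorphic images.
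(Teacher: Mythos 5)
Your proposal is correct and follows essentially the same route as the paper: reduce any ZCR to one valued in $\hrf\bigl(\fds^\ocs(\CE,a)\bigr)$ via the homomorphism of Remark~\ref{fdrzcr}, note that homomorphic images of nilpotent Lie algebras are nilpotent so every ZCR is gauge-nilpotent, and invoke the soliton-theoretic principle that nilpotent-valued (hence gauge-nilpotent) ZCRs cannot establish integrability. The only difference is that you spell out the finite-order and lower-central-series details that the paper leaves implicit.
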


For some classes of equations~\er{eveq_intr}
one can find a nonnegative integer $r$ such that for any $k>r$ 
the algebra $\fds^k(\CE,a)$ is obtained from $\fds^{k-1}(\CE,a)$ 
by central extension.
This implies that for any $k>r$ 
the algebra $\fds^k(\CE,a)$ is obtained from $\fds^{r}(\CE,a)$ 
by applying several times the operation of central extension.
Then condition~\er{fdgennots} should be checked for $\ocs=r$.

For example, according to Theorem~\ref{thcenter} and Remark~\ref{remfdceukd}, 
for equations of the form~\er{utukd} we can take $r=\kd-1+\delta_{\kd,3}$. 
According to Proposition~\ref{fdockn}, 
for the Krichever-Novikov equation~\er{knedef} one can take $r=1$.

Let us show how this works for equations~\er{utukd}.

\begin{theorem}
\lb{fanilsol}
Let $\CE$ be the infinite prolongation of an equation of the form~\er{utukd} 
with $\kd\in\{1,2,3\}$. Let $a\in\CE$. 
If the Lie algebra $\fds^{\kd-1+\delta_{\kd,3}}(\CE,a)$ is nilpotent, 
then $\fds^\ocs(\CE,a)$ is nilpotent for all $\ocs\in\zp$. 
\end{theorem}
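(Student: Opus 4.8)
The plan is to split the argument according to whether $\ocs$ is at most or greater than $r:=\kd-1+\delta_{\kd,3}$, and to combine Theorem~\ref{thcenter} with the elementary fact that nilpotency of Lie algebras is preserved both under taking quotients and under central extensions.

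First I would dispose of the range $\ocs\le r$. The sequence~\er{fdnn-1} provides a surjective homomorphism $\fds^{r}(\CE,a)\to\fds^{\ocs}(\CE,a)$, obtained as a composition of the maps $\vf_j$. Since a quotient of a nilpotent Lie algebra is nilpotent, the assumption that $\fds^{r}(\CE,a)$ is nilpotent immediately gives that $\fds^{\ocs}(\CE,a)$ is nilpotent for all $\ocs\le r$; in particular this settles $\ocs=0$ and serves as the base case $\ocs=r$ for the next step.

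Next I would treat $\ocs>r$ by induction on $\ocs$. Assume $\fds^{\ocs-1}(\CE,a)$ is nilpotent, with $\ocs>r$, i.e.\ $\ocs\ge\kd+\delta_{\kd,3}$ (this is exactly the range covered by Theorem~\ref{thcenter}, since $r+1=\kd+\delta_{\kd,3}$). By property~\er{v1v2ker} of Theorem~\ref{thcenter}, the kernel of the surjective homomorphism $\vf_\ocs\cl\fds^{\ocs}(\CE,a)\to\fds^{\ocs-1}(\CE,a)$ is contained in the center of $\fds^{\ocs}(\CE,a)$, so $\fds^{\ocs}(\CE,a)$ is a central extension of the nilpotent Lie algebra $\fds^{\ocs-1}(\CE,a)$. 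It then suffices to observe that a central extension of a nilpotent Lie algebra is nilpotent: if $\mathfrak{I}\subset\bl_1$ is a central ideal with $\bl_1/\mathfrak{I}$ nilpotent of class $c$, then the $(c+1)$-st term of the lower central series of $\bl_1$ maps to $0$ in $\bl_1/\mathfrak{I}$, hence is contained in $\mathfrak{I}$, and the next term equals $[\bl_1,\mathfrak{I}]=0$; thus $\bl_1$ is nilpotent of class at most $c+1$. Applying this with $\bl_1=\fds^{\ocs}(\CE,a)$ and $\mathfrak{I}=\ker\vf_\ocs$ completes the induction step, and hence the proof.

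I do not anticipate a real obstacle: the substantive content has already been extracted in Theorem~\ref{thcenter}, and the remaining statement is a formal consequence of it. The only point requiring care is bookkeeping with the indices — one must make sure that the induction for $\ocs>r$ invokes~\er{v1v2ker} only in its valid range $\ocs\ge\kd+\delta_{\kd,3}$, which coincides with $\ocs\ge r+1$, so that the base case $\ocs=r$ of the induction is supplied by the quotient argument of the first step rather than by Theorem~\ref{thcenter}. (Alternatively, one could invoke Remark~\ref{remfdceukd} directly, which already records that $\fds^{\ocs}(\CE,a)$ for $\ocs\ge\kd+\delta_{\kd,3}$ is obtained from $\fds^{r}(\CE,a)$ by applying several times the operation of central extension, and then apply the nilpotency-preservation lemma once.)
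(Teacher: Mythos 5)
Your proposal is correct and follows essentially the same route as the paper: the paper's proof likewise combines the surjections from~\er{fdnn-1} for $\ocs\le\kd-1+\delta_{\kd,3}$ with the iterated-central-extension statement of Theorem~\ref{thcenter} and Remark~\ref{remfdceukd} for $\ocs\ge\kd+\delta_{\kd,3}$, leaving the nilpotency-preservation facts as "clearly implied." Your only addition is to spell out explicitly the elementary lemma that a central extension of a nilpotent Lie algebra is nilpotent, which the paper omits.
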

\begin{proof}
According to Theorem~\ref{thcenter} and Remark~\ref{remfdceukd}, 
for every $\ocs\ge\kd+\delta_{\kd,3}$ the Lie algebra $\fds^\ocs(\CE,a)$ 
is obtained from $\fds^{\kd-1+\delta_{\kd,3}}(\CE,a)$ 
by applying several times the operation of central extension. 

Since the homomorphisms~\er{fdnn-1} are surjective, for each $\tilde{\ocs}\le\kd-1+\delta_{\kd,3}$ 
we have a surjective homomorphism $\fds^{\kd-1+\delta_{\kd,3}}(\CE,a)\to\fds^{\tilde{\ocs}}(\CE,a)$.

Clearly, these properties imply the statement of Theorem~\ref{fanilsol}.
\end{proof}

Combining Theorem~\ref{fanilsol} with Theorem~\ref{thgeninteg}, we obtain the following. 
\begin{theorem}
\lb{integfq12}
Let $\CE$ be the infinite prolongation of an equation of the form~\er{utukd}
with $\kd\in\{1,2,3\}$. 

If for all $a\in\CE$ the Lie algebra $\fds^{\kd-1+\delta_{\kd,3}}(\CE,a)$ is nilpotent, 
then for each $\ocs\in\zp$ any ZCR of order~$\le\ocs$
\beq
\notag
A=A(x,t,u_0,u_1,\dots,u_\ocs),\qquad B=B(x,t,u_0,u_1,\dots),\qquad D_x(B)-D_t(A)+[A,B]=0
\ee
is gauge-nilpotent. 
Hence, if\/   
$\fds^{\kd-1+\delta_{\kd,3}}(\CE,a)$ is nilpotent for all $a\in\CE$, 
then equation~\er{utukd} is not integrable.

In other words, the property 
\beq
\lb{fdnots}
\text{``the Lie algebra $\fds^{\kd-1+\delta_{\kd,3}}(\CE,a)$ is not nilpotent 
for some $a\in\CE$''}
\ee
is a necessary condition for integrability of equations of the form~\er{utukd}.
\end{theorem}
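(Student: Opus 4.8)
The plan is to obtain this statement as a synthesis of Theorem~\ref{fanilsol}, Theorem~\ref{thgeninteg}, and the construction of the classifying homomorphism from Remark~\ref{fdrzcr}; no genuinely new argument is needed beyond bookkeeping the logical structure. First I would assume the hypothesis: the Lie algebra $\fds^{\kd-1+\delta_{\kd,3}}(\CE,a)$ is nilpotent for every $a\in\CE$. Applying Theorem~\ref{fanilsol} pointwise then yields that $\fds^\ocs(\CE,a)$ is nilpotent for every $\ocs\in\zp$ and every $a\in\CE$.

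Next, to establish the gauge-nilpotency claim, I would fix $\ocs\in\zp$ and take an arbitrary $\mg$-valued ZCR of order~$\le\ocs$ defined on a neighborhood of a point $a\in\CE$, where $\mg\subset\gl_\sm$ is a finite-dimensional matrix Lie algebra. By Theorem~\ref{thnfzcr} together with the discussion in Remark~\ref{fdrzcr}, this ZCR is gauge equivalent to an $a$-normal ZCR, and there is an associated homomorphism $\hrf\cl\fds^\ocs(\CE,a)\to\mg$ such that the given ZCR is gauge equivalent to a ZCR taking values in the Lie subalgebra $\hrf\big(\fds^\ocs(\CE,a)\big)\subset\mg$. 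Since $\fds^\ocs(\CE,a)$ is nilpotent by the previous step, its homomorphic image $\hrf\big(\fds^\ocs(\CE,a)\big)$ is a nilpotent Lie subalgebra of $\mg$. Hence, by Definition~\ref{dzcrgnil}, the ZCR is gauge-nilpotent. As $\ocs$, the ZCR, and $\mg$ were arbitrary, the first assertion of the statement follows.

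The non-integrability assertion is then immediate from Theorem~\ref{thgeninteg}: since $\fds^\ocs(\CE,a)$ is nilpotent for all $\ocs\in\zp$ and all $a\in\CE$, that theorem gives that equation~\er{utukd} is not integrable. The final sentence of the statement is just the contrapositive reformulation of this implication. I do not expect a serious obstacle here; the only points requiring care are the pointwise nature of the argument --- each ZCR is treated near the specific point $a\in\CE$ where it is defined, and the hypothesis is invoked exactly at that $a$ (indeed for all $a$, as assumed) --- and checking that the property ``homomorphic image of a nilpotent Lie algebra is nilpotent'' is what licenses the passage from nilpotency of $\fds^\ocs(\CE,a)$ to nilpotency of $\hrf\big(\fds^\ocs(\CE,a)\big)$.
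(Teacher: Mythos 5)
Your proposal is correct and follows essentially the same route as the paper, which derives this theorem by combining Theorem~\ref{fanilsol} with Theorem~\ref{thgeninteg}, the gauge-nilpotency claim coming from the classifying homomorphism $\hrf\cl\fds^\ocs(\CE,a)\to\mg$ of Remark~\ref{fdrzcr} and the fact that a homomorphic image of a nilpotent Lie algebra is nilpotent. No discrepancies to report.
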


\begin{remark}
\lb{rsymcl} 
In this paper we study integrability by means of ZCRs. 
Another well-known approach to integrability uses symmetries and conservation laws. 
Many remarkable classification results for some types of equations~\er{eveq_intr}
possessing higher-order symmetries or conservation laws are known 
(see, e.g.,~\cite{mesh-sok13,mikh91,sand-wang2009} and references therein). 

However, it is also known that the approach of symmetries and conservation laws 
is not completely universal for the study of integrability.
For a given evolution equation, 
non-existence of higher-order symmetries and conservation laws does not 
guarantee non-integrability.
For example, in~\cite{sak93} one can find a scalar evolution equation 
which is connected with KdV by a Miura-type transformation and is, therefore,
integrable, but does not possess higher-order symmetries and conservation laws.
In Subsection~\ref{szcrsak} we present this equation and a ZCR for it.

Examples of the situation when two evolution equations are connected 
by a Miura-type transformation but only one of the equations possesses higher-order symmetries
can be found also in~\cite{sok_sym88,svin-sok_nonl90}.
(In~\cite{sok_sym88,svin-sok_nonl90} Miura-type transformations are called differential substitutions.)

When we speak about symmetries and conservation laws, 
we mean the standard notions of local symmetries and conservation 
laws~\cite{mesh-sok13,mikh91,sand-wang2009}, 
which may be of arbitrarily high order with respect to the variables~$u_k$.
One can try to consider also nonlocal symmetries depending on so-called nonlocal variables
(see, e.g.,~\cite{kras-vin_nonl84,nonl89,svin-sok_nonl90} and references therein),
but the theory of nonlocal symmetries is much less developed than that of local symmetries. 
A classification result for equations of order~$2$ satisfying certain integrability conditions 
related to existence of higher-order weakly nonlocal symmetries
is presented in~\cite{svin-sok_nonl90}.
\end{remark}

\begin{remark}
\lb{rrinf}
In this subsection we study ZCRs with values in finite-dimensional Lie algebras.
Using the theory presented in Section~\ref{szcrinf},
one can show that the results of this subsection are valid also for formal ZCRs 
with coefficients in arbitrary (possibly infinite-dimensional) Lie algebras.
\end{remark}

\begin{example}
\lb{exintu5}
Consider~\er{utukd} in the case $\kd=2$. 
Let $\CE$ be the infinite prolongation of an equation of the form
\beq
\lb{utu5i}
u_t=u_5+f(x,t,u_0,u_1,u_2,u_3).
\ee 
According to Theorem~\ref{thu5f}, 
if $\dfrac{\pd^3 f}{\pd u_{3}\pd u_{3}\pd u_{3}}\neq 0$ 
then $\fds^1(\CE,a)=0$ for all $a\in\CE$. 

Combining this with Theorem~\ref{integfq12}, we get the following. 
If $\dfrac{\pd^3 f}{\pd u_{3}\pd u_{3}\pd u_{3}}\neq 0$ 
then equation~\er{utu5i} is not integrable. 
(As said above, this means that, for any $\oc\in\zp$, 
any ZCR of order~$\le\ocs$ for this equation is gauge-nilpotent.)
\end{example}

\subsection{An evolution equation related to the H\'enon-Heiles system}
\lb{sbhheq}
The following  scalar evolution equation was studied 
by A.~P.~Fordy~\cite{fordy-hh} in connection with the H\'enon-Heiles system
\beq
\lb{hheq}
u_t=u_{5}+(8\al-2\be)u_0u_{3}+(4\al-6\be)u_1u_{2}-20\al\be u_0^2u_1,
\ee
where $\al$, $\be$ are arbitrary constants.
(In~\cite{fordy-hh} these constants are denoted by $a$, $b$, but we use 
the symbol~$a$ for a different purpose.)

If $\al=\be=0$ then \er{hheq} is the linear equation $u_t=u_{5}$.
Since we intend to study nonlinear PDEs, in what follows we suppose 
that at least one of the constants $\al$, $\be$ is nonzero.
We want to determine for which values of $\al$, $\be$ equation~\er{hheq} 
is not integrable.

The following facts were noticed in~\cite{fordy-hh}.
\begin{itemize}
\item If $\al+\be=0$ then \er{hheq} is equivalent
to the Sawada-Kotera equation. (That is, if $\al+\be=0$ 
then \er{hheq} can be transformed to the Sawada-Kotera equation
by scaling of the variables. 
As said above, we assume that 
at least one of the constants $\al$, $\be$ is nonzero.)
\item If $6\al+\be=0$ then \er{hheq} is equivalent
to the $5$th-order flow in the KdV hierarchy.
\item If $16\al+\be=0$ then \er{hheq} is equivalent
to the Kaup-Kupershmidt equation.
\end{itemize}
So in the cases $\al+\be=0$, $6\al+\be=0$, $16\al+\be=0$ 
equation~\er{hheq} is equivalent to a well-known integrable equation.

Now we need to study the case
\beq
\lb{albe0}
\al+\be\neq 0,\qquad 6\al+\be\neq 0,\qquad 16\al+\be\neq 0.
\ee
As discussed in Remark~\ref{rsymcl}, 
there are several different approaches to the notion of integrability of PDEs.
According to~\cite{fordy-hh} and references therein,
in the case~\er{albe0} equation~\er{hheq} 
is not integrable in the approach of symmetries and conservation laws.
(This means that the equation does not possess higher-order symmetries 
and conservation laws.)
However, according to Remark~\ref{rsymcl}, this fact does not 
guarantee non-integrability of~\er{hheq} in some other approaches.

Let us see what the structure of the algebras $\fds^\oc(\CE,a)$ 
can say about integrability or non-integrability of equation~\er{hheq}
in the case~\er{albe0}.
Lemma~\ref{lhhfd01} is proved in~\cite{sbt18}.
\begin{lemma}[\cite{sbt18}]
\lb{lhhfd01}
Let $\CE$ be the infinite prolongation of equation~\er{hheq}. Let $a\in\CE$.
Then 
\begin{itemize}
\item the Lie algebra $\fds^1(\CE,a)$ 
is obtained from $\fds^0(\CE,a)$ by central extension,
\item if \er{albe0} holds and $\al\neq 0$,
the algebra $\fds^0(\CE,a)$ is isomorphic to the direct 
sum of the $3$-dimensional Lie algebra $\msl_2(\fik)$ 
and an abelian Lie algebra of dimension~$\le 3$,
\item if $\al=0$ and $\be\neq 0$, the Lie algebra  
$\fds^0(\CE,a)$ is nilpotent and is of dimension~$\le 6$.
\end{itemize}
\end{lemma}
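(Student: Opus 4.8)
The plan is to treat the three assertions separately: the first by pushing the argument behind Theorem~\ref{thcenter} down one order, and the other two by an explicit Wahlquist--Estabrook-type computation of $\fds^0(\CE,a)$.

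Equation~\er{hheq} is of the form~\er{utukd} with $\kd=2$, equivalently of the form~\er{utu5f} with $f=(8\al-2\be)u_0u_3+(4\al-6\be)u_1u_2-20\al\be u_0^2u_1$, and for this $f$ one has $\frac{\pd^2 f}{\pd u_3^2}=\frac{\pd^2 f}{\pd u_2\pd u_3}=\frac{\pd^2 f}{\pd u_2^2}=0$, i.e.\ $f$ is affine in $(u_2,u_3)$; in particular $\frac{\pd^3 f}{\pd u_3^3}=0$, so Theorem~\ref{thu5f} does not apply. Theorem~\ref{thcenter} gives that $\fds^\ocs(\CE,a)$ is a central extension of $\fds^{\ocs-1}(\CE,a)$ for $\ocs\ge 2$ but says nothing for $\ocs=1$; the first bullet asserts that it holds also for $\ocs=1$, and the reason is precisely the affine dependence of $f$ on $(u_2,u_3)$, which lets the whole chain of Lemmas of Section~\ref{seckdvtype} leading to Theorem~\ref{thcenter} be run at $\ocs=1$. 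Concretely: for $\ocs=1$ the relation~\er{gcu5f} still forces $\gb=u_5\frac{\pd\ga}{\pd u_1}-u_2u_4\frac{\pd^2\ga}{\pd u_1^2}+u_4\gb_{01}(x,t,u_0,u_1)+\gb_{00}(x,t,u_0,u_1,u_2,u_3)$ and, since $\frac{\pd^2 f}{\pd u_3^2}=0$, also $\frac{\pd^2\gb_{00}}{\pd u_3^2}=\frac{\pd^2\ga}{\pd u_1^2}$. Applying $\frac{\pd^2}{\pd u_1\pd u_5}-\frac{\pd^2}{\pd u_2\pd u_4}+\frac12\frac{\pd^2}{\pd u_3^2}$ to~\er{gcu5f} and using $\frac{\pd^2 f}{\pd u_2^2}=\frac{\pd^2 f}{\pd u_2\pd u_3}=\frac{\pd^2 f}{\pd u_3^2}=0$ (which makes the $D_t(\ga)$-terms collapse exactly as in the case $\kd=2$ of the Lemma that establishes~\er{dxpd2aun}) yields $D_x\big(\frac{\pd^2\ga}{\pd u_1^2}\big)+\big[\ga,\frac{\pd^2\ga}{\pd u_1^2}\big]=0$. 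From here the arguments deriving~\er{ukga2}, \er{ga210}, \er{ga0ga20} and~\er{tgalfdq} go through verbatim with $\ocs=1$: one obtains $\ga=(u_1-a_1)^2\ga_2(x,t)+\ga_0(x,t,u_0)$ with $\frac{\pd\ga_2}{\pd x}=0$ and $[\ga_2,\ga_0]=0$, the coefficients $\tilde\ga^l$ of $\ga_2$ are central in $\fds^1(\CE,a)$, and $\ker\vf_1$ is exactly the ideal they generate; hence $\ker\vf_1$ is central, which is the first bullet. (No restriction on $\al,\be$ is used here, consistently with the statement.)

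For the second and third bullets I would compute $\fds^0(\CE,a)$ directly from Definition~\ref{dfd}. With $\ocs=0$ one has $\ga=\ga(x,t,u_0)$ with $\ga|_{u_0=a_0}=0$ and $\gb=\gb(x,t,u_0,\dots,u_4)$; differentiating $D_x(\gb)-D_t(\ga)+[\ga,\gb]=0$ successively in $u_5,u_4,u_3,u_2,u_1$ expresses $\gb$ as an explicit polynomial in $u_1,\dots,u_4$ whose coefficients are built from iterated $u_0$- and $x$-derivatives of $\ga$ together with their brackets with $\ga$, and leaves a Wahlquist--Estabrook-type recursion for the Taylor coefficients $\ga^{l_1,l_2}_{i_0}$; evaluating the successive relations at $u_0=a_0$, $x=x_a$, $t=t_a$ turns it into a finite set of Lie-algebraic relations. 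Feeding in the actual coefficients $8\al-2\be$, $4\al-6\be$, $-20\al\be$ of~\er{hheq}, one checks that under~\er{albe0} with $\al\ne 0$ the recursion closes up into a finite-dimensional Lie algebra that splits as the direct sum of $\msl_2(\fik)$ and an abelian ideal of dimension at most~$3$ (in analogy with the KdV case of Example~\ref{parkdv}), while for $\al=0$, $\be\ne 0$, so that $f=-2\be(u_0u_3+3u_1u_2)$, the bracket relations degenerate and the algebra is nilpotent of dimension at most~$6$; the three inequalities of~\er{albe0} enter exactly as the conditions that exclude the loop-type algebras occurring in the Sawada--Kotera, fifth-order KdV and Kaup--Kupershmidt specializations of~\er{hheq}.

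The hard part is this last step --- carrying the Wahlquist--Estabrook recursion far enough to see that the generated Lie algebra is finite-dimensional and to identify its isomorphism type (the genuinely nonabelian $\msl_2(\fik)$ versus the nilpotent case), and verifying that~\er{albe0} is precisely the right non-degeneracy condition. This is a lengthy but essentially mechanical, computer-algebra-friendly calculation; everything else is a routine adaptation of the material in Sections~\ref{btcsev}--\ref{seckdvtype}.
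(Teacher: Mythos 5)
First, note that the paper does not actually prove Lemma~\ref{lhhfd01}: it is imported verbatim from~\cite{sbt18} (``Lemma~\ref{lhhfd01} is proved in~\cite{sbt18}''), so there is no in-paper argument to compare yours against; I can only judge your proposal on its own merits.

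Your treatment of the first bullet is sound and is the right idea. You correctly identify that Theorem~\ref{thcenter} with $\kd=2$ only covers $\ocs\ge 2$, and that the obstruction to running the same argument at $\ocs=1$ sits entirely in the terms of $D_t(\ga)$ and of the analogue of~\er{gagbpd2} involving second derivatives of $f$ with respect to $u_2,u_3$. For the $f$ of~\er{hheq} these vanish identically (for all $\al,\be$), so the operator $\frac{\pd^2}{\pd u_1\pd u_5}-\frac{\pd^2}{\pd u_2\pd u_4}+\frac12\frac{\pd^2}{\pd u_3\pd u_3}$ applied to~\er{gckdv} still produces $\frac52\bigl(D_x(\frac{\pd^2}{\pd u_1\pd u_1}(\ga))+[\ga,\frac{\pd^2}{\pd u_1\pd u_1}(\ga)]\bigr)=0$, and the chain \er{ukga2}, \er{ga210}, \er{ga0ga20}, \er{tgalfdq} then runs unchanged with $\ocs=1$. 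This gives the centrality of $\ker\vf_1$ and hence the first bullet.

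The second and third bullets, however, are not proved; they are only announced. The entire mathematical content of those two assertions is the explicit determination of $\fds^0(\CE,a)$ --- in particular that the Wahlquist--Estabrook-type recursion for the coefficients $\ga^{l_1,l_2}_{i_0}$ \emph{closes up} into a finite-dimensional algebra, that under~\er{albe0} with $\al\neq 0$ the non-nilpotent part is exactly one copy of $\msl_2(\fik)$ (rather than, say, a loop algebra $\msl_2(\fik[\la])$ as happens for the three excluded ratios of $\al,\be$, or a larger semisimple quotient), and that the complementary ideal is abelian of dimension $\le 3$. You state the expected outcome and correctly flag where the conditions $\al+\be\neq0$, $6\al+\be\neq0$, $16\al+\be\neq0$ should enter, but you exhibit none of the relations that force this structure. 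Saying the computation is ``mechanical and computer-algebra-friendly'' does not discharge it: without at least the key brackets showing finite-dimensionality and the $\msl_2$ versus nilpotent dichotomy, bullets two and three remain unestablished. So the proposal proves one of the three claims and leaves the other two as a plan.
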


Combining Lemma~\ref{lhhfd01} with Theorem~\ref{thcenter}
and Remark~\ref{remfdceukd}, we get the following.
\begin{theorem}
\lb{thhhfd}
Let $\CE$ be the infinite prolongation of equation~\er{hheq}. Let $a\in\CE$.
Then one has the following.
\begin{itemize}
\item For any $\oc\in\zp$, the kernel of the surjective homomorphism 
$\fds^\oc(\CE,a)\to\fds^0(\CE,a)$ from~\er{fdnn-1} is nilpotent.
The algebra $\fds^\oc(\CE,a)$ is obtained from the algebra $\fds^0(\CE,a)$
by applying several times the operation of central extension. 
\item If \er{albe0} holds and $\al\neq 0$, then 
$\fds^0(\CE,a)$ is isomorphic to the direct sum of $\msl_2(\fik)$ 
and an abelian Lie algebra of dimension~$\le 3$,
and for each $\oc\in\zp$ there is a surjective homomorphism 
$\fds^\oc(\CE,a)\to\msl_2(\fik)$ with nilpotent kernel.
\item If $\al=0$ and $\be\neq 0$, the Lie algebra $\fds^\oc(\CE,a)$ is nilpotent 
for all $\oc\in\zp$.
\end{itemize}
\end{theorem}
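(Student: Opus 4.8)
The plan is to deduce Theorem~\ref{thhhfd} formally from Lemma~\ref{lhhfd01}, Theorem~\ref{thcenter} and Remark~\ref{remfdceukd}, together with two elementary facts about towers of central extensions of Lie algebras. Note first that equation~\er{hheq} has the form~\er{utukd} with $\kd=2$, so $\delta_{\kd,3}=0$ and $\kd-1+\delta_{\kd,3}=1$. The crucial point is that, for equation~\er{hheq}, \emph{every} homomorphism $\vf_\ocs\cl\fds^\ocs(\CE,a)\to\fds^{\ocs-1}(\CE,a)$ in the sequence~\er{fdnn-1} is a central extension: for $\ocs\ge 2$ this is the content of Theorem~\ref{thcenter} applied with $\kd=2$ (then $\ker\vf_\ocs$ lies in the centre of $\fds^\ocs(\CE,a)$), while for $\ocs=1$ it is the first bullet of Lemma~\ref{lhhfd01}. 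Hence, for each $\oc\in\zp$, the chain
\[
\fds^\oc(\CE,a)\xrightarrow{\vf_\oc}\fds^{\oc-1}(\CE,a)\xrightarrow{\vf_{\oc-1}}\cdots\xrightarrow{\vf_1}\fds^0(\CE,a)
\]
realizes $\fds^\oc(\CE,a)$ as obtained from $\fds^0(\CE,a)$ by applying the operation of central extension $\oc$ times, which already gives the second assertion of the first bullet.

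For the nilpotency statements I would record two routine lemmas. (a) If $\mg_0,\dots,\mg_N$ is a tower in which each $\mg_i\to\mg_{i-1}$ is a central extension, then the kernel $K$ of the composite surjection $\mg_N\to\mg_0$ is nilpotent of class $\le N$; indeed, writing $K_j=\ker(\mg_N\to\mg_{N-j})$ one checks that $[\mg_N,K_j]\subseteq K_{j-1}$ (the image of $K_j$ in $\mg_{N-j+1}$ lands in the central ideal $\ker(\mg_{N-j+1}\to\mg_{N-j})$), and iterating gives $\ad(\mg_N)^N(K)=0$. (b) A central extension of a nilpotent Lie algebra is nilpotent (if $\gamma_{n+1}(\mg_0)=0$ then $\gamma_{n+1}(\mg_1)$ lies in the central kernel, so $\gamma_{n+2}(\mg_1)=0$). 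Applying (a) to the tower above with $\mg_i=\fds^i(\CE,a)$ shows that $\ker\big(\fds^\oc(\CE,a)\to\fds^0(\CE,a)\big)$ is nilpotent, which completes the first bullet.

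For the second bullet, assume~\er{albe0} and $\al\ne 0$. By Lemma~\ref{lhhfd01} we have $\fds^0(\CE,a)\cong\msl_2(\fik)\oplus\mathfrak{b}$ with $\mathfrak{b}$ abelian of dimension $\le 3$; since $\msl_2(\fik)$ has trivial centre, $\mathfrak{b}$ is exactly the centre of $\fds^0(\CE,a)$, so the projection $\fds^0(\CE,a)\to\msl_2(\fik)$ is itself a central extension. Appending this step to the tower of the first bullet and applying (a), the composite $\fds^\oc(\CE,a)\to\msl_2(\fik)$ is surjective with nilpotent kernel. For the third bullet, assume $\al=0$ and $\be\ne 0$, so $\fds^0(\CE,a)$ is nilpotent by Lemma~\ref{lhhfd01}; moving up the tower from $\fds^0(\CE,a)$ and invoking (b) at each step shows that $\fds^\oc(\CE,a)$ is nilpotent for every $\oc\in\zp$. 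The argument is essentially bookkeeping once Theorem~\ref{thcenter} and Lemma~\ref{lhhfd01} are granted; the only point requiring care is that centrality of the very first map $\vf_1\cl\fds^1(\CE,a)\to\fds^0(\CE,a)$ is supplied by Lemma~\ref{lhhfd01} and not by Theorem~\ref{thcenter}, which covers only $\ocs\ge 2$, so I do not anticipate a genuine obstacle.
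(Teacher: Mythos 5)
Your proposal is correct and follows exactly the route the paper intends: the paper's proof of Theorem~\ref{thhhfd} is just the one-line remark that it follows by combining Lemma~\ref{lhhfd01} with Theorem~\ref{thcenter} and Remark~\ref{remfdceukd}, and your argument is the careful expansion of that combination, including the two routine facts about towers of central extensions and the correct observation that centrality of $\vf_1$ must come from Lemma~\ref{lhhfd01} rather than from Theorem~\ref{thcenter} (which for $\kd=2$ only covers $\ocs\ge 2$). No gaps.
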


According to Theorem~\ref{thhhfd}, 
if $\al=0$ and $\be\neq 0$ then for any $\oc\in\zp$ 
the Lie algebra $\fds^\oc(\CE,a)$ for~\er{hheq} is nilpotent. 
Then Theorem~\ref{thgeninteg} implies
that equation~\er{hheq} is not integrable in the case when $\al=0$ and $\be\neq 0$.

Now it remains to study the case when \er{albe0} holds and $\al\neq 0$.
Before doing this, we need to discuss something else.
All our experience in the study of the algebras $\fds^\oc(\CE,a)$ 
for various evolution equations suggests that the following conjecture is valid.
\begin{conjecture}
\lb{snilp-new}
Let $\CE$ be the infinite prolongation of 
a \textup{(1+1)}-dimensional evolution equation~\er{eveq_intr}.
Suppose that the equation is integrable. 
Then there exist $\ocs\in\zp$ and $a\in\CE$ such that
\begin{itemize}
\item the Lie algebra $\fds^{\ocs}(\CE,a)$ is infinite-dimensional,
\item for any nilpotent ideal $\mathfrak{I}\subset\fds^{\ocs}(\CE,a)$, 
the quotient Lie algebra $\fds^{\ocs}(\CE,a)/\mathfrak{I}$ is infinite-dimensional as well.
\end{itemize}
\end{conjecture}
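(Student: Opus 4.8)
Because the conjecture invokes ``integrability in the sense of soliton theory,'' which is not a formal notion, any proof must first replace this hypothesis by a precise condition on the ZCRs of~\er{eveq_intr}; what follows is therefore a conditional plan. The working hypothesis I would adopt is: there exist $\ocs\in\zp$, a point $a\in\CE$, a Lie algebra $\mg$ (finite-dimensional, or infinite-dimensional in the sense of Section~\ref{szcrinf}), and a family of $\mg$-valued ZCRs of order~$\le\ocs$ depending analytically on a spectral parameter~$\la$ ranging over a connected one-dimensional set, such that the parameter is \emph{essential}: it cannot be removed by $\la$-dependent gauge transformations, rescalings, or reparametrizations, so that the associated monodromy genuinely varies with~$\la$. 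All the standard examples --- KdV, mKdV, Kaup--Kupershmidt, Sawada--Kotera, the higher KdV flows, the Krichever--Novikov equation --- satisfy this with $\mg=\msl_2$ or $\mg=\mathfrak{so}_3$; compare Examples~\ref{parkdv},~\ref{parkn}. Making this condition carry the weight of ``integrability'' is the decisive, and genuinely open, point, which is precisely why the statement is a conjecture rather than a theorem.

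Granting such a hypothesis, the next step is to pass to Lie-algebra language via Remarks~\ref{fdrzcr} and~\ref{rhfzsr}. Each value of~$\la$ gives a homomorphism $\mu_\la\colon\fds^\ocs(\CE,a)\to\mg$, and since the family depends analytically on~$\la$, the collection $\{\mu_\la\}$ assembles into a single homomorphism $\mu\colon\fds^\ocs(\CE,a)\to\mg\otimes_\fik R$, where $R$ is the algebra of functions on the $\la$-curve (polynomials in the rational case, the coordinate ring of the affine curve in the elliptic case). Let $\mathfrak{h}=\mu\big(\fds^\ocs(\CE,a)\big)\subset\mg\otimes_\fik R$. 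Essential dependence on~$\la$ should force $\mathfrak{h}$ not merely to be infinite-dimensional (the evaluations of $\mathfrak{h}$ at distinct points of the curve are not all proportional), but to surject onto a current-type subalgebra $\mathfrak{q}\otimes_\fik R'$ with $\mathfrak{q}$ a fixed non-nilpotent --- typically semisimple --- finite-dimensional Lie algebra and $R'$ an infinite-dimensional quotient of~$R$. Such a current algebra has no nonzero nilpotent ideals and remains infinite-dimensional modulo any ideal of finite codimension.

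The third step is a pull-back argument. Suppose, for contradiction, that for this $\ocs$ and~$a$ some nilpotent ideal $\mathfrak{I}\subset\fds^\ocs(\CE,a)$ has $\fds^\ocs(\CE,a)/\mathfrak{I}$ finite-dimensional. Then $\mu(\mathfrak{I})$ is a nilpotent ideal of $\mathfrak{h}$ of finite codimension; composing with $\mathfrak{h}\twoheadrightarrow\mathfrak{q}\otimes_\fik R'$ yields a nilpotent ideal of $\mathfrak{q}\otimes_\fik R'$ of finite codimension, which by the structure of current algebras over a non-nilpotent $\mathfrak{q}$ forces $\mathfrak{q}\otimes_\fik R'$, hence $R'$, to be finite-dimensional --- a contradiction. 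Thus $\fds^\ocs(\CE,a)/\mathfrak{I}$ is infinite-dimensional for every nilpotent ideal~$\mathfrak{I}$; taking $\mathfrak{I}=0$ gives in particular that $\fds^\ocs(\CE,a)$ is infinite-dimensional, so both bullet points of the conjecture hold for this $\ocs$ and~$a$. When $\mg$ must be taken infinite-dimensional one argues identically, replacing analytic ZCRs and Theorem~\ref{thnfzcr} by formal ZCRs and Theorem~\ref{tnffrzcr}.

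The principal obstacle is Step~1: there is no canonical formalization of ``integrable in the sense of soliton theory,'' so the argument is really a conditional reduction, and the delicate task is to pin down ``essential spectral parameter'' so that it both holds for every equation usually called integrable and is strong enough to produce the surjection onto a non-nilpotent current algebra used in Steps~2--3. A secondary, still substantial, difficulty is that this surjection is at present verified only case by case (it is visible in Examples~\ref{parkdv},~\ref{parkn} and Proposition~\ref{fdockn}); a uniform proof would require a general structural description of the algebras $\fds^\ocs(\CE,a)$ attached to parameter-dependent ZCRs, which is exactly the long-term aim of the program this paper belongs to.
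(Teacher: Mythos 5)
The statement you were asked to prove is a \emph{conjecture}: the paper offers no proof of it, only supporting evidence in the form of examples (KdV, the KdV hierarchy, Krichever--Novikov, and some multicomponent equations) where one exhibits a surjection from $\fds^{\ocs}(\CE,a)$ onto an infinite-dimensional current-type algebra --- $\msl_2(\fik[\la])$ or $\mR_{e_1,e_2,e_3}$ --- that has no nonzero nilpotent ideals, whence both bullet points follow. Your proposal correctly recognizes that the statement cannot be proved as stated because ``integrable in the sense of soliton theory'' is not a formal hypothesis, and your conditional reduction in Steps 2--3 is exactly the mechanism the paper uses case by case in its supporting examples; in that sense you have reconstructed the evidence for the conjecture rather than a proof of it, which is the honest and correct outcome here.

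Two technical remarks on your sketch, should you wish to tighten it. First, assembling the family $\{\mu_\la\}$ into a single homomorphism $\mu\colon\fds^{\ocs}(\CE,a)\to\mg\otimes_\fik R$ requires more than analytic dependence on $\la$: the coefficients of the normalized ZCR must be \emph{regular} functions on the parameter curve (polynomial in $\la$ in the rational case), which is an additional hypothesis you are smuggling into ``essential spectral parameter.'' Second, the claim that $\mathfrak{q}\otimes_\fik R'$ has no nonzero nilpotent ideals for $\mathfrak{q}$ semisimple needs $R'$ to be reduced: if $N\subset R'$ is a nonzero ideal with $N^k=0$, then $\mathfrak{q}\otimes N$ is a nonzero nilpotent ideal. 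For coordinate rings of affine curves this holds, but it should be stated.
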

This conjecture is supported by the following examples.
\begin{example}
According to~\cite{scal13} for the KdV equation, the Lie algebra 
$\fds^0(\CE,a)$ is isomorphic to the direct sum 
$\msl_2(\fik[\la])\oplus\fik^3$, 
where $\fik^3$ is a $3$-dimensional abelian Lie algebra.
Hence $\msl_2(\fik[\la])$ is embedded in $\fds^0(\CE,a)$ 
as a subalgebra of codimension~$3$.

Evidently, the infinite-dimensional Lie algebra $\msl_2(\fik[\la])$ 
does not have any nontrivial nilpotent ideals.
Hence, for any nilpotent ideal $\mathfrak{I}\subset\fds^{0}(\CE,a)$,
one has $\mathfrak{I}\cap\msl_2(\fik[\la])=0$ and, therefore, $\dim\mathfrak{I}\le 3$.
This implies that Conjecture~\ref{snilp-new} is valid for the KdV equation.
\end{example}

\begin{example}
Recall that the KdV hierarchy consists of commuting flows, 
which are scalar evolution equations of orders $2k+1$ for $k\in\zsp$.
The standard $\msl_2(\fik)$-valued ZCR for the KdV hierarchy
depends polynomially on a parameter $\la$.
Therefore, this ZCR can be viewed as a ZCR with values in $\msl_2(\fik[\la])$.

It can be shown that this gives a surjective homomorphism 
$\fds^0(\CE,a)\to\msl_2(\fik[\la])$ for each equation in the hierarchy.
Since the Lie algebra $\msl_2(\fik[\la])$ is infinite-dimensional 
and does not have any nontrivial nilpotent ideals,
this implies that $\dim\fds^{0}(\CE,a)=\infty$  
and $\dim\fds^{0}(\CE,a)/\mathfrak{I}=\infty$ 
for any nilpotent ideal $\mathfrak{I}\subset\fds^{0}(\CE,a)$, 
so Conjecture~\ref{snilp-new} holds true for each equation in the KdV hierarchy.
Using similar arguments, one can show that Conjecture~\ref{snilp-new} 
is valid also for many other hierarchies of integrable evolution equations 
possessing a ZCR with a parameter.
\end{example}

\begin{example}
According to Proposition~\ref{fdockn}, 
for the Krichever-Novikov equation $\kne(e_1,e_2,e_3)$  
in the case when $e_i\neq e_j$ for all $i\neq j$, the algebra $\fds^1(\CE,a)$ 
is isomorphic to the infinite-dimensional Lie algebra $\mR_{e_1,e_2,e_3}$.
Using the basis~\er{rbas} of this algebra, it is easy to show that $\mR_{e_1,e_2,e_3}$
does not have any nontrivial nilpotent ideals.
Therefore, $\fds^1(\CE,a)$ is infinite-dimensional 
and does not have any nontrivial nilpotent ideals, which 
implies that Conjecture~\ref{snilp-new} is valid in this case.

According to~\cite{svin-sok83},
if $e_1,e_2,e_3\in\Com$ are such that $e_i=e_j$ for some $i\neq j$, 
then the Krichever-Novikov equation $\kne(e_1,e_2,e_3)$ is connected 
by a Miura-type transformation with the KdV equation.
Using this fact and the fact that Conjecture~\ref{snilp-new} 
is valid for the KdV equation, one can show that 
Conjecture~\ref{snilp-new} holds true  
for the equation $\kne(e_1,e_2,e_3)$ when $e_i=e_j$ for some $i\neq j$.
\end{example}

\begin{example}
In this paper we study scalar evolution PDEs~\er{eveq_intr}.
As said in Remark~\ref{multevol}, 
it is possible to introduce an analog of $\fds^{\ocs}(\CE,a)$ 
for multicomponent evolution PDEs~\er{uitfix}.
Therefore, one can try to check Conjecture~\ref{snilp-new} 
for multicomponent evolution PDEs.
Computations in~\cite{zcrm17} show that 
Conjecture~\ref{snilp-new} holds true for 
the Landau-Lifshitz, nonlinear Schr\"odinger equations
(which can be regarded as $2$-component evolution PDEs) 
and for a number of other multicomponent PDEs.
\end{example}

Now return to the study of equation~\er{hheq} in the case 
when \er{albe0} holds and $\al\neq 0$.
According to Theorem~\ref{thhhfd}, 
for any $a\in\CE$ and any $\oc\in\zp$ there is a surjective 
homomorphism $\psi\cl\fds^\oc(\CE,a)\to\msl_2(\fik)$ with nilpotent kernel.
Let $\mathfrak{I}\subset\fds^\oc(\CE,a)$ be the kernel of $\psi$.
Then $\mathfrak{I}$ is a nilpotent ideal of $\fds^\oc(\CE,a)$, and we have
$$
\dim\fds^\oc(\CE,a)/\mathfrak{I}=\dim\msl_2(\fik)=3.
$$
Then Conjecture~\ref{snilp-new} implies that equation~\er{hheq} 
is not integrable in this case.

\subsection{A zero-curvature representation}
\lb{szcrsak}

Consider the KdV equation
\beq
\lb{kdvxt}
u_t=u_{xxx}+uu_x,\qquad\quad u=u(x,t),
\ee
and the equation
\beq
\lb{esak}
v_{\tilde{t}}=v^3v_{\tilde{x}\tilde{x}\tilde{x}}+3v^2v_{\tilde{x}}v_{\tilde{x}\tilde{x}}
-\tilde{x}^2v_{\tilde{x}}+3\tilde{x}v,\qquad\quad v=v(\tilde{x},\tilde{t}),
\ee
where subscripts denote derivatives.
We assume that $x$, $t$, $u$, $\tilde{x}$, $\tilde{t}$, $v$ take values in~$\fik$.

According to~\cite{sak93}, equation~\er{esak} is connected with KdV~\er{kdvxt} 
by the following Miura-type transformation
\beq
\lb{mtsak}
\tilde{t}=t,\qquad\quad
\tilde{x}=u_x,\qquad\quad
v=u_{xx}.
\ee
In~\cite{sak93} the variables $\tilde{x}$ and $\tilde{t}$ are denoted by $y$ and $s$.

Using the methods of \cite{ibr_tg85,ibr_sh80,svin-sok_cl82},
it is shown in~\cite{sak93} that equation~\er{esak} does not possess higher-order symmetries and conservation laws.
(As explained in Remark~\ref{rsymcl}, 
when we speak about symmetries and conservation laws, 
we mean the standard notions of local symmetries and conservation 
laws~\cite{mesh-sok13,mikh91,sand-wang2009}.)
We are going to present a ZCR for equation~\er{esak}.

Consider the infinite-dimensional Lie algebra 
$\msl_2(\fik[\la])\cong \msl_2(\fik)\otimes_{\fik}\fik[\lambda]$ and the map
\beq
\notag
\pd_{\la}\cl\msl_2(\fik[\la])\to\msl_2(\fik[\la]),\qquad
\pd_{\la}\big(q\otimes f\big)=
q\otimes\frac{\pd f}{\pd\la},\qquad\quad
q\in\msl_2(\fik),\qquad f\in\fik[\lambda].
\ee
We set $\fik\pd_{\la}=\big\{\,c\pd_{\la}\;\big|\;c\in\fik\,\big\}$.
That is, $\fik\pd_{\la}$ is the one-dimensional vector subspace spanned by the map~$\pd_{\la}$ 
in the vector space of all $\fik$-linear maps $\msl_2(\fik[\la])\to\msl_2(\fik[\la])$.

One has the following Lie algebra structure on the vector space $\msl_2(\fik[\la])\oplus\fik\pd_{\la}$
\beq
\notag
\big[m_1+c_1\pd_{\la},\,m_2+c_2\pd_{\la}\big]=
[m_1,m_2]+c_1\pd_{\la}(m_2)-c_2\pd_{\la}(m_1),\qquad
m_1,m_2\in\msl_2(\fik[\la]),\qquad c_1,c_2\in\fik.
\ee

Consider the following functions with values in $\msl_2(\fik[\la])\oplus\fik\pd_{\la}$
\begin{gather}
\lb{asak}
A(\tilde{x},v)=
\begin{pmatrix}
0 & -\frac{1}{v}\\
\frac{\la}{6v} & 0
\end{pmatrix}
+\frac{\tilde{x}}{v}\pd_{\la},\\
\lb{bsak}
\begin{split}
B(\tilde{x},v,v_{\tilde{x}},v_{\tilde{x}\tilde{x}})=
&\begin{pmatrix}
\frac{\tilde{x}}{6} & vv_{\tilde{x}\tilde{x}}+v_{\tilde{x}}^2+\frac{\tilde{x}^2}{v}+\frac23\la \\
 \frac16(v-\la vv_{\tilde{x}\tilde{x}}-\la v_{\tilde{x}}^2)-\frac{\la\tilde{x}^2}{6v}-\frac19\la^2
& -\frac{\tilde{x}}{6}
\end{pmatrix}+\\
&+\Big(vv_{\tilde{x}}-\tilde{x}(vv_{\tilde{x}\tilde{x}}+v_{\tilde{x}}^2)-\frac{\tilde{x}^3}{v}\Big)\pd_{\la}.
\end{split}
\end{gather}
It is straightforward to check that these functions satisfy the zero-curvature condition
\beq
\lb{dtxzcr}
D_{\tilde{x}}(B)-D_{\tilde{t}}(A)+[A,B]=0, 
\ee
where 
$D_{\tilde{x}}$, $D_{\tilde{t}}$ are the total derivative operators corresponding to equation~\er{esak}.
Therefore, the functions~\er{asak},~\er{bsak} form a ZCR for equation~\er{esak}.
This ZCR takes values in the infinite-dimensional Lie algebra $\msl_2(\fik[\la])\oplus\fik\pd_{\la}$.

This ZCR for equation~\er{esak} can be obtained from the standard ZCR of the KdV equation~\er{kdvxt}
by means of the  Miura-type transformation~\er{mtsak} and a linear change of variables.

\begin{remark}
\lb{rfdsak}
Let $\CE$ be the infinite prolongation of equation~\er{esak}.
According to Definition~\ref{dipr}, $\CE$ can be identified with the space $\fik^\infty$ with the coordinates
\beq
\notag
\tilde{x},\quad \tilde{t},\quad v,\quad v_{\tilde{x}},\quad
v_{\tilde{x}\tilde{x}},\quad v_{\tilde{x}\tilde{x}\tilde{x}},
\quad v_{\tilde{x}\tilde{x}\tilde{x}\tilde{x}},\quad\dots
\ee
Then $A(\tilde{x},v)$ and $B(\tilde{x},v,v_{\tilde{x}},v_{\tilde{x}\tilde{x}})$
given by~\er{asak},~\er{bsak} are rational functions on~$\CE$ 
with values in the Lie algebra $\msl_2(\fik[\la])\oplus\fik\pd_{\la}$.

Set $V=\msl_2(\fik[\la])\oplus\fik\pd_{\la}$. 
Consider the Lie algebra $\gl(V)$ which consists of $\fik$-linear maps $V\to V$.
We have the following injective homomorphism of Lie algebras
$$
\psi\cl\msl_2(\fik[\la])\oplus\fik\pd_{\la}\hookrightarrow\gl(V),\qquad
\psi(r)(s)=[r,s],\qquad r\in\msl_2(\fik[\la])\oplus\fik\pd_{\la},\qquad
s\in V=\msl_2(\fik[\la])\oplus\fik\pd_{\la},
$$
which is the adjoint representation of $\msl_2(\fik[\la])\oplus\fik\pd_{\la}$.
Hence $\msl_2(\fik[\la])\oplus\fik\pd_{\la}$ can be regarded as a Lie subalgebra of~$\gl(V)$.

Take a point $a\in\CE$ such that $v\neq 0$ at~$a$.
Then the functions $A(\tilde{x},v)$ and $B(\tilde{x},v,v_{\tilde{x}},v_{\tilde{x}\tilde{x}})$ 
are analytic on a neighborhood of $a\in\CE$.

Taking the Taylor series of these functions, 
we get power series with coefficients in $\msl_2(\fik[\la])\oplus\fik\pd_{\la}$.
Equation~\er{dtxzcr} implies that the Taylor series 
of $A(\tilde{x},v)$, $B(\tilde{x},v,v_{\tilde{x}},v_{\tilde{x}\tilde{x}})$
constitute a formal ZCR of order~$\le 0$ with coefficients in~$\msl_2(\fik[\la])\oplus\fik\pd_{\la}$.

Using Theorem~\ref{tnffrzcr} and Remark~\ref{rhfzsr} for the Lie algebra 
$\bl=\msl_2(\fik[\la])\oplus\fik\pd_{\la}\subset\gl(V)$, 
we obtain that this formal ZCR is gauge equivalent to an 
$a$-normal formal ZCR corresponding to a homomorphism 
$\hrf\cl\fds^0(\CE,a)\to\msl_2(\fik[\la])\oplus\fik\pd_{\la}$.
(The homomorphism $\hrf$ is uniquely determined by the ZCR 
$A(\tilde{x},v)$, $B(\tilde{x},v,v_{\tilde{x}},v_{\tilde{x}\tilde{x}})$.)

Using methods of~\cite{scal13,sbt18}, one can show that 
\beq
\lb{hfd0sak}
\text{$\hrf\big(\fds^0(\CE,a)\big)$ 
contains the subalgebra $\msl_2(\fik[\la])\subset\msl_2(\fik[\la])\oplus\fik\pd_{\la}$.}
\ee

Since the infinite-dimensional Lie algebra $\msl_2(\fik[\la])$ 
is of codimension~$1$ in~$\msl_2(\fik[\la])\oplus\fik\pd_{\la}$ and
does not have any nontrivial nilpotent ideals, property~\er{hfd0sak} yields the following.
For any nilpotent ideal $\mathfrak{I}\subset\fds^{0}(\CE,a)$, 
the quotient Lie algebra $\fds^{0}(\CE,a)/\mathfrak{I}$ is infinite-dimensional.
Therefore, Conjecture~\ref{snilp-new} is valid for equation~\er{esak}.
\end{remark}

\section*{Acknowledgements}

Gianni Manno (GM) and Sergei Igonin (SI) acknowledge 
that the present research has been partially supported by the 
following projects and grants: 
\begin{itemize}
\item PRIN project 2017 ``Real and Complex Manifolds: Topology, Geometry and holomorphic dynamics''
by the Ministry of Education, University and Research (MIUR), Italy;
\item ``Connessioni proiettive, equazioni di Monge-Amp\`ere e sistemi integrabili'' by Istituto 
Nazionale di Alta Matematica (INdAM), Italy;
\item ``MIUR grant Dipartimenti di Eccellenza 2018-2022 (E11G18000350001)'', Italy; 
\item ``Finanziamento alla ricerca 2017-2018 (53\_RBA17MANGIO)'' by Politecnico di Torino;
\item ``FIR-2013 Geometria delle equazioni differenziali'' by INdAM.
\end{itemize}

GM is a member of Gruppo Nazionale 
per le Strutture Algebriche, Geometriche e le loro Applicazioni (GNSAGA) of INdAM.
The work of SI was carried out within the framework of the 
State Programme of the Ministry of Education and Science of the Russian Federation, 
project number 1.13560.2019/13.1.

SI would like to thank A.~P.~Fordy, I.~S.~Krasilshchik, A.~V.~Mikhailov, 
and V.~V.~Sokolov for useful discussions.
SI is grateful to the Max Planck Institute for Mathematics (Bonn, Germany) 
for its hospitality and excellent working conditions 
during 06.2010--09.2010, when part of this research was done.

\end{document}